\PassOptionsToPackage{svgnames}{xcolor}

\documentclass[acmsmall,screen,nonacm]{acmart}

\title{Quantum Orchestras: a Concrete Semantics for Recursive Hybrid Programs}

\author{Alex Rice}
\orcid{0000-0002-2698-5122}
\affiliation{
  \institution{The University of Edinburgh}
  \country{United Kingdom}
}
\email{alex.rice@ed.ac.uk}

\author{Dominik Leichtle}
\orcid{0000-0002-4882-889X}
\affiliation{
  \institution{The University of Edinburgh}
  \country{United Kingdom}
}
\email{dominik.leichtle@ed.ac.uk}

\author{Kim Worrall}
\orcid{0009-0008-0666-9203}
\affiliation{
  \institution{The University of Edinburgh}
  \country{United Kingdom}
}
\email{kim.worrall@ed.ac.uk}

\author{Robert I. Booth}
\orcid{0000-0002-1146-3380}
\affiliation{
  \institution{University of Oxford}
  \country{United Kingdom}
}
\email{firstname.lastname@cs.ox.ac.uk}

\usepackage{quiver}
\tikzset{far/.style={outer sep=4pt}}
\usepackage{mathtools}
\usepackage{bbm}
\usepackage{bm}
\usepackage{enumitem}
\usepackage{braket}
\usepackage{mathpartir}
\usepackage{quantikz}
\usepackage{subcaption}
\let\truewire\wire

\tikzcdset{%
	row sep/normal=1.8em,
	column sep/normal=2.4em,
  nodes in empty cells/.style={},
  every matrix/.style={},
  every cell/.style={},
}
\let\wire\relax

\tikzcdset{%
	quantikz every matrix/.style={},
	quantikz every cell/.style={
        anchor=center,minimum size=0pt,inner sep=0pt,outer sep=0pt,thick,
        minimum height=\ifcsundef{cell@height@\the\pgfmatrixcurrentrow-\the\pgfmatrixcurrentcolumn}{0}{\csname cell@height@\the\pgfmatrixcurrentrow-\the\pgfmatrixcurrentcolumn\endcsname},
        minimum width=\ifcsundef{cell@width@\the\pgfmatrixcurrentrow-\the\pgfmatrixcurrentcolumn}{0}{\csname cell@width@\the\pgfmatrixcurrentrow-\the\pgfmatrixcurrentcolumn\endcsname},
        execute at end cell={\wire{a}
        \csgundef{cell@width@\the\pgfmatrixcurrentrow-\the\pgfmatrixcurrentcolumn}
      \csgundef{cell@height@\the\pgfmatrixcurrentrow-\the\pgfmatrixcurrentcolumn}
      \csgundef{wire@type@override@\the\pgfmatrixcurrentrow-\the\pgfmatrixcurrentcolumn}}
    },
  quantikz arrows/.style={dash,thick,line cap=round},
  quantikz 1-row diagram/.style={/tikz/baseline={([yshift=-\MathAxis]\tikzcdmatrixname-1-1.center)}},
  quantikz grade/.style={description,circle,draw,fill=white,inner sep=0pt, outer sep=0pt,font=\tiny}
}

\pretocmd{\quantikz}{%
  \let\wire\truewire
  \tikzcdset{
    row sep/normal=0.5cm,
    column sep/normal=0.5cm,
    every cell/.style={quantikz every cell},
    every matrix/.style={quantikz every matrix}
  }%
}{}{}

\apptocmd{\endquantikz}{%
  \tikzcdset{
    row sep/normal=1.8em,
    column sep/normal=2.4em,
    nodes in empty cells/.style={},
    every matrix/.style={},
    every cell/.style={}
  }%
  \let\wire\relax
}{}{}

\DeclareMathOperator{\id}{id}
\DeclareMathOperator{\Supp}{Supp}
\DeclareMathOperator{\ev}{ev}

\newcommand{\csaa}{{\mathscr{A}}}
\newcommand{\csab}{{\mathscr{B}}}
\newcommand{\csac}{{\mathscr{C}}}
\newcommand{\csad}{{\mathscr{D}}}
\newcommand{\csae}{{\mathscr{E}}}

\newcommand{\csak}{{\mathscr{K}}}

\newcommand{\ketbra}[1]{\ket{#1}\!\!\bra{#1}}
\newcommand{\orch}{\mathcal{Q}}
\newcommand{\valu}{\mathcal{V}}

\newcommand{\semctx}{\rho}

\newcommand{\sem}[1]{\llbracket #1 \rrbracket}
\renewcommand{\tt}{\textsf{True}}
\newcommand{\ff}{\textsf{False}}
\newcommand{\bool}{\textsf{bool}}
\newcommand{\fix}{\mathop{\textsf{fix}}}
\newcommand{\fst}{\mathop{\textsf{fst}}}
\newcommand{\snd}{\mathop{\textsf{snd}}}
\newcommand{\seq}{;}

\newcommand{\termif}[3]{\mathop{\textsf{if}}#1\mathop{\textsf{then}}#2\mathop{\textsf{else}}#3}
\newcommand{\letin}[2]{\mathop{\textsf{let}}#1=#2\mathop{\textsf{in}}}
\newcommand{\Had}{\textsc{H}}
\newcommand{\CNOT}{\textsc{CX}}
\newcommand{\SG}{\textsc{S}}
\newcommand{\T}{\textsc{T}}
\newcommand{\X}{\textsc{X}}
\newcommand{\meas}{\textsc{meas}}
\newcommand{\alloc}{\textsc{alloc}}

\renewcommand{\Set}{\mathbf{Set}}
\newcommand{\cstar}{C\textsuperscript{\(*\)}}
\newcommand{\wstar}{W\textsuperscript{\(*\)}}
\newcommand{\wstarcat}{\mathbf{W^{\bm{*}}}}
\newcommand{\dcpo}{\mathbf{DCPO}}
\newcommand{\measurable}{\mathbf{Meas}}

\newcommand{\kleisli}[1]{#1^{\sharp}}

\newcommand{\qid}{\textsf{qid}}
\newcommand{\qram}{\mathbf{Q}}
\newcommand{\unit}{\textsc{1}}

\newcommand{\return}{\mathop{\textsf{return}}}
\newcommand{\judgementvalue}{\vdash^v}
\newcommand{\judgementproducer}{\vdash^e}

\usepackage[svgnames]{xcolor}
\makeatletter
    \newcommand{\colorboxed}[3][white]{\fcolorbox{#2}{#1}{\m@th$\displaystyle#3$}}
\makeatother

\usepackage{ifdraft}
\ifdraft{
\paperwidth=\dimexpr \paperwidth + 6cm\relax
\oddsidemargin=\dimexpr\oddsidemargin + 3cm\relax
\evensidemargin=\dimexpr\evensidemargin + 3cm\relax
\marginparwidth=\dimexpr \marginparwidth + 3cm\relax
}{}

\usepackage[obeyDraft,colorinlistoftodos]{todonotes}
\todostyle{alex}{color=cyan}
\todostyle{dom}{color=orange!35!yellow}
\todostyle{kim}{color=green!35!yellow}
\todostyle{rob}{color=purple!60}
\todostyle{last}{color=pink!60}

\usepackage{amsthm}
\usepackage{hyperref}
\usepackage[capitalize]{cleveref}
\usepackage{thmtools}

\hypersetup{final}

\theoremstyle{plain}
\newtheorem{theorem}{Theorem}

\newtheorem{lemma}[theorem]{Lemma}
\newtheorem{proposition}[theorem]{Proposition}
\theoremstyle{definition}
\newtheorem{definition}[theorem]{Definition}
\newtheorem{example}[theorem]{Example}
\newtheorem{remark}[theorem]{Remark}

\begin{abstract}
  Many production quantum programming languages represent hybrid quantum computations by extending a classical base language with a quantum effect, where qubits are addressed by reference, and quantum operations are understood to mutate some external quantum state. However, the semantics of this view of quantum computation remains underdeveloped, especially when the language allows mid-circuit measurements and non-termination.

  In this work, we provide a general method for building denotational semantics for such languages, by defining the \emph{quantum orchestra} monad, which precisely captures this style of quantum effect. The monad has a concrete presentation, being based on the formalism of quantum instruments, a common tool in quantum information theory for capturing the action of a quantum process along with its classical outcomes. It acts on the category \(\dcpo\), and so enables the interpretation of divergent hybrid programs.

  The quantum orchestra monad serves as a natural extension of both the classical state monad and the probabilistic powerdomain monad. We investigate some of the subtleties present when trying to na\"ively extend these definitions to the quantum non-commutative case.
\end{abstract}

\begin{document}

\maketitle

\ifdraft{\tableofcontents}

\listoftodos{}

\section{Introduction}

The field of quantum computing is quickly progressing from small scale experiments in physics labs to increasingly capable industrial quantum computers. As these systems grow in scale, the software used to control and program them is becoming correspondingly more complex. This motivates the development of mathematical models that are both expressive and conceptually clear, supporting the design and analysis of modern quantum programming languages.

Traditionally, quantum programs have been expressed using the circuit model~\cite{deutschQuantumComputationalNetworks1989}: a static list of quantum gates is applied sequentially to an input quantum state, which is then measured at the end of the computation. While the semantics of circuits are well understood and their static nature makes it possible for them to be heavily optimised, programming directly in terms of circuits provides little support for programs that adapt their quantum behaviour in response to intermediate measurement outcomes.

More recently, there has been increased interest in hybrid quantum programs, which combine quantum operations with classical computation and control flow (see for example IBM introducing classical registers in OpenQASM 3~\cite{crossOpenQASM3Broader2022}, Quantinuum's HUGR~\cite{kochhugr}, QIRO~\cite{ittahQIROStaticSingle2022} which is used in the Pennylane toolkit~\cite{bergholmPennyLaneAutomaticDifferentiation2022}, and QIR~\cite{QIRSpec2021}). This hybrid functionality is essential for describing many aspects of quantum computing, including variational algorithms~\cite{cerezoVariationalQuantumAlgorithms2021}, measurement-based quantum computing~\cite{raussendorfOneWayQuantumComputer2001}, repeat-until-success circuits~\cite{adamRepeatUntilSuccessNondeterministicDecomposition2014}, error mitigation~\cite{wallmanNoiseTailoringScalable2016}, qubit teleportation~\cite{devulapalliQuantumRoutingTeleportation2024}, iterative phase estimation~\cite{granadeUsingRandomWalks2022}, circuit knitting~\cite{piveteausutter:knitting}, and quantum error correction (e.g. \cite{roffeQuantumErrorCorrection2019}).

At a high level, many such languages follow the quantum random-access machine (QRAM) model of quantum computation~\cite{knillConventionsQuantumPseudocode2022}: a classical host program dynamically controls a quantum processor, issuing quantum commands and observing the outcomes of measurements. This view is particularly natural for languages that extend a classical base language with operations for manipulating quantum state, such as Guppy~\cite{kochGUPPYPythonicQuantumClassical2025} and Silq~\cite{bichselSilqHighlevelQuantum2020}. In this setting, the classical program manipulates references to quantum resources whose underlying state evolves as the computation proceeds.

Despite its practical importance, the denotational semantics of this QRAM-style programming model remains comparatively underdeveloped. Unlike static circuits, hybrid quantum programs combine ordinary classical computation with effectful interaction with an evolving quantum state. Each interaction with the quantum processor may both update the quantum state and return classical information to the host program, which can then influence the rest of the computation. The challenge is to model this interaction compositionally while accommodating richer features of the classical host such as recursion.

A simple example is the reset operation on a single qubit, which maps every input state to \(\ket{0}\). Operationally, reset can be implemented by measuring the qubit and then conditionally applying an \(\X\) gate if the outcome was \(\ket{1}\):
\begin{align*}
  \textsc{reset} \coloneq &\letin b {\meas(q)} \ \{ \; \mathop{\mathsf{if}} b \mathop{\mathsf{then}} \X(q) \; \}
\end{align*}
This program exhibits measurement feedforward: a classical measurement outcome determines which quantum operation is performed next. Although the quantum state transformation induced by measurement can be described by a single channel acting on the quantum system, the program itself cannot be modelled by simply composing such channels, because the choice of the next channel depends on the classical outcome of the previous interaction.

The standard mathematical formalism for this kind of classical-quantum interaction is the \emph{quantum instrument}. Intuitively, an instrument assigns to each possible classical outcome a corresponding quantum state transformation. It therefore records both parts of an interaction with a quantum processor: the classical information returned to the host program, and the effect of that interaction on the quantum state.
\begin{figure}
  \centering
  \begin{subfigure}[b]{0.25\textwidth}
    \centering
    \begin{quantikz}[row sep=tiny, column sep=small]
      \setwiretype{n} &\gate[2][0.7cm]{\xi} & \wire[l][1]["X"{above,pos=0.5}]{c} \\
      & \setwiretype{n} \wire[l][1]["\mathcal{H}"{above,pos=0.5}]{q} & \wire[l][1]["\mathcal{H}"{above,pos=0.5}]{q}
    \end{quantikz}
    \caption{A quantum instrument\\\(\xi \in Q(X)\).\\}
    \label{fig:quant-inst}
  \end{subfigure}%
  \hfill
  \begin{subfigure}[b]{0.37\textwidth}
    \centering
    \begin{quantikz}[row sep={0cm}, column sep=small]
      \setwiretype{n} &\gate[3][0.7cm]{\xi} & \setwiretype{c} & \setwiretype{n} \wire[l][1]["X"{above,pos=0.26}]{c} \\
      \setwiretype{n} &&\gate[2][0.7cm]{\xi'} & \setwiretype{n} \wire[l][1]["Y"{above,pos=0.5}]{c} \\
      & \setwiretype{n} \wire[l][1]["\mathcal{H}"{above,pos=0.5}]{q} & \wire[l][1]["\mathcal{H}"{above,pos=0.5}]{q} & \wire[l][1]["\mathcal{H}"{above,pos=0.5}]{q}
    \end{quantikz}
    \caption{Na\"ive composition of \(\xi \in Q(X)\) and \(\xi' \in Q(Y)\) yielding \(Q(X \times Y)\).\\}
    \label{fig:inst-comp-naive}
  \end{subfigure}%
  \hfill
  \begin{subfigure}[b]{0.28\textwidth}
    \centering
    \begin{quantikz}[row sep=tiny, column sep=small]
      \setwiretype{n} & \gate[2][0.7cm]{\xi} & \gate[2][0.7cm]{f} \wire[l][1]["X"{above,pos=0.5}]{c} & \wire[l][1]["Y"{above,pos=0.5}]{c} \\
      & \setwiretype{n} \wire[l][1]["\mathcal{H}"{above,pos=0.5}]{q} & \wire[l][1]["\mathcal{H}"{above,pos=0.5}]{q} & \wire[l][1]["\mathcal{H}"{above,pos=0.5}]{q}
    \end{quantikz}
    \caption{Monadic composition of\\\(\xi \in Q(X)\) and \(f : X \to Q(Y)\) yielding \(Q(Y)\).}
    \label{fig:inst-comp-monad}
  \end{subfigure}
  \caption{Sequential composition of quantum instruments. A fixed composition of \(\xi \in Q(X)\) with \(\xi' \in Q(Y)\) cannot express measurement-dependent control. The relevant operation is Kleisli composition, composing \(\xi\) with a continuation \(f : X \to Q(Y)\).}
  \label{fig:quant-inst-compose}
  \Description[Different ways of composing quantum instruments.]%
  {The figure depicts different ways of composing quantum instruments.
  A quantum instrument is depicted as a box with one input quantum wire, one output quantum wire, and one output classical wire.
  One na\"ive way of composing two quantum instruments is by sequentially composing their action on quantum states while forming the product of their classical outputs.
  The monadic composition as considered in this work sequentially composes both the classical and the quantum wires.}
\end{figure}

This also explains why we require a monadic composition for instruments---see \cref{fig:quant-inst-compose}. If \(\xi \in Q(X)\) is an instrument with classical output \(X\), then the next step of the program need not be a fixed instrument \(\xi' \in Q(Y)\). Rather, it may depend on the observed outcome \(x \in X\), and is therefore described by a function \(f : X \to Q(Y)\). Sequential composition of hybrid quantum computations is thus Kleisli composition: first run \(\xi\), observe its classical output, and then continue with the instrument selected by that output. The reset example above is the simplest instance of this pattern: the result of measurement determines whether the subsequent operation is the identity or the \(\X\) gate.

We argue that quantum instruments are the fundamental primitive of QRAM-style quantum programming, and give rise to a natural computational effect: pure computations are classical, while effectful computations interact with the quantum processor through instruments. In fact, we take a further step: we consider not only programs which condition on the results of measurement, but also those with general recursion, whose termination is dependent on the results of measurement. We capture this effect by generalising quantum instruments to our \emph{quantum orchestra} monad over directed complete partial orders, obtaining domain-theoretic semantics of hybrid quantum--classical programs.

\noindent Concretely, the quantum orchestra monad supports:
\begin{itemize}
\item arbitrary unitary evolution of a quantum state;
\item mid-circuit measurement and measurement-dependent control flow;
\item general recursion, via a fixpoint combinator obtained by defining the monad over the category \(\dcpo\) of directed-complete partial orders;
\item qubit allocation, by presenting the monad as a parameterised strong monad~\cite{atkeyParameterisedNotionsComputation2009} over the category \(\wstarcat\) of von Neumann algebras and quantum channels.
\end{itemize}

To demonstrate the utility of our monad, we give denotational semantics for a prototype hybrid language.
Rather than tailoring the semantics to an existing language, we develop the language throughout the paper, incrementally adding common hybrid features, and demonstrating in turn how each of these features is modelled by the monad.
This presents the quantum orchestra monad as a flexible tool for giving semantics to a variety of hybrid languages, rather than a model for one specific language.

\paragraph{Remark}

\begin{figure}
  \centering
  \begin{subfigure}[b]{0.45\textwidth}
    \centering
    \begin{quantikz}[row sep=tiny,column sep=small]
      & \setwiretype{n} \wire[r][1]["X"{above,pos=0.5}]{c} & \gate[2][0.7cm]{k} & \setwiretype{n} \midstick[2, brackets=none]{\(\mapsto\)} & \gate[2][0.7cm]{\xi} \wire[r][1]["X"{above,pos=0.5}]{c} & \gate[2][0.7cm]{k} & \setwiretype{n} & \\
      & \setwiretype{n} \wire[r][1]["\mathcal{H}"{above,pos=0.5}]{q} && \wire[l][1]["\mathcal{H}"{above,pos=0.5}]{q}
      \wire[r][1]["\mathcal{H}"{above,pos=0.5}]{q} & \wire[r][1]["\mathcal{H}"{above,pos=0.5}]{q} && \wire[l][1]["\mathcal{H}"{above,pos=0.5}]{q}
    \end{quantikz}
    \vspace*{0.523cm}
    \caption{Quantum orchestras capture precomposition with a quantum instrument.}
    \label{fig:continuation-precomposition}
  \end{subfigure}%
  \hfill
  \begin{subfigure}[b]{0.5\textwidth}
    \centering
    \hspace*{-0.5cm}
	\begin{quantikz}[row sep=tiny,column sep=small]
      & \setwiretype{n} \wire[r][1]["X"{above,pos=0.5}]{c} & \gate[2][0.7cm]{k} & \setwiretype{n} \midstick[2, brackets=none]{\(\mapsto\)} &&[4pt+5pt] \wire[r][1]["X"{above,pos=0.5}]{c} & \gate[2][0.7cm]{k} \gategroup[2,steps=1,style={inner xsep=8.5pt, inner ysep=3.5pt}]{} \gategroup[2,steps=1,style={inner xsep=24pt+5pt, inner ysep=15pt}]{} & \setwiretype{n} &[4pt+5pt] \\
      & \setwiretype{n} \wire[r][1]["\mathcal{H}"{above,pos=0.5}]{q} && \wire[l][1]["\mathcal{H}"{above,pos=0.5}]{q}
      \wire[r][1]["\mathcal{H}"{above,pos=0.5}]{q} & \wire[r][1]["\xi"{above,pos=0.5,font=\normalsize,yshift=2pt},draw=none]{q} & \wire[r][1]["\mathcal{H}"{above,pos=0.5}]{q} && \wire[l][1]["\mathcal{H}"{above,pos=0.5}]{q} & \wire[l][1][dotted, dash phase=1pt]{q} & \wire[l][1]["\mathcal{H}"{above,pos=0.5}]{q}
    \end{quantikz}
    \caption{The orchestra can access all interfaces of \(k\), but channels slide along the \(\mathcal{H}\)-wire on the right.}
    \label{fig:continuation-big-box}
  \end{subfigure}%
  \caption{Two graphical depictions of the action of a quantum orchestra \(\xi\) on a continuation \(k\).}
  \label{fig:continuation}
  \Description[Two graphical depictions of the action of a quantum orchestra on a continuation.]%
  {Figure (a) shows a mapping from a continuation, which is a classically controlled quantum channel, to the sequential composition of a quantum instrument with the continuation where the classical and the quantum output wires of the instrument are connected to the classical and the quantum input wires of the continuation.
  Figure (b) depicts a mapping from a continuation to its composition with the quantum orchestra, where the quantum orchestra is a large box with a hole. The hole contains the continuation, whose classical and quantum inputs are connected to the interfaces of the quantum orchestra.}
\end{figure}

While quantum evolution is frequently thought of as a forward-progression of quantum states in the Schrödinger picture, it is oftentimes algebraically elegant to adopt the Heisenberg perspective in which quantum channels pull measurements backwards.
The design of the quantum orchestra monad heavily draws inspiration from this paradigm and represents computations in \emph{continuation passing style}, pulling continuations backwards through quantum instruments.
For finite classical outcomes, quantum orchestras act on continuations, as expected, through precomposition with a quantum instrument (\cref{fig:continuation-precomposition}).
Clearly, the treatment of unbounded loops however requires the generalisation to infinite classical outcomes.
In this case, the action of quantum orchestras becomes slightly more intricate (\cref{fig:continuation-big-box}), while retaining desirable features such as compositionality with subsequent computations.

\paragraph{Related work}

Existing approaches to the semantics of hybrid quantum computation differ in how classical control and quantum effects are integrated. Measurement has been treated as an algebraic effect over combined quantum-classical programming language~\cite{staton:measurementeffects}, added as a classical effect to a purely quantum programming language~\cite{kaarsgaardheunen:informationeffects}, and monadically without a denotational semantics~\cite{altenkirch:iomonad}. Quantum processes have been given categorical models, such as Selinger's CPM construction \cite{selinger:cpm}, which models mixed-state quantum evolution and measurement via completely positive maps in dagger compact categories, or Huot and Staton's result showing that the category of completely positive trace-preserving maps is a canonical completion of the category of isometries \cite{huotstaton:cptpcategoricalcompletion}. The quantum lambda calculus ~\cite{valiron:qlc} uses linear logic to model hybrid quantum computing. Monads have been used to describe circuit-generation in Quipper~\cite{green:quipper}, and extended to allow mid-circuit measurement~\cite{riosselinger:protoquipperm,sakayori:protoquipperc}. In contrast, we model quantum computation itself as a monadic effect over a purely classical language, using quantum instruments directly as the denotational structure. Closest to our work, Jia \emph{et al.} \cite{jia_semantics_2022} describe an extension of the probabilistic fixpoint calculus by quantum effects, using the theory of \emph{Kegelspitzen} \cite{keimel_mixed_2017} to link the classical part of the language modelled using a valuations monad, and the quantum part modelled with von Neumann algebras.

Similarly to the monad proposed in this paper, there exist other recent works \cite{booth2026composingquantuminstruments,fritz2026quantuminstrumentmonad} that construct monads based on quantum instruments.
While our monad acts on \(\dcpo\) and can thus be seen as a non-commutative quantum generalisation of the probabilistic powerdomain monad \cite{jones:probabilisticpowerdomain}, their monads act on \(\measurable\), the category of measurable spaces, and are therefore closer to generalisations of the classical Giry monad \cite{giry:monad}. The previously proposed monads could hence not directly serve to give semantics to recursive programs.
Interestingly, however, all of the constructions, including ours, coincide in the special case of finite classical outcomes.

\paragraph{Outline of this paper}
In \cref{sec:background}, we recall the necessary background from the literature regarding operator algebra, domain theory, and the study of quantum instruments.
\Cref{sec:toy-language} then introduces our toy language by adding quantum effects successively to a pure classical base language, while introducing quantum instruments as the natural tool to model its semantics.
At the core of our contributions, \cref{sec:continuous-orchestras} eventually presents and discusses the \emph{Quantum Orchestra Monad} on \(\dcpo\) as a way to model diverging quantum programs, with proofs of its well-definedness and of the monad laws given in \cref{sec:orch-monad}. We conclude with \cref{sec:discussion-and-future-work} which clarifies connections to other known constructions, discusses future extensions to the current work, and discusses some open problems.

\section{Background}%
\label{sec:background}

Operator algebras, and in particular von Neumann algebras, are a well-established and standard tool in the mathematical description of quantum information and its physical evolution. While it is largely equivalent to the usual Hilbert space formulation, its power, especially in infinite-dimensional settings, will become apparent later on. The strong order-structure of von Neumann algebras, as witnessed by \cref{thm:w_star_is_dcpo}, will be one of the main pillars of our results.
In the following, we give the most important definitions and known results about \wstar-algebras and their order-theoretic structure for our purposes.

\subsection{\texorpdfstring{\wstar-algebras}{W*-algebras}}

A \emph{\cstar-algebra} is a Banach \(*\)-algebra \(\csaa\) that satisfies the \cstar-identity: \[ \forall a \in \csaa.\quad \| a a^* \| = \| a \| \| a^* \| \]
A \emph{\wstar-algebra} (also known as von Neumann algebra) is a \cstar-algebra \(\csaa\) which admits a topological predual \(\csaa_*\). The weak-\textsuperscript{\(*\)} topology induced on \(\csaa\) by the functionals in \(\csaa_* \subseteq (\csaa_*)^{**} = \csaa^*\) is also called the ultraweak topology.
A \cstar-algebra is called \emph{unital} if it contains an identity; in this work, \emph{all} \cstar-algebras (including \wstar-algebras) are assumed to be unital.
\begin{example}
  Given any Hilbert space \(\mathcal{H}\), the \wstar-algebra \( \mathcal{B(H)} \) is the algebra of bounded linear operators on \(\mathcal{H}\).
  It plays a particularly important role as the algebra generated by all measurement effects on \(\mathcal{H}\).
  Many of the \wstar-algebras used in this work will be of this form.
\end{example}
The set of self-adjoint elements of \(\csaa\), \emph{i.e.}, all \(a \in \csaa\) that satisfy \(a = a^*\), forms a partially ordered vector space where \(a \geq 0\) if and only if \(a = b^* b\) for some \(b \in \csaa\) (this order is referred to as the \emph{L\"owner order}).
The set \(\csaa^+ = \{ a \in \csaa \mid a \geq 0 \}\) of positive elements of \(\csaa\) forms a convex cone.

\begin{definition}
	Let \(\csaa, \csab\) be \wstar-algebras, \(\Phi : \csab \to \csaa\) be a linear map, and \(\Phi^{(n)}\) its \(n\)-th matrix amplification. Then \(\Phi\) is called
	\begin{enumerate}
		\item \emph{normal} (n) if it is ultraweakly continuous;
		\item \emph{positive} (P) if \(\Phi(b) \geq 0\) for all \(b \in \csab^+\);
		\item \emph{completely positive} (CP) if \(\Phi^{(n)}\) is positive for all \(n \in \mathbb{N}\);
		\item \emph{unital} (U) if \(\Phi(1_\csab) = 1_\csaa\);
		\item \emph{subunital} (SU) if \(\Phi(1_\csab) \leq 1_\csaa\).
	\end{enumerate}
\end{definition}
\begin{proposition}[\cite{cho:wstarisdcpo}]
  The exists a symmetric monoidal category \(\mathbf{W^{\bm{*}}_{nCPSU}}\) where the objects are \wstar-algebras, the morphisms are normal completely positive subunital (nCPSU) maps, and the monoidal structure is given by the spatial von Neumann tensor product (see \cite[Definition~1.22.10]{sakai1971c} for a full definition).
  In the following we will also denote this category simply as \(\wstarcat\).
\end{proposition}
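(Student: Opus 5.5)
To establish the proposition, I will verify in turn that the data form a category, that the spatial tensor product is a bifunctor on it, and that the coherence isomorphisms of the monoidal structure live in the category and are natural. Most of this is inherited from the known symmetric monoidal structure of \wstar-algebras with normal unital \(*\)-homomorphisms and of the spatial tensor product. The new work is checking that nCPSU maps are closed under the relevant operations.

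\textbf{Category.} Identities \(\id_\csaa\) are clearly normal, completely positive and unital. For composition, let \(\Phi : \csab \to \csaa\) and \(\Psi : \csac \to \csab\) be nCPSU maps. The composite of ultraweakly continuous maps is ultraweakly continuous. Complete positivity follows because \((\Phi\Psi)^{(n)} = \Phi^{(n)}\Psi^{(n)}\). For subunitality, positivity of \(\Phi\) turns \(\Psi(1) \leq 1\) into \(\Phi\Psi(1) \leq \Phi(1) \leq 1\). Associativity and unit laws are inherited from composition of functions.

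\textbf{Tensor on objects and coherence maps.} On objects, \(\csaa \otimes \csab\) is the spatial von Neumann tensor product, that is, the ultraweak closure of the algebraic tensor product acting on \(\mathcal{H}_\csaa \otimes \mathcal{H}_\csab\) for faithful normal representations. The unit object is \(\mathbb{C}\). The associator, unitors and symmetry \(\csaa \otimes \csab \cong \csab \otimes \csaa\) are normal unital \(*\)-isomorphisms: they are spatially implemented by the corresponding unitaries between Hilbert space tensor products. Such maps are in particular nCPSU. The pentagon, triangle and hexagon identities hold because they hold on elementary tensors \(a \otimes b\), whose span is ultraweakly dense, and all maps involved are normal.

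\textbf{Tensor on morphisms (main obstacle).} For nCPSU maps \(\Phi : \csab \to \csaa\) and \(\Phi' : \csab' \to \csaa'\), I must construct \(\Phi \otimes \Phi'\) as a normal CP map \(\csab \otimes \csab' \to \csaa \otimes \csaa'\) satisfying \((\Phi\otimes\Phi')(b \otimes b') = \Phi(b) \otimes \Phi'(b')\). My plan is to factor \(\Phi \otimes \Phi' = (\Phi \otimes \id)(\id \otimes \Phi')\), which reduces the problem to showing that \(\Phi \otimes \id\) exists.

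For that I would use the normal Stinespring theorem. It gives \(\Phi(b) = V^* \pi(b) V\), with \(\pi\) a normal representation of \(\csab\) and \(\|V\|^2 = \|\Phi(1)\| \leq 1\). Then set
\[(\Phi \otimes \id)(x) = (V \otimes 1)^*(\pi \otimes \id)(x)(V \otimes 1).\]
This is normal and CP, since normal representations tensor to normal representations of the spatial product. It lands in \(\csaa \otimes \csab'\), because it maps elementary tensors there and is normal. The formula does not depend on the dilation chosen, because it is determined by its values on the dense span of elementary tensors.

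Subunitality is then immediate:
\[(\Phi\otimes\Phi')(1) = \Phi(1) \otimes \Phi'(1) \leq 1 \otimes \Phi'(1) \leq 1,\]
using positivity of \(a \mapsto a \otimes c\) for \(c \geq 0\).

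Functoriality of \(\otimes\) and naturality of the coherence maps again reduce to checks on elementary tensors together with normality and density. The delicate points, which I would handle by citing \cite{sakai1971c} and \cite{cho:wstarisdcpo}, are the normal Stinespring dilation and the fact that the spatial tensor of normal representations is normal.
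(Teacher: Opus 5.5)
Your argument is correct. The paper does not prove this proposition itself; it only attributes it to Cho. So you are not taking a different route from the paper's proof, since there is none. What you have written is a sound reconstruction of the standard argument that the citation stands for.

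You correctly identify the one point that goes beyond the known monoidal structure with normal \(*\)-homomorphisms: nCPSU maps must be closed under the spatial tensor product. Your treatment of that point is exactly what is needed. You build \(\Phi\otimes\id\) from a normal Stinespring dilation, and you get independence of the dilation and landing in \(\csaa\otimes\csab'\) from normality plus ultraweak density. Your subunitality estimate \(\Phi(1)\otimes\Phi'(1)\le 1\otimes\Phi'(1)\le 1\) is also right.

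The inputs you defer to the literature are the right ones to defer: normality of the Stinespring representation, and the extension of \(\pi\otimes\id\) to a normal \(*\)-homomorphism on the spatial tensor product. These are the same facts the cited sources rely on.
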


Whereas the reader might associate quantum channels in the Schrödinger picture with completely positive trace-preserving (CPTP) or completely positive trace-non-increasing (CPTNI) maps, it will be convenient to switch to the (dual) Heisenberg picture in which trace-non-increasingness is replaced by subunitality. In this sense, one might think of the category \(\wstarcat^{\text{op}}\) as the category of quantum channels.
\begin{example}
	Two particularly important channels are the identity and the zero channel.
	Given a \wstar-algebra \(\csaa\), the \emph{identity channel} on \(\csaa\) is given by the identity map on \(\csaa\), and denoted \( \id_{\csaa} \). It is normal, completely positive, and unital.
	For \wstar-algebras \(\csaa, \csab\), the \emph{zero channel} from \(\csaa\) to \(\csab\) is given by the nCPSU map \( 0_{\csab\csaa} : \csab \to \csaa ,\, b \mapsto 0 \).
\end{example}
\begin{example}[Isometric quantum channels]
	Let \(V : \mathcal{H} \to \mathcal{K}\) be an isometry. Then the channel
	\[
		\overline V : \mathcal{B(K)} \to \mathcal{B(H)}, \quad M \mapsto V^\dagger M V,
	\]
	is normal, completely positive, and unital.
	It is often convenient to obtain quantum channels in this way.
	Common examples include unitary channels (when \(V\) is a unitary), see \cref{sec:toy-cp}, and qubit allocation, as used in \cref{sec:toy-alloc}.
\end{example}

For a detailed introduction to \wstar-algebras beyond these basics, we refer to \cite{sakai1971c}.

\subsection{Directed complete partial orders}

We refer the reader to \cite{cho:wstarisdcpo} for a more detailed discussion of the order structure of \wstar-algebras.

\begin{definition}[Directed complete partial order]
	Let \(X\) be a set and \(\leq\) be a partial order on \(X\).
	\begin{enumerate}
		\item A subset \(Y \subseteq X\) is called \emph{directed} if for all \(x,y \in Y\) there exists \(z \in Y\) such that \(x,y \leq z\).
		\item \(X\) is called \emph{directed complete} if every directed subset has a supremum.
		\item \(X\) is called \emph{pointed} if it has a least element \(\bot\).
	\end{enumerate}
\end{definition}

In the following, directed complete partially ordered sets will be abbreviated as \emph{dcpo}.

\begin{definition}[Scott-continuity]
	Let \(X,Y\) be dcpos and \(f : X \to Y\). Then \(f\) is called \emph{Scott-continuous} if it preserves the suprema of all directed subsets. All Scott-continuous functions are \emph{monotone}, \emph{i.e.}, \(f(x) \leq f(y)\) whenever \(x \leq y\).
\end{definition}

\begin{definition}[Category of dcpos]
	There is a category \(\dcpo\) whose objects are dcpos and whose morphisms are Scott-continuous maps.
\end{definition}

\begin{theorem}[Kleene fixed-point theorem \cite{tarski1955lattice,stoltenberg1994mathematical}]%
\label{thm:kleene-fixed-point}
Let \(f : X \to X\) be a Scott-continuous function on a pointed dcpo \(X\). Then \(f\) has a least fixed point \(\fix f\) given by \[ \fix f = \sup\nolimits_n f^n (\bot). \]
Moreover, the function \(\fix : (X \to X) \to X\) is itself continuous.
\end{theorem}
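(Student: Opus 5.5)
The plan is to construct the least fixed point as the supremum of the Kleene chain and then check that it is least, and then treat continuity of \(\fix\) separately.

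\emph{Existence and leastness.} Define \(x_0 = \bot\) and \(x_{n+1} = f(x_n)\). First I show by induction that \((x_n)_n\) is an increasing chain. The base case \(x_0 = \bot \leq x_1\) holds because \(\bot\) is least. For the step, \(x_n \leq x_{n+1}\) gives \(x_{n+1} = f(x_n) \leq f(x_{n+1}) = x_{n+2}\), by monotonicity of Scott-continuous maps. A chain is directed, so \(\sup_n x_n\) exists in \(X\); call it \(\fix f\). Scott-continuity then gives
\[ f(\fix f) = \sup\nolimits_n f(x_n) = \sup\nolimits_n x_{n+1} = \fix f , \]
where the last equality holds because dropping the first element \(\bot\) of the chain does not change its supremum. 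For leastness, let \(y\) be any fixed point. By induction \(x_n \leq y\) for all \(n\): we have \(\bot \leq y\), and \(x_n \leq y\) implies \(x_{n+1} = f(x_n) \leq f(y) = y\). Hence \(\fix f \leq y\).

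\emph{Continuity of \(\fix\).} I equip \(X \to X\), the Scott-continuous maps, with the pointwise order. This is a dcpo: for a directed family \((f_i)_{i\in I}\), the pointwise supremum \(g(x) = \sup_i f_i(x)\) exists, and it is again Scott-continuous by interchanging the two directed suprema. I first show that \(\fix\) is monotone. If \(f \leq g\), then by induction \(f^n(\bot) \leq g^n(\bot)\), since
\[ f^{n+1}(\bot) = f(f^n(\bot)) \leq f(g^n(\bot)) \leq g(g^n(\bot)) , \]
using monotonicity of \(f\) and then \(f \leq g\). Taking suprema gives \(\fix f \leq \fix g\).

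Now take a directed family \((f_i)\) with pointwise supremum \(g\). Monotonicity gives \(\sup_i \fix f_i \leq \fix g\). For the reverse inequality, it suffices to show \(g^n(\bot) \leq \sup_i \fix f_i\) for every \(n\), which I do by induction on \(n\). The case \(n = 0\) is immediate. For the step,
\[ g^{n+1}(\bot) = \sup\nolimits_i f_i(g^n(\bot)) \leq \sup\nolimits_i f_i\bigl(\sup\nolimits_j \fix f_j\bigr) = \sup\nolimits_i \sup\nolimits_j f_i(\fix f_j) . \]
The inequality uses the induction hypothesis and monotonicity of each \(f_i\); the last equality uses Scott-continuity of \(f_i\), noting that \(\{\fix f_j\}_j\) is directed because \(\fix\) is monotone. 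By directedness, for each pair \(i,j\) there is \(k\) with \(f_i, f_j \leq f_k\). Then
\[ f_i(\fix f_j) \leq f_k(\fix f_k) = \fix f_k , \]
using monotonicity of \(\fix\) (so \(\fix f_j \leq \fix f_k\)), monotonicity of \(f_i\), and \(f_i \leq f_k\). So every term in the double supremum lies below \(\sup_k \fix f_k\), which closes the induction.

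I expect the main obstacle to be this continuity step, specifically the handling of the double supremum via directedness. Everything else is routine induction.
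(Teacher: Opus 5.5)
Your proof is correct. The paper does not prove this statement itself; it quotes it as background from the cited literature, so there is no in-paper argument to match, and what you give is a sound self-contained proof.

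\textbf{Existence and leastness.} This half is the standard argument: build the Kleene chain, use Scott-continuity to show its supremum is fixed, and compare against an arbitrary fixed point \(y\) by induction. That induction only uses \(f(y) \leq y\), so it actually shows \(\fix f\) is the least \emph{pre}-fixed point. Your continuity argument relies on exactly this, without saying so.

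\textbf{Continuity of \(\fix\).} A common textbook route differs from yours:
\begin{enumerate}
\item show that each iterate map \(f \mapsto f^n(\bot)\) is Scott-continuous on the function-space dcpo, by induction, using joint continuity of evaluation \((X \to X) \times X \to X\);
\item observe that these maps form an increasing chain whose pointwise supremum is \(\fix\);
\item conclude, since directed suprema of continuous maps are continuous.
\end{enumerate}
You instead work directly with a directed family \((f_i)\) and \(g = \sup\nolimits_i f_i\). Put \(s = \sup\nolimits_j \fix f_j\). Your double-supremum estimate is precisely the statement \(g(s) \leq s\), and your induction on \(n\) is the least-pre-fixed-point property written out inline.

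What each route buys: yours needs only monotonicity of \(\fix\), Scott-continuity of each \(f_i\), and directedness, so it avoids the joint-continuity lemma for evaluation. The iterate route is shorter and more modular once that function-space machinery is in place.
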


\Cref{thm:kleene-fixed-point} is an essential and well-established tool to give meaning to diverging programs, and one of the reasons why domain theory became an important field of study for denotational semantics.

For \wstar-algebras \(\csaa, \csab\), we define a partial order \(\leq\) on the maps in \(\wstarcat(\csab,\csaa)\) by letting \(\Phi \geq \Psi \) if and only if \(\Phi - \Psi\) is completely positive.

\begin{theorem}[\cite{cho:wstarisdcpo}]\label{thm:w_star_is_dcpo}
	Let \(\csaa, \csab\) be \wstar-algebras. Then, \(\wstarcat(\csab,\csaa)\) with the partial order \(\leq\) is a pointed dcpo with the zero map as its least element.
	In fact, the category \(\wstarcat\) is \(\dcpo\)-enriched.
\end{theorem}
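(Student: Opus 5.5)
The plan is to reduce everything to the order structure of the \wstar-algebra \(\csaa\) itself, via pointwise evaluation on positive elements.

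First I check that \(\leq\) is a partial order on \(\wstarcat(\csab,\csaa)\) with least element \(0_{\csab\csaa}\). Reflexivity and transitivity are immediate, since completely positive maps are closed under sums and \(0\) is completely positive. For antisymmetry: if \(\Phi-\Psi\) and \(\Psi-\Phi\) are both positive, then \(\Phi(b)=\Psi(b)\) for every \(b\in\csab^+\). Every element of a \cstar-algebra is a linear combination of four positive elements, so \(\Phi=\Psi\). The zero map is least because every nCPSU map is completely positive.

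The substantive step is directed completeness. Let \(D\subseteq\wstarcat(\csab,\csaa)\) be directed.

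1. For each \(b\in\csab^+\), the set \(\{\Phi(b)\mid\Phi\in D\}\) is directed in \(\csaa^+\). Subunitality bounds it by \(\|b\|\cdot 1\).
2. A bounded increasing net of self-adjoint elements in a \wstar-algebra has a supremum, and converges to it ultraweakly (Vigier's theorem). I define \(\Phi_\infty(b)\coloneq\sup_{\Phi\in D}\Phi(b)\).
3. Linearity of the ultraweak limit gives additivity and positive homogeneity of \(\Phi_\infty\) on \(\csab^+\). I then extend it linearly to all of \(\csab\); equivalently, \(\Phi_\infty\) is the pointwise ultraweak limit of the net \(D\).
4. Complete positivity: apply the same construction to the amplifications \(\Phi^{(n)}\). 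These form a directed family in \(M_n(\csaa)\), and their limits agree entrywise with \(\Phi_\infty^{(n)}\), because ultraweak convergence in \(M_n(\csaa)\) is entrywise.
5. Subunitality: \(\Phi_\infty(1)=\sup_{\Phi\in D}\Phi(1)\leq 1\).
6. Supremum: \(\Phi_\infty-\Phi\) is completely positive for each \(\Phi\in D\), because its amplifications are limits of positive elements. Conversely, any upper bound \(\Psi\) satisfies \(\Psi^{(n)}(b)\ge\Phi^{(n)}(b)\) for all \(\Phi\in D\), hence \(\Psi^{(n)}(b)\ge\Phi_\infty^{(n)}(b)\).

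The main obstacle is showing that \(\Phi_\infty\) is \emph{normal}. A pointwise limit of normal maps need not be normal in general, so this needs an argument. I would use the characterisation that a positive map is normal if and only if it preserves suprema of bounded directed nets of positive elements. For a directed net \((b_j)\) in \(\csab^+\),
\[
  \Phi_\infty(\sup_j b_j)=\sup_{\Phi\in D}\sup_j\Phi(b_j)=\sup_j\sup_{\Phi\in D}\Phi(b_j)=\sup_j\Phi_\infty(b_j).
\]
The first equality uses normality of each \(\Phi\in D\); the second is the interchange of suprema of a doubly monotone family, valid in the dcpo \(\csaa^+\). This is the heart of Cho's argument in \cite{cho:wstarisdcpo}, and it is why the order relation must be pinned down carefully.

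Finally, \(\dcpo\)-enrichment requires that composition \((\Phi,\Psi)\mapsto\Psi\circ\Phi\) be Scott-continuous in each argument, and that \(\otimes\) be Scott-continuous as well.

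1. Monotonicity is clear, since composites and tensors of completely positive maps are completely positive.
2. Continuity in the outer argument \(\Psi\): \(\sup_i\Psi_i\) is computed pointwise, and \(\Phi(b)\) is fixed.
3. Continuity in the inner argument \(\Phi\): apply normality of the fixed \(\Psi\) to the directed net \(\Phi_i(b)\).
4. For the tensor product, I would check the sup on elementary tensors and then extend by normality and ultraweak density of the algebraic tensor product.

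This last extension step is somewhat delicate, and I would defer to \cite{cho:wstarisdcpo} for the details.
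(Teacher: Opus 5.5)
Your argument is correct and follows the standard proof in the cited source \cite{cho:wstarisdcpo}; the paper itself gives no proof beyond the citation. The shared steps are: directed suprema are computed pointwise on positive elements via monotone convergence in \wstar-algebras; complete positivity and the least-upper-bound property are checked on amplifications; normality comes from interchanging suprema; and composition is continuous in the inner argument by normality of the outer map. Two small remarks: \(\csaa^+\) is not itself a dcpo, since only its bounded parts are, though this suffices because your doubly indexed family is bounded by \(\|\sup_j b_j\| \cdot 1\); and Scott-continuity of \(\otimes\) is not needed for the statement as given, which concerns only enrichment of composition.
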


\subsection{Quantum instruments}
\label{sec:background-inst}

Measurements differ from ordinary quantum evolutions in that they produce both a quantum output and a classical outcome. In the Heisenberg picture, this means that, rather than a single normal completely positive subunital map describing the transformation of observables, one has a family of such maps indexed by the possible outcomes. Each map describes the transformation of observables conditional on a particular outcome, while their sum recovers the unconditional evolution obtained by disregarding the measurement result. This leads to the notion of a quantum instrument.
\begin{definition}[Finite quantum instrument]
	Let \(X\) be a set and \(\csaa, \csab\) be \wstar-algebras. Then, a (finite) quantum instrument from \(\csaa\) to \(\csab\) is a collection of normal completely positive subunital maps \( \{ \Psi_x : \csab \to \csaa \}_{x \in X} \) such that only finitely many maps \( \Psi_x \) are non-zero and such that \( \sum_{x\in X} \Psi_x \) is subunital.
\end{definition}

We interpret the set \(X\) as the set of possible measurement outcomes. In the concrete case where \(\csaa = \mathcal{B}(\mathcal{H})\) and \(\csab = \mathcal{B}(\mathcal{H}')\), each map \(\Psi_x : \mathcal{B}(\mathcal{H}') \to \mathcal{B}(\mathcal{H})\) pulls back observables on the output system \(\mathcal{H}'\) to the input system \(\mathcal{H}\). Given an observable \(A \in \mathcal{B}(\mathcal{H}')\) and quantum state \(\rho \in \mathcal{B}(\mathcal{H})_*\), \(\rho(\Psi_x(A))\) describes the conditional contribution of outcome \(x\) to the unconditional expectation value \(\sum_{x \in X} \rho(\Psi_x(A))\) of \(A\) when the measurement result is discarded. In particular, \(\Psi_x(1)\) is the effect associated to outcome \(x\), since \(\rho(\Psi_x(1))\) is the probability of observing \(x\).

\begin{definition}[Dirac quantum instrument]
  For any \(x \in X\) and \(\Psi \in \wstarcat(\csab, \csaa)\), we define the \emph{Dirac quantum instrument} \( \delta(x, \Psi) \) as
  \[\delta(x, \Psi)_{x'} =
    \begin{cases}
      \Psi, & \text{if } x = x', \\
      0,    & \text{otherwise}.
    \end{cases}
  \]
\end{definition}

We record the following straightforward fact about finite quantum instruments.

\begin{lemma}
	All finite quantum instruments are finite sums of Dirac instruments.
\end{lemma}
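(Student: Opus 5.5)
The plan is to write the instrument explicitly as a sum over its support. Let \(\{\Psi_x\}_{x \in X}\) be a finite quantum instrument from \(\csaa\) to \(\csab\). By definition only finitely many \(\Psi_x\) are non-zero. I will therefore take \(S = \{ x \in X \mid \Psi_x \neq 0 \}\), which is a finite set, and aim to show
\[
  \{\Psi_x\}_{x\in X} \;=\; \sum_{y \in S} \delta(y, \Psi_y),
\]
where the sum of instruments is taken outcome-wise: \(\bigl(\sum_i \xi^i\bigr)_x = \sum_i \xi^i_x\). When \(S\) is empty, this reads as the empty sum, namely the zero instrument. Alternatively, it can be written as \(\delta(x_0, 0)\) for any \(x_0 \in X\), assuming \(X\) is non-empty.

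The identity is checked pointwise in the outcome \(x'\). By the definition of the Dirac quantum instrument, \(\delta(y,\Psi_y)_{x'}\) equals \(\Psi_y\) when \(y = x'\) and \(0\) otherwise. Hence the sum at \(x'\) collapses to \(\Psi_{x'}\) if \(x' \in S\). If \(x' \notin S\), every summand vanishes and the sum is \(0 = \Psi_{x'}\).

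It remains to confirm that each summand \(\delta(y,\Psi_y)\) is itself a legitimate finite quantum instrument. Each component is either \(\Psi_y\) or the zero channel, both of which are nCPSU. At most one component is non-zero. The total sum of components is \(\Psi_y\), which is bounded above in the Löwner order by \(\sum_{x} \Psi_x\), since the omitted terms are completely positive. The latter is subunital by assumption, so \(\Psi_y(1) \le 1\).

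No step is a genuine obstacle; the only point needing care is bookkeeping. One must fix that sums of instruments are outcome-wise, and must decide how to treat the empty sum, for which a zero instrument is the natural convention.
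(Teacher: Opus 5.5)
Your proof is correct. The paper states this lemma as a ``straightforward fact'' with no proof, and your decomposition \(\sum_{y \in S}\delta(y,\Psi_y)\) over the finite support, checked outcome by outcome, is exactly the evident argument it relies on. The L\"owner-order bound is harmless but unnecessary, since each \(\Psi_y\) is already subunital by the definition of a finite quantum instrument.
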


Going beyond the finite-outcome case, quantum instruments with continuous classical outcomes have been formally defined and extensively studied in a measure-theoretic context \cite{davies:instruments,booth2026composingquantuminstruments,fritz2026quantuminstrumentmonad}.
In the context of domain theory however, while a form of quantum instrument has been considered in the finite case \cite{cho:wstarisdcpo}, this work is the first to give a description in the case of infinite classical outcomes, to the best of the authors' knowledge.

\subsection{Monads}%
\label{sec:background-monads}

Monads are a central tool for understanding \emph{effectful} computation, functions which have side effects capturing behaviour other than returning a value. They have been used both as a semantic tool and internally to programming languages to embed effects in pure languages. We recall their definition.

\begin{definition}[Monad]
  A monad is a triple \((T, \eta, \mu)\), where \(T\) is an endofunctor on a category \(\mathcal{C}\), the unit \(\eta\) is a natural transformation from the identity functor on \(\mathcal{C}\) to \(T\), and the multiplication \(\mu\) is a natural transformation from \(T^2\) to \(T\) subject to the usual laws.
\end{definition}

Instead of giving a multiplication operation, monads can be defined in terms of \emph{Kleisli extension}, which maps a morphism \(f \in \mathcal{C}(X, T(Y))\) to a function \(\kleisli f : \mathcal{C}(T(X), T(Y))\). Given the extension, the multiplication is given by \(\mu_X = \kleisli{\id_X}\), and given the multiplication, the extension can be recovered by \(\kleisli f(x) = \mu_X(T(f)(x))\). When \(\mathcal{C}\) is monoidal, monads can also be equipped with a strength \(\tau\), a natural transformation such that for \(X, Y \in \mathcal{C}\), \( \tau_{X, Y} : X \otimes T(Y) \to T(X \otimes Y)\). Such monads are called \emph{strong}.

Strong monads capture many common computational effects; throughout the paper we will refer to the writer monad \(W(X) = X \times M\) for some fixed monoid \(M\), the state monad \(S(X) = S \to X \times S\) for a fixed state set \(S\), the continuation passing style (CPS) monad \(C(X) = (X \to \mathsf{Ans}) \to \mathsf{Ans}\) for a fixed set \(\mathsf{Ans}\), and the probabilistic powerdomain monad (see \cite{jones:probabilisticpowerdomain}). For further discussion on monads we refer the reader to \citet{perroneStartingCategoryTheory2024}.

\section{Toy Language}%
\label{sec:toy-language}

Throughout the following section, we will develop a toy language for hybrid quantum-classical computation (introduced in \cref{sec:base-class-lang}), slowly adding various quantum features as effects. In parallel, we describe a denotational semantics for each language, which we slowly adapt to incorporate each feature, motivating the final definition of the quantum orchestra monad.

In \cref{sec:toy-cp}, we add the ability to trigger the execution of unitary quantum gates on a fixed quantum space, creating a circuit description language whose terms can be understood as classical functions which print out a circuit. In this language, qubits are references by a fixed finite set of constants, which have predefined meaning in the quantum space we are manipulating.

In \cref{sec:toy-inst}, we add measurement. As the program can utilise the (classical) outcomes of measurement, it no longer produces static circuits for each input, instead exhibiting non-trivial hybrid interactions. In this section we will show how each program can be modelled by a quantum instrument, crucially describing how these instruments can be composed.

Next, we add the ability to allocate fresh qubits in \cref{sec:toy-alloc}. This allows the size of our quantum state to vary throughout the computation, and requires the addition of non-constant qubit references. We semantically model this by moving to a parameterised monad in the style of \citet{atkeyParameterisedNotionsComputation2009}, where computations are parameterised by their input and output quantum spaces.

Lastly, in \cref{sec:continuous-orchestras} we move our model from sets to \emph{directed-complete partial orders}, allowing the addition of fixpoints combinators to our toy language. In this section we finally give a full definition of our quantum orchestra monad, and the associated operations that can be performed on it.

\subsection{A Pure Classical Base Language}
\label{sec:base-class-lang}

All variants of our toy language are built on the same classical foundation. Our presentation closely follows \citet{matache:concurrency} (following \cite{moggi:monads,levy:fgcbpv}), who use a similar language to study a computational effect. In particular we have types given by the following syntax:
\[ X, Y ~::=~ \unit \mid X \times Y \mid \bool \mid X \rightarrow Y \]
Although we foresee no difficulties with their inclusion, we omit more general sum types, including just a boolean type which is necessary to describe the classical outcomes of measurement.

The language is a form of fine-grain call-by-value lambda calculus~\cite{levy:fgcbpv}, and as such we split its terms into \emph{values}, terms which do not compute and can be bound to variables, and \emph{expressions}, terms representing computations which may trigger effects. Values and expression are respectively given by the following grammars.
\begin{align*}
  v ~&::=~ x \mid () \mid (v_1,v_2) \mid \fst v \mid \snd v \mid \tt \mid \ff \mid \lambda x. e\\
  e ~&::=~ \return v \mid \letin{x}{e_1}e_2 \mid e_1\seq e_2 \mid v_1(v_2) \mid \termif{v}{e_1}{e_2}
\end{align*}
As with the types, we have not included all possible constructions, but just enough to exhibit non-trivial behaviour and interaction with our quantum effects. For example, we do not include a case matching/pattern matching construction for products, instead relying on projections. We do however include a ``let'' binding, even though it can be emulated with \(\lambda\)-abstraction and application, in order to increase the legibility of our example programs. We also include \(e_1\mathop{;}e_2\), the sequential composition of expressions, as a simplification of the ``let'' binding for the case where \(e_2\) does not require the value produced by \(e_1\).

Our typing rules are split into two judgements: \(\Gamma \judgementvalue v : A\) for values and \(\Gamma \judgementproducer e : A\) for expressions. While they are standard, we include them in \cref{fig:toy-typing}, as we build on them throughout the following sections.

\begin{figure}
  \centering
  \begin{mathpar}
    \inferrule{x : X \in \Gamma}{\Gamma \judgementvalue x : X}\and
    \inferrule{ }{\Gamma \judgementvalue () : \unit}\and
    \inferrule{\Gamma \judgementvalue v_1 : X \and \Gamma \judgementvalue v_2 : Y}{\Gamma \judgementvalue (v_1, v_2) : X \times Y}\and
    \inferrule{\Gamma \judgementvalue v : X \times Y}{\Gamma \judgementvalue \fst v : X}\and
    \inferrule{\Gamma \judgementvalue v : X \times Y}{\Gamma \judgementvalue \snd v : Y}\and
    \inferrule{ }{\Gamma \judgementvalue \tt : \bool}\and
    \inferrule{ }{\Gamma \judgementvalue \ff : \bool}\and
    \inferrule{\Gamma, x : X \judgementproducer e : Y}{\Gamma \judgementvalue \lambda x. e : X \to Y}\and
    \inferrule{\Gamma \judgementvalue v : X}{\Gamma \judgementproducer \return v : X}\and
    \inferrule{\Gamma \judgementproducer e_1 : X \and \Gamma, x : X \judgementproducer e_2 : Y}{\Gamma \judgementproducer \letin{x}{e_1}e_2 : Y}\and
    \inferrule{\Gamma \judgementproducer e_1 : X \and \Gamma \judgementproducer e_2 : Y}{\Gamma \judgementproducer e_1 \seq e_2 : Y}\and
    \inferrule{\Gamma \judgementvalue v_1 : X \to Y \and \Gamma \judgementvalue v_2 : X}{\Gamma \judgementproducer v_1(v_2) : Y}\and
    \inferrule{\Gamma \judgementvalue v : \bool \and \Gamma \judgementproducer e_1 : X \and \Gamma \judgementproducer e_2 : X}{\Gamma \judgementproducer \termif{v}{e_1}{e_2} : X}
  \end{mathpar}
  \caption[Toy language typing]{Typing rules for the classical fragment of the toy language}
  \label{fig:toy-typing}
  \Description[Typing rules for the classical fragment of the toy language.]%
  {A total of 13 typing rules for the classical fragment of the toy language.}
\end{figure}

We give semantics to this language in a standard way. First, a semantics \(\sem X\) is given to each type \(X\), which denotes the possible closed values of that type. The semantics of an (open) value \(\Gamma \judgementvalue v : X\) will then simply be a function \(\sem v : \sem \Gamma \to \sem X\), where:
\[ \sem {x_1 : X_1, \dots, x_n : X_n} = \sem {X_1} \times \dots \times \sem {X_n}\]
and we let \(\rho_x \in \sem X\) be the component of \(x\) for \(x : X \in \Gamma\) and \(\rho \in \sem \Gamma\).

We could give semantics to expressions in the same way, but this would not extend to the effectful computations of the following sections. We instead suppose that we have an effect given by some strong monad \(T\) on a cartesian closed category \(\mathcal{C}\), and then the semantics of an expression \(\Gamma \judgementproducer e : X\) will be given by a function:
\[ \sem e : \sem \Gamma \to T \sem X\]
Essentially, an expression of type \(X\), produces a value of type \(X\) and effects governed by \(T\). Of course, one could instantiate \(T\) as the identity monad to obtain a semantics for the pure language above. Equipped with this, we can now give semantics to each type:
\[ \sem \unit = \{()\} \quad \sem {X \times Y} = \sem X \times \sem Y \quad \sem \bool = \{\tt, \ff\}\]
A (closed) value of type \(X \to Y\) then represents a function that takes a \emph{value} of type \(X\), and returns an \emph{expression} of type \(Y\), and so should have semantics:
\[ \sem {X \to Y} = \sem X \to T\sem Y\]
The semantics of values in context \(\Gamma\) is then standard:
\begin{align*}
  \sem x(\semctx) &= \semctx_x & \sem {(v_1, v_2)}(\semctx) &= (\sem {v_1}(\semctx), \sem {v_2}(\semctx))\\
  \sem {()}(\semctx) &= () & \sem {\fst v}(\semctx) &= \pi_1(\sem v(\semctx))\\
  \sem \tt(\semctx) &= \tt & \sem {\snd v}(\semctx) &= \pi_2(\sem v(\semctx))\\
  \sem \ff(\semctx) &= \ff & \sem {\lambda x. e}(\semctx)(\semctx_x) &= \sem e(\semctx, \semctx_x)
\end{align*}
where \(\pi_1\) and \(\pi_2\) are the projections from the product. The semantics of each expression is also standard and can be given in terms of the monad operations (where \(\eta\) is the unit, \(\tau\) is the strength, and extension of \(f\) is given by \(\kleisli{f}\)).
\begin{align*}
  \sem {\return v} &= \sem \Gamma \xrightarrow{\sem v} \sem X \xrightarrow{\eta_X} T \sem X\\
  \sem {\letin {x}{e_1}e_2} &= \sem \Gamma \xrightarrow{\langle \id, \sem {e_1} \rangle} \sem \Gamma \times T \sem X \xrightarrow{\tau_{\sem \Gamma, \sem X}} T(\sem \Gamma \times \sem X) \xrightarrow{\kleisli{\sem {e_2}}} T \sem Y\\
  \sem {e_1 \seq e_2} &=
                        \begin{aligned}[t]
                          \sem \Gamma \xrightarrow{\langle \id, \sem {e_1} \rangle} \sem \Gamma \times T \sem X \xrightarrow{\tau_{\sem \Gamma, \sem X}}{T(\sem \Gamma \times \sem X)} \xrightarrow{T\pi_1}{} &T \sem \Gamma\\
                          \xrightarrow{\kleisli{\sem {e_2}}}{} &T \sem Y
                        \end{aligned}
  \\
  \sem {v_1(v_2)} &= \sem \Gamma \xrightarrow{\langle \sem {v_1}, \sem {v_2} \rangle} (\sem X \to T \sem Y) \times \sem X \to T \sem Y\\
  \sem {\termif{v}{e_1}{e_2}} &= \sem \Gamma \xrightarrow{\langle \sem v, \id \rangle} \sem \bool \times \sem \Gamma \simeq \sem \Gamma + \sem \Gamma \xrightarrow{[\sem {e_1}, \sem {e_2}]} T \sem X
\end{align*}
where \(\langle f, g \rangle\) is the universal map into the product, and \([f, g]\) is the universal map out of the sum.

\subsection{A Classical Language with a Quantum Effect}
\label{sec:toy-cp}
We begin our investigation of quantum effects by augmenting our classical language with a side-effect for performing unitary quantum operations, allowing it to describe (parameterised) circuits. As is common for describing pure quantum computations, we reduce the problem of describing all possible unitaries we could want to describing a fixed \emph{gate set}. To describe which qubits each gate is applied to, we add a type \(\qid\) of \emph{qubit references}:
\[ X, Y ~::=~ \colorboxed[White]{DodgerBlue}{\qid} \mid \unit \mid X \times Y \mid \bool \mid X \rightarrow Y \]
Our quantum gates are then simply functions which take some number of qubit references as input (corresponding to the number of qubits the gate acts on) and return unit. To introduce terms of type \(\qid\), we add a set of constants \(l \in \qram\) to the language, which index a fixed quantum state. These constants have the following typing rules:
\begin{mathpar}
  \inferrule{q \in \qram}{\Gamma \judgementvalue q : \qid}\and
  \inferrule{\textsc{G} \in \{\Had, \SG, \X, \T\} }{\Gamma \judgementvalue \textsc{G} : \qid \to 1}\and
  \inferrule{ }{\Gamma \judgementvalue \CNOT : \qid \times \qid \to 1}\and
\end{mathpar}

We emphasise that gates do not compute without being applied to their qubit arguments, so all appear as values. When they are applied to a sequence of qubit references, these operations are understood to mutate the quantum state referenced by those variables as a side effect. In this presentation we have chosen to use the universal Clifford+T gate set, but this could be substituted by any reasonable alternative, including those parameterised by booleans (classically conditioned gates) or angles (e.g. \textsc{RZ} gates), though doing so would require the addition of angle types to the language, which we avoid for simplicity.

\begin{example}[Swap gate]
  Consider the following circuit implementing the swap gate:
  \begin{center}
    \begin{quantikz}
      & \ctrl{1} & \targ{} & \ctrl{1} & \\
       & \targ{} & \ctrl{-1} & \targ{} &
    \end{quantikz}
  \end{center}
  This can be naturally implemented as a function in this toy language:
  \[ \judgementvalue \lambda q. \CNOT((\fst q, \snd q)) \seq \CNOT((\snd q, \fst q)) \seq \CNOT((\fst q, \snd q)) : \qid \times \qid \to \unit\]
  Such a function could be applied in the same way as a primitive gate.
\end{example}

\subsubsection{A Unitary Effect}
\label{sec:unitary-effect}

To extend the semantics of \cref{sec:base-class-lang} to accommodate the new features of this section, we must instantiate the monad \(T\) describing the effect, and give an interpretation of each new type and term constructor. At this stage, all our effect can do is generate some unitary evolution as the program progresses. We will assume that that these unitaries act on the space \(\mathcal{H}\) consisting of exactly the qubits present in \(\qram\):
\[ \mathcal{H} = \bigotimes_{\qram} \mathbb{C}^2 \]
For a choice of qubit \(q \in \qram\), each single qubit quantum operation \(U \in \{\Had, \X, \SG, \T\}\) then induces a unitary \(U_q\) which applies the gate to the given qubit. Pre-empting the following sections, we instead associate to the quantum operation the map:
\[ \overline {U_q}(M) = U_q^\dagger M U_q \]
which is a complete positive unital map from \(\mathcal{B(H)}\) to itself. The interpretation of \(\CNOT\) raises an interesting question. If the two qubit locations passed to the operations are distinct, then we can proceed as before, letting \(\overline{\CNOT_{q_1, q_2}}\) be the appropriate map on \(\mathcal{B(H)}\), but what should we when the qubit references are the same, that is what do we assign \(\overline{\CNOT_{q, q}}\)? One option is to enforce that this can never occur at the type system level (as done in Guppy~\cite{kochGUPPYPythonicQuantumClassical2025}). In this paper, we can instead ``fail'' at runtime by letting:
\[ \overline{\CNOT_{q, q}} = 0_{\mathcal{B(H)}} \]
Such a map is not unital, yet is still subunital, and so we are able to assign a subunital operation to every quantum operation for every choice of arguments.

An expression \(\Gamma \judgementproducer e : X\) produces some value of type \(X\), and also applies some evolution to the quantum space, and so its semantics is the function:
\[ \sem e : \sem \Gamma \to \sem X \times \wstarcat(\mathcal{B(H)}, \mathcal{B(H)})\]
And so by comparing the with the denotation of expressions in \cref{sec:base-class-lang}, we can see that the monad \(T\) should be given by the functor \(\_ \times \wstarcat(\mathcal{B(H)}, \mathcal{B(H)})\), which is the writer/printing monad (see e.g. \cite[Example 5.1.14]{perroneStartingCategoryTheory2024}) with the monoid structure on \(\wstarcat(\mathcal{B(H)}, \mathcal{B(H)})\) being given by composition and the identity map.

At this stage, with the restricted form of our qubit references, we can simply set \(\sem \qid = \qram\), so that closed qubit references is given by one of the constants. Concretely, we have \(\sem q(\semctx) = q\). All that remains is define the semantics of each gate, which is given below:
\[  \sem {G}(\semctx)(q) = ((), \overline G_q) \text { for }G \in \{\Had, \X, \SG, \T\} \quad \sem {\CNOT}(\semctx)((q_1, q_2)) = ((), \overline \CNOT_{q_1, q_2})\]
where each of these is a function \(\sem \Gamma \to \qram^i \to \{()\} \times \wstarcat(\mathcal{B(H)}, \mathcal{B(H)})\) where \(i = 2\) for \(\CNOT\) and \(1\) otherwise, which we derive by expanding the semantics of types and the definition of the monad.

\begin{example}[Classically Controlled Unitaries]
  \label{ex:classically-controlled-x}
  A classically controlled \(\X\) gate can be given as the following function, where \(q \in \qram\):
  \[ \textsc{condX} \coloneq \lambda b. \termif{b}{\X(q)}{\return {()}}\]
  As a closed term, this function has a denotation of the form:
  \[ \sem{\textsc{condX}} : \bool \to \{()\} \times \wstarcat(\mathcal{B(H)}, \mathcal{B(H)})\]
  and a calculation shows that \(\tt\) is sent to \(((), \overline {X_q})\) and \(\ff\) is sent to \(((), \id)\).
\end{example}

\subsection{Instrumenting Effects Between a Classical and Quantum Computer}
\label{sec:toy-inst}

To achieve more interesting hybrid interaction in a quantum programming language, we must be able to perform quantum measurements and be able to interact with their classical output.
To do so, we extend our toy language with an additional function \(\meas\) that takes a \(\qid\) and returns a \(\bool\):

\begin{mathpar}
  \inferrule{ }{\Gamma \judgementvalue \meas : \qid \to \bool}
\end{mathpar}

Here, \(\meas\) implements measurement of a single qubit in the computational basis. Computational basis measurement has two possible outcomes, \( \ket{0} \) and \( \ket{1} \) to which we associate to the classical outcomes \( \{\ff,\tt\} = \bool\). Given that our previous semantics already utilised quantum channels, it may seem like we already have the tools necessary to model this measurement operation. Unfortunately, while the previous semantics can model the action of measurement on our quantum space, it does not model the classical outcome obtained. We recall the reset example from the introduction.

\begin{example}[Reset]
  The reset channel maps all states to \(\ket 0\). As in the introduction, this can be performed by first measuring a qubit, and then applying an \(\X\) gate conditioned on the measurement result. This is realised by the following program, where we let \(q \in \qram\) as in \cref{ex:classically-controlled-x}:
  \[ \textsc{reset} \coloneq \letin{b}{\meas(q)} \textsc{condX}(b) \]
Crucially, the outcome of the measurement is used as an argument to the classical conditioned \(\X\) gate, and so the semantics of this channel cannot be realised by simply composing channels.
\end{example}

To model this action we move from quantum channels to \emph{quantum instruments}, as introduced in \cref{sec:background-inst}. A quantum instrument assigns a quantum channel to each potential classical output, with the each component giving the evolution of the quantum state if we had postselected for that classical output, and the subnormalisation factor giving the probability of receiving that output. More concretely, for the computational basis measurement we have the instrument:
\[ \delta(\ff, \overline{\ketbra{0}}) + \delta(\tt, \overline{\ketbra{1}})\]
where \(\delta\) is the Dirac quantum instrument and we recall that for a linear map \(V\), we define the channel \(\overline V(M) = V^\dagger M V\). We note that the combination of the components is subunital as required:
\[(\overline{\ketbra{0}} + \overline{\ketbra{1}})(1) = \ketbra{0}1\ketbra{0} + \ketbra{1}1\ketbra{1} = \ketbra{0} + \ketbra{1} = 1\]
such an instrument is exactly the tool we need to give semantics to our language with measurement, as it captures both the quantum effect and the classical outcome of the operation. We will therefore define the semantics of this language by letting the monad \(T\) be given by \(Q\), where \(Q(X)\) is the set of quantum instruments on our fixed quantum space \(\mathcal{H}\) with classical outcomes \(X\). We then define:
\[ \sem \meas (\rho)(q) = \delta(\ff, \overline{\ketbra{0}_q}) + \delta(\tt, \overline{\ketbra{1}_q})\]
where as in the previous section, the subscripted \(q\) refers to map which acts as indicated on the given qubit, and acts as the identity on all other qubits. We must also be careful that our quantum operations from the previous section still have an interpretation in the new monad, but this is immediate from the following.

\begin{remark}
  A quantum instrument on the set \(\{()\}\) is precisely a quantum channel: every channel \(\Psi\) gives rise to an instrument \(\delta((), \Psi)\), and every instrument on \(\{()\}\) is of this form.
\end{remark}

\subsubsection{Composing Quantum Instruments}
All that remains to complete our semantics is to show that \(Q\) is actually a monad. Critically, to make sense of programs whose effects are governed by quantum instruments, we must define what it means to compose these instruments. Recalling \cref{fig:quant-inst-compose}, the required composition is the monadic composition depicted in \cref{fig:inst-comp-monad}, where an instrument \(\xi \in Q(X)\) is composed with a Kleisli map \(f : X \to Q(Y)\). We defer a full description of the monad to \cref{sec:quant-instr-are}, partially as it a stepping stone to the full quantum orchestra monad given in \cref{sec:continuous-orchestras}. Here we instead describe the action of each monad operation on Dirac quantum instruments, which is sufficient for building intuition.

The unit of the monad is already represented as a Dirac instrument, with \(\eta_X(x) = \delta(x, \id_{\mathcal{B(H)}})\). Intuitively, this instrument does not effect the quantum state, and could be drawn as:
\[
  \begin{quantikz}[row sep=tiny, column sep=small]
    \setwiretype{n} & \gate[2][1cm]{\delta(x, \id)} & \wire[l][1]["X"{above,pos=0.5}]{c} & \midstick[2, brackets=none]{\(\quad=\quad\)} \setwiretype{n} & &\inputD{x} & \wire[l][1]["X"{above,pos=0.5}]{c} \\
    \setwiretype{n} & \wire[l][1]["\mathcal{H}"{above,pos=0.5}]{q} & \wire[l][1]["\mathcal{H}"{above,pos=0.5}]{q} & & & \wire[l][1]["\mathcal{H}"{above,pos=0.7}]{q} & \wire[l][1]["\mathcal{H}"{above,pos=0.35}]{q}
  \end{quantikz}
\]
Now we consider the monadic composition of a Dirac instrument \(\delta(x, \Psi)\) and a Kleisli map \(f : X \to Q(Y)\), given by \(\kleisli f (\delta(x, \Psi))\). We can again depict this graphically:
\[
  \begin{quantikz}[row sep=tiny, column sep=small]
    \setwiretype{n} &\gate[2][1cm]{\delta(x, \Psi)} & \gate[2][1cm]{f} \wire[l][1]["X"{above,pos=0.5}]{c} & \wire[l][1]["Y"{above,pos=0.5}]{c} & \midstick[2, brackets=none]{\(\quad=\quad\)} \setwiretype{n} & & & \gate[2][1cm]{f(x)} & \wire[l][1]["Y"{above,pos=0.5}]{c} \\
    & \wire[l][1]["\mathcal{H}"{above,pos=0.5}]{q} & \wire[l][1]["\mathcal{H}"{above,pos=0.5}]{q} & \wire[l][1]["\mathcal{H}"{above,pos=0.5}]{q} & \setwiretype{n} &  & \gate{\Psi}\setwiretype{q} \wire[l][1]["\mathcal{H}"{above,pos=0.5}]{q} & \wire[l][1]["\mathcal{H}"{above,pos=0.5}]{q} & \wire[l][1]["\mathcal{H}"{above,pos=0.5}]{q}
  \end{quantikz}
\]
and we see that the \(y\) component of \(\kleisli f (\delta(x, \Psi))\) is given the \(y\) component of \(f(x)\) composed with \(\Psi\) (recalling we work in the Heisenberg picture). The full definition of the monad in \cref{sec:quant-instr-are} implies that this action can be extended by linearity to all finite quantum instruments, despite Dirac decompositions not being unique.

With the definition of the monad, and the semantics of the new constructor, we have a full description of the semantics of this iteration of the toy language. We can now return to the reset example and calculate its semantics. For simplicity, we will assume we work in the single qubit space, that is \(\qram = \{q\}\). We are then in the following situation:
\[
  \begin{quantikz}[row sep=tiny, column sep=small]
    \setwiretype{n} &\gate[2][1cm]{\textsc{reset}} & \wire[l][1]["\unit"{above,pos=0.5}]{c} & \midstick[2, brackets=none]{\(\quad=\quad\)}\setwiretype{n} &  & \gate[2][1cm]{\delta(\ff, \overline{\ketbra{0}})+\delta(\tt, \overline{\ketbra{1}})} & \wire[l][1]["\bool"{above,pos=0}]{c} &\gate[2][1cm]{\textsc{condX}} \setwiretype{c} & \wire[l][1]["\unit"{above,pos=0.5}]{c} \\
    & \wire[l][1]["\mathbb{C}^2"{above,pos=0.5}]{q} & \wire[l][1]["\mathbb{C}^2"{above,pos=0.5}]{q} &\setwiretype{n} & & \wire[l][1]["\mathbb{C}^2"{above,pos=0.5}]{q} & \wire[l][1]["\mathbb{C}^2"{above,pos=0}]{q} & \setwiretype{q} & \wire[l][1]["\mathbb{C}^2"{above,pos=0.5}]{q}
  \end{quantikz}
\]
We can now calculate the component of each side for \(() \in \sem{\unit}\):
\begin{align*}
  \sem {\textsc{reset}}
  &= \kleisli{\sem{\textsc{condX}}}(\sem{\meas})\\
  &= \kleisli{\sem{\textsc{condX}}}(\delta(\ff, \overline{\ketbra{0}}) + \delta(\tt, \overline{\ketbra{1}}))\\
  &= \kleisli{\sem{\textsc{condX}}}(\delta(\ff, \overline{\ketbra{0}})) + \kleisli{\sem{\textsc{condX}}}(\delta(\tt, \overline{\ketbra{1}}))\\
  &= \overline{\ketbra{0}} \circ \sem{\textsc{condX}}(\ff) + \overline{\ketbra{1}} \circ \sem{\textsc{condX}}(\tt)\\
  &= \overline{\ketbra{0}} \circ \id + \overline{\ketbra{1}} \circ \overline{X}\\
  &= \overline{\ketbra{0}} + \overline{\ket{0}\!\!\bra{1}}
\end{align*}
where we have omitted the arguments for the semantics of the empty context and for taking the appropriate component, as both are singleton sets. From this is it clear that our program for reset is semantically identical to the true reset channel in the Heisenberg picture we are working within.

\subsection{Allocating Qubits}
\label{sec:toy-alloc}

So far, our programs have all operated on a fixed quantum space governed by a set of constants \(\qram\). While this aligns closely with hardware, it forces the programmer to calculate how much memory is needed, or dually, to know how much memory is available on their target hardware. Ideally, the programmer should be alleviated of this concern, and be able to request new memory when needed through a new function \(\alloc : 1 \to \qid\).

To accommodate this in our semantics, the quantum space our instruments act on must be continually updated throughout the computation. Further, it should be possible for quantum instruments to have different input and output spaces. We achieve this by generalising our construction to a \emph{parameterised monad} (see \citet{atkeyParameterisedNotionsComputation2009}), \(Q(\csaa, \csab, X)\), where \(X\) is the set acted on by the monad, with \(\csaa\) and \(\csab\) being the input and output \wstar-algebras. The monad operations work similarly to before, with the constraint that the input and output spaces must align when appropriate. We include full details of the parameterised construction for finite instruments in \cref{sec:quant-instr-are}.

Now that we are able to paramaterise our effect, the natural question is where to get the input and output parameters from. To address the second of these questions, we adapt our typing to track the number of qubits allocated by an expression using a grading, reminiscent of graded type-and-effect systems \cite{katsumata_parametric_2014, gaboardi_combining_2016, katsumata_flexible_2022}. The function type is replaced with a graded set of function types:
\[ X \to_n Y \]
and the judgement for expressions is given by \(\Gamma \judgementproducer_n e : X\). The updated typing rules to accommodate these grades are standard, and are given in full in \cref{app:toy}. We highlight the rule for the if statement, and the new rule for \(\alloc\):
\begin{mathpar}
  \inferrule{\Gamma \judgementvalue v : \bool \and \Gamma \judgementproducer_n e_1 : X \and \Gamma \judgementproducer_n e_2 : X}{\Gamma \judgementproducer_n \termif{v}{e_1}{e_2} : X} \and
  \inferrule{ }{\Gamma \judgementvalue \alloc : \unit \to_1 \qid}
\end{mathpar}
The \(\alloc\) function allocates exactly one qubit and returns a references to this new qubit, which is tracked in its type. The rule for the if statement enforces that both expressions must allocate the same number of qubits, ensuring that the total allocation is static. This allows us to determine how much larger the output parameter should be than the input.

Determining the input space is complicated by the fact that an expression could be run in a variety of quantum spaces, sometimes within the same program. To remedy this, we define the semantics of a term in all possible quantum spaces, much like how the semantics of a term is given in semantic contexts. The semantics of an expression is then given by a function:
\[ \sem{e}_m \in \dcpo(\sem{\Gamma}_m, \orch(\mathcal{B}(\mathcal{H}^m), \mathcal{B}(\mathcal{H}^{m+n}), \sem{X}_{m + n}))\]
and is defined for every number of starting qubits \(m\), with the final space having \(m + n\) qubits. We have chosen here to also parametrise the semantics of types over natural numbers, allowing us to define:
\[ \sem \qid_m = \{0, \dots, m - 1\}\]
ensuring that qubit references are always ``in scope''. To define the semantics of ``let'', we now require an inclusion function \(\iota_{\Gamma, m, m'} : \sem \Gamma_m \to \sem \Gamma_{m'}\) for \(m' \geq m\). Doing is not entirely trivial, requiring the semantics of types to be given as:
\[ \sem{X \to_n Y}_m = \forall m' \geq m. \dcpo(\sem{X}_{m'}, \orch(\mathcal{B}(\mathcal{H}^{m'}), \mathcal{B}(\mathcal{H}^{m' + n}), \sem{Y}_{m' + n})) \]
ensuring that \(\sem {X \to_n Y}_m \subseteq \sem {X \to_n Y}_{m'}\) for \(m' \geq m\). Finally, we have:
\[ \sem{\alloc}_m(\semctx)(u) = \delta(m, \overline V) \quad \text{where } V(v) = v \otimes \ket{0}\]
recalling that \(\overline V (M) = V^\dagger M V\). We emphasise that the action of \(\alloc\) depends on the size of the input quantum space, both for the quantum channel and the returned qubit reference.
\begin{example}[Bell pair allocation]
  The following expression \(\judgementproducer_2 \textsc{GHZ} : \qid \times \qid\) allocates a (two-qubit) Bell pair:
  \begin{equation*}
    \textsc{Bell} \coloneq \letin{q_1}{\alloc(())}\{ \Had(q_1)\seq \letin{q_2}{\alloc(())}\{ \CNOT(q_1, q_2) \seq \return (q_1, q_2) \} \}
  \end{equation*}
  The returned qubits references depend on the size of the quantum space this program is executed in. In general we have:
  \[\sem {\textsc{Bell}}_m = \delta((m, m+1), \overline\phi) \in Q(\mathcal{B}(\mathcal{H}^m), \mathcal{B}(\mathcal{H}^{m+2}), \{0,\dots, m+1\}^2)\]
  where \(\phi(v) = v \otimes \frac{1}{\sqrt{2}} (\ket{00} + \ket{11})\). We note that the inclusion \(\iota_{\qid, m, m+1}\) is used to include the returned qubit reference from the first \(\alloc\) into the set \(\{0, \dots, m+1\}\).
\end{example}

This restricted form of static allocation through graded types is of course not the only form that allocation could take, though we choose the form as we believe it best showcases the parameterised nature of the monad. An alternative to our \(\alloc\) function could be a construction:
\[\inferrule{\Gamma, a: \qid \judgementproducer e : X}{\Gamma \judgementproducer \mathop{\mathsf{ancilla}} a \mathop{\mathsf{in}} e : X}\]
which allocates an ancilla qubit, executes the expression \(e\) in the larger space, and releases the qubit afterwards. Such behaviour could also be captured by our parameterised monad (though this would only ever need invocations of the monad where the input and output space are the same).

Finally, a more flexible allocation method could be modelled by fixing a single infinite dimensional \wstar-algebra instead of parametrising, utilising that the monad is defined for all \wstar-algebras, and not just the finite dimensional ones we have used so far. We believe that with the appropriate algebra, the monad could support the allocation operation above without the need to enforce that it is only used statically, allowing it to appear in conditional statements, and even loops (see \cref{sec:continuous-orchestras}).

\section{Modelling Diverging Quantum Programs}\label{sec:continuous-orchestras}
To model programs with both quantum effects and divergence, we extend our finite instrument monad of the previous sections to a ``Quantum Orchestra'' monad \(\orch\) on the category \(\dcpo\). We motivate this by investigating a \emph{repeat-until-success} circuit~\cite{adamRepeatUntilSuccessNondeterministicDecomposition2014}.

\begin{example}[Repeat-Until-Success]
  \label{ex:rus}
  Let \(U\) be some unitary we want to perform on \(\mathcal{H}\), and suppose we have a (closed) computation \(\judgementproducer C : \bool\) with a single classical boolean output such that:
  \begin{itemize}
  \item if the classical output is \(\ff\), then the quantum state of \( \mathcal{H} \) remains unchanged,
  \item if it is \(\tt\), then \(U\) is performed on the state.
  \end{itemize}
  A repeat-until-success circuit performs the evolution \(U\) by repeatedly running the computation \(C\) until a \(\tt\) outcome is observed. We can model this in our setting as follows: the semantics of \(C\) can be given by the following instrument, where \(p\) is a fixed probability of observing \(\tt\):
  \begin{align*}
    &\sem C : Q(\mathcal{B(H)}, \mathcal{B(H)}, \mathbf{2})\\
    &\sem C = \delta(\tt, p\overline U) + \delta(\ff, (1-p)\id_{\mathcal{B(H)}})
  \end{align*}
  We then define the repeat until success circuit recursively by:
  \[ \textsc{RUS} \coloneq \letin{b}{C} \termif{b}{\return {()}}{\textsc{RUS}}\]
  A standard method to solve this recursive equation is through a fixpoint:
  \[ \textsc{RUS} \coloneq \fix \lambda r.\letin{b}{C}\termif{b}{\return {()}}{r(())}\]
  where \(r : \unit \to \unit\). Using our instrument monad, we could calculate (up to the isomorphism):
  \[\sem{\lambda r.\letin{b}{C} \termif{b}{\return {()}}{r(())}} : Q(\mathcal{B(H)}, \mathcal{B(H)}, \unit) \to Q(\mathcal{B(H)}, \mathcal{B(H)}, \unit) \]
  but could not in general take a fixpoint of this function. By extending the monad to \(\dcpo\), assuming the function above were monotone we would be able to define:
  \[ \sem{\textsc{RUS}} = \fix {\sem{\lambda r.\letin{b}{C}\termif{b}{\return {()}}{r(())}}}\]
  where it could then be proved that this program has the desired properties.
\end{example}

An immediate consequence of modelling divergence is that the finiteness condition on our instruments is too restrictive: if we adapted the computation of \cref{ex:rus} to return a counter of how many iterations occurred, then the resulting instrument would clearly have non-zero components for each natural number. This complicates taking the summations used to define the monad multiplication.
A second observation is that an appropriate ordering on instruments cannot be given pointwise on components. Despite this ordering making the set of quantum instruments a dcpo, it is too coarse: the unit of the monad is not continuous in this ordering as \(x < y\) would imply that \(\eta(x)\) is incomparible to \(\eta(y)\).

A potential solution to both problems, inspired by the probabilistic powerdomain monad~\cite{jones:probabilisticpowerdomain}, is for instruments to have a component for each Scott-open subset (those that are upwards closed and inaccessible from below), as opposed to each element, subject to the conditions of being a \emph{valuation}. Under this interpretation, the multiplication must then be defined by an integral construction, and it must be shown that this integral construction is monotone. The techniques used in the classical (probabilistic) case do not neatly transfer to our quantum case, and we will compare these approaches in \cref{sec:discussion-and-future-work}.

Instead of using valuations, we instead extend further to a continuation monad of ``Quantum orchestras'', where instead of having a component for each open set of a dcpo \(X\), we have a component for each continuous function \(X \to \wstarcat(\csab, \csaa)\). Orchestras restrict to valuations, by considering the components associated to indicator functions. Intuitively, the data of an orchestra is that of a valuation and a definition of integral for that valuation. Crucially, this allows the order to be defined in a way that makes this bundled integration function continuous by definition.

We again define this monad as a \(\wstarcat^{\text{op}}\)-parameterised monad. We note that the resulting form of the monad is very close to the continuation passing style monad, from which we inherit much of the monad structure.

\begin{definition}[Parameterised Quantum Orchestra Monad]
  \label{def:quantum-orchestra}
  We define the \( \wstarcat^{\text{op}} \)-parameterised quantum orchestra monad \( \orch \) as follows:
  \begin{itemize}
  \item Let \( \csaa \), \( \csab \) be \wstar-algebras and \( X \) be a \( \dcpo \). Then elements \( \xi \in \orch(\csaa, \csab, X) \) are given by a collection:
    \[\{\xi_k \in \wstarcat(\csak, \csaa) \mid \csak \in \wstarcat, k \in \dcpo(X, \wstarcat(\csak, \csab)) \}\]
    satisfying:
    \begin{itemize}
    \item \textbf{Continuity:} For each \(\csak\) and \(\csae\), the function
      \[\lambda k. \xi_k : \dcpo(X, \wstarcat(\csak, \csab)) \to \wstarcat(\csak, \csaa)\]
      is Scott-continuous.
    \item \textbf{Subconvexity:} \(\xi\) has a restricted form of linearity. Given \(\theta, \rho \in \mathbb{R}^+\) with \(\theta + \rho \leq 1\) we have for all \( k, k' \in \mathbf{DCPO}(X,\wstarcat(\csak, \csab)) \):
      \[ \xi_{\lambda x. \theta k(x) + \rho k'(x)} = \theta \xi_k + \rho \xi_{k'} \]
    \item \textbf{Compositionality:} If \(\chi \in \wstarcat(\csak', \csak)\), then:
      \[ \xi_{\lambda x.k(x) \circ \chi} = \xi_k \circ \chi\]
      for all \(k \in \dcpo(X, \wstarcat(\csak, \csab))\).
    \end{itemize}
    We define the ordering pointwise: \(\xi \leq \xi'\) if \(\xi_k \leq \xi'_k\) for all \(k\), where the ordering on \(\wstarcat(\csak, \csab)\) is given by the standard L\"owner order. For a directed set \(\Delta \subseteq \orch(\csaa, \csab, X)\), we define its supremum by:
    \[ (\sup \Delta)_k = \sup \{ \xi_k \mid \xi \in \Delta \} \]
    for all \(\csak\), and \(k \in \dcpo(X, \wstarcat(\csak, \csab))\), using that \(\wstarcat\) is \(\dcpo\)-enriched~(see \cref{thm:w_star_is_dcpo}).
  \item The action of the functor \(\orch\) on \(\phi \in \wstarcat(\csaa, \csaa')\), \(\psi \in \wstarcat(\csab', \csab)\), and \(f \in \dcpo(X, Y)\) is given by:
    \[ \orch(\phi, \psi, f)(\xi)_k = \phi \circ \xi_{\lambda x.\psi \circ k(f(x))}\]
    where \(k \in \dcpo(Y, \wstarcat(\csak, \csab'))\) and \(\xi \in \orch(\csaa, \csab, X)\).
  \item The unit \(\eta_{\csaa X} : X \to \orch(\csaa, \csaa, X)\) is given by:
    \[\eta_{\csaa X}(x)_k = k(x)\]
    for \(x \in X\) and \(k \in \dcpo(X, \wstarcat(\csak, \csaa))\).
  \item The multiplication \(\mu_{\csaa, \csab, \csac, X} : \orch(\csaa, \csab, \orch(\csab, \csac, X)) \to \orch(\csaa, \csac, X)\) is given by:
    \[\mu_{\csaa, \csab, \csac, X}(\Xi)_k = \Xi_{\lambda \xi.\xi_k}\]
    for \(k \in \dcpo(X, \wstarcat(\csak, \csac \otimes \csae))\).
  \item The strength \(\tau_{X, \csaa, \csab, Y} : X \times \orch(\csaa, \csab, Y) \to \orch(\csaa, \csab, X \times Y)\) is defined by:
    \[\tau_{X, \csaa, \csab, Y}(x, \xi)_k = \xi_{\lambda y. k(x, y)}\]
    for \(k \in \dcpo(X \times Y, \wstarcat(\csak, \csab))\).
  \end{itemize}
\end{definition}

In \cref{sec:orch-monad}, we will prove that the definition above actually forms a monad, as we have claimed. We next define the \emph{Dirac} quantum orchestra, the analogue of the Dirac quantum instrument. As each of the quantum constructors in \cref{sec:toy-inst,sec:toy-alloc} were given in terms of Dirac instruments, this construction allows us to immediately transfer their definition to the orchestra monad.

\begin{definition}[Dirac quantum orchestra]
  Given \(x \in X\) and \(\Phi \in \wstarcat(\csab, \csaa)\), we can define the \emph{Dirac quantum orchestra} \(\delta(x, \Phi) \in \orch(\csaa, \csab, X)\) by:
  \[\delta(x,\Phi)_k = \Phi \circ k(x)\]
  for all \(k \in \dcpo(X, \wstarcat(\csak, \csab))\).
  To show this is indeed a quantum orchestra, we first note that continuity and subconvexity follow from suprema and subconvex combinations being defined pointwise on the space \(\dcpo(X, \wstarcat(\csak, \csab))\), and from the composition of morphisms in \(\wstarcat\) being continuous and linear. For compositionality, we have:
  \[ \delta(x, \Phi)_{\lambda x.k(x) \circ \chi} = \Phi \circ k(x) \circ \chi = \delta(x, \Phi)_k \circ \chi\]
  for \(\chi \in \wstarcat(\csak', \csak)\) and \(k \in \dcpo(X, \wstarcat(\csak, \csaa))\), as required.
\end{definition}

We pause to consider the action of each of the monad operations on Dirac quantum orchestras. Although it does not take an orchestra as an argument, it is immediate that the unit \(\eta_{\csaa X}(x)\) is given by the Dirac quantum orchestra \(\delta(x, \id_\csaa)\). Given \(x \in X\), \(\Phi \in \wstarcat(\csab, \csaa)\), \(\phi \in \wstarcat(\csaa, \csaa')\), \(\psi \in \wstarcat(\csab', \csab)\), and \(f \in \dcpo(X, Y)\), we have:
\begin{equation*}
  \orch(\phi, \psi, f)(\delta(x, \Phi))_k = \phi \circ \Phi \circ \psi \circ k(f(x)) = \delta(f(x), \phi \circ \Phi \circ \psi)_k
\end{equation*}
for all \(k \in \dcpo(Y, \wstarcat(\csak, \csab'))\). For the multiplication, consider \(\xi \in \orch(\csab, \csac, X)\) and \(\Phi \in \wstarcat(\csab, \csaa)\). Then for \(k \in \dcpo(X, \wstarcat(\csak, \csac))\) we have:
\begin{equation*}
  \mu_{\csaa, \csab, \csac, X}(\delta(\xi, \Phi))_k = \delta(\xi, \Phi)_{\lambda \xi'. \xi'_k} = \Phi \circ \xi_k = \orch(\Phi, \csac, X)(\xi)_k
\end{equation*}
We consider applying the strength to \(\delta(y, \Phi)\) where \(y \in Y\) and \(\Phi \in \wstarcat(\csab, \csaa)\). For \(x \in X\) and \(k \in \dcpo(X \times Y, \wstarcat(\csak, \csab))\):
\[ \tau_{X, \csaa, \csab, Y}(x, \delta(y, \chi))_k = \delta(y, \chi)_{\lambda y', k(x, y')} = (\chi \otimes \csae) \circ k(x, y) = \delta((x, y), \chi)_k \]
Lastly, we consider the extension operation for this monad. We gave the monad multiplication above to align with the known monad laws for parameterised monads, but can recover the extension of \(f: X \to \orch(\csab, \csac, Y)\) as:
\[ \kleisli f : \orch(\csaa, \csab, X) \to \orch(\csaa, \csac, Y) \qquad \kleisli f(\xi) = \mu_{\csaa, \csab, \csac, Y}(\orch(\csaa, \csab, f)(\xi)) \]
applying this to a Dirac quantum orchestra \(\delta(x, \Psi) \in \orch(\csaa, \csab, X)\), we get:
\[ \kleisli f(\delta(x, \Psi)) = \mu_{\csaa, \csab, \csac, Y}(\orch(\csaa, \csab, f)(\delta(x, \Psi))) = \mu_{\csaa, \csab, \csac, Y}(\delta(f(x), \Psi)) = \orch(\Psi, \csac, Y)(f(x))\]
for a continuation \(k \in \dcpo(Y, \wstar(\csak, \csac))\), we further calculate:
\[ \kleisli f(\delta(x, \Psi))_k = \orch(\Psi, \csac, Y)(f(x))_k = \Psi \circ f(x)_k\]
We note the action of this monad on Dirac orchestras is entirely analogous to the action of the finite quantum instrument monad on Dirac instruments, and therefore the examples and reasoning from \cref{sec:toy-inst} embed into this setting. Much of this reasoning extends linearly to \emph{finite quantum orchestras}, finite sums of Dirac orchestras, though characterising the order on finite orchestras is not trivial. We discuss finite orchestras in greater detail in \cref{sec:finite-quant-orch}.

We can finally now define the final version of our toy language. The only addition over the language of \cref{sec:toy-alloc} is the fix point operator, which has the following typing rule:
\begin{mathpar}
  \inferrule{\Gamma \judgementvalue f : (\unit \to_0 X) \to_0 X}{\Gamma \judgementproducer_0 \fix f : X}
\end{mathpar}
We may have expected to apply a fixed point to an endofunction \(f: X \to X\), but we note that \(\sem{X \to X} = \sem X \to \orch(\sem X)\) which is not suitable for taking a fixed point. If we had tried to remedy this by applying the extension to this function, the resulting endofunction would be strict, and hence its fixpoint would always be zero. In contrast:
\[ \sem{(\unit \to X) \to X} = (\unit \to \orch(X)) \to \orch(X) \simeq \orch(X) \to \orch(X)\]
and so is suitable for taking a fixpoint. We note that this form of the fixpoint operator is reminiscent of the construction in \citet[Chapter 4]{levyCallByPushValue2003}, but with thunking replaced by functions from the unit.

For this language, we enforce at the type level that the function \(f\) performs no allocations, so that we can statically know the output space of the computation. As the monad can operate on infinite \wstar-algebras, we believe this is not a fundamental limitation, but we leave allocating within a loop to future work. Recalling \cref{thm:kleene-fixed-point}, it is now routine to give semantics to this new constructor, which should take the form of a (continuous) function:
\[ \sem{\fix f}_m \in \dcpo(\sem{\Gamma}, \orch(\mathcal{B}(\mathcal{H}^m), \mathcal{B}(\mathcal{H}^m), \sem{X}_m))\]
and is given by the following composition:
\begin{align*}
    \sem{\fix f}_m = \sem{\Gamma}_m &\xrightarrow{\sem{f}_m} \dcpo((\unit \to \orch(\mathcal{B}(\mathcal{H}^m), \mathcal{B}(\mathcal{H}^m), \sem{X}_m)), \orch(\mathcal{B}(\mathcal{H}^m), \mathcal{B}(\mathcal{H}^m), \sem{X}_m))\\
                    &\simeq \dcpo(\orch(\mathcal{B}(\mathcal{H}^m), \mathcal{B}(\mathcal{H}^m), \sem{X}_m), \orch(\mathcal{B}(\mathcal{H}^m), \mathcal{B}(\mathcal{H}^m), \sem{X}_m))\\
  & \xrightarrow{\fix} \orch(\mathcal{B}(\mathcal{H}^m), \mathcal{B}(\mathcal{H}^m), \sem{X}_m)
\end{align*}
Crucially, the set \(\orch(\csaa, \csab, X)\) is a pointed dcpo for any \(\csaa\), \(\csab\), and \(X\), with the bottom element being given by the the orchestra which maps every continuation to the zero channel, allowing us to take a fixpoint. The semantics of all other terms can be defined analogously to before, but for completeness we include a full description in \cref{app:toy}.

We end this section by recalling the repeat-until-success program \(\textsc{RUS}\), and computing its meaning. Let us assume that \(\Gamma \judgementproducer_0 C : \bool\) and for a given \(\semctx \in \sem{\Gamma}_m\) we have:
\[ \sem{C}_m(\rho) = \delta(\tt, p\overline{U}) + \delta(\ff, (1 - p)\id_{\mathcal{B}(\mathcal{H})^m})\]
for some unitary \(U\) on \(\mathcal{H}^m\) and \(0 < p < 1\). Then, letting:
\[ f \coloneq \lambda r. \letin b C \termif{b}{\return ()}{r(())} \qquad \textsc{RUS} \coloneq \fix f \]
we have (again up to the iso \((\unit \to X) \simeq X\)) by calculation that:
\[ \sem f_m(\rho)(\xi) = \delta((), p\overline{U}) + (1-p)\xi \]
Now, it is clear that \((\sem{f}_m(\rho))^n(\bot) = \delta((), (1 - (1-p)^n)\overline{U})\), and so:
\[ \fix (\sem f_m (\rho)) = \sup\nolimits_n (\sem f_m(\rho))^n(\bot) = \sup\nolimits_n \delta((), (1-(1-p)^n)\overline{U}) = \delta((), \overline{U})\]
as desired.

\section{Quantum orchestras form a strong parameterised monad}%
\label{sec:orch-monad}

We prove that the quantum orchestra construction given in
\cref{def:quantum-orchestra} is well defined, leading to
\cref{thm:strong-monad} which states that \(\orch\) is a strong
\(\wstarcat^{\text{op}}\)-parameterised monad on \(\dcpo\) with unit
\(\eta\), multiplication \(\mu\), and strength \(\tau\). We begin by
showing that the action on objects produces dcpos.

\begin{lemma}
  \label{lem:orch-dcpo}
  \(\orch(\csaa, \csab, X)\) is a dcpo.
\end{lemma}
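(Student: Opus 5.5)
We have to show two things. First, the pointwise relation \(\xi \leq \xi'\) iff \(\xi_k \leq \xi'_k\) for all \(\csak\) and \(k\) is a partial order on \(\orch(\csaa, \csab, X)\). Second, for every directed \(\Delta \subseteq \orch(\csaa, \csab, X)\), the pointwise supremum \((\sup \Delta)_k = \sup\{\xi_k \mid \xi \in \Delta\}\) is again an element of \(\orch(\csaa, \csab, X)\) and is its least upper bound.

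The order part is immediate from \cref{thm:w_star_is_dcpo}. Each \(\wstarcat(\csak, \csaa)\) is a partial order, so the pointwise relation is reflexive and transitive. It is antisymmetric because an orchestra is exactly its family of components. Likewise, once \(\sup\Delta\) is known to be an orchestra, it is the least upper bound, since upper bounds and suprema are computed componentwise. For each fixed \(k\) the set \(\{\xi_k \mid \xi\in\Delta\}\) is directed, so the supremum exists by \cref{thm:w_star_is_dcpo}.

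The real work is checking that \(\sup\Delta\) satisfies the three axioms of \cref{def:quantum-orchestra}.

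\textbf{Continuity.} Fix \(\csak\). The map \(k \mapsto (\sup\Delta)_k\) is a pointwise supremum of a directed family of Scott-continuous maps into the dcpo \(\wstarcat(\csak,\csaa)\). Such a supremum is Scott-continuous by the standard interchange of directed suprema: for a directed set \(D\) of continuations,
\[ \sup_{\xi}\sup_{k\in D}\xi_k = \sup_{k\in D}\sup_\xi \xi_k. \]
Monotonicity follows in the same way.

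\textbf{Subconvexity and compositionality.} Both need suprema in \(\wstarcat\) to commute with the operations involved. We need:
\begin{itemize}
\item \(\sup_i(\theta a_i + \rho b_i) = \theta\sup_i a_i + \rho \sup_i b_i\), where \((a_i)\) and \((b_i)\) are directed families indexed by the same directed set \(\Delta\);
\item \((\sup_i a_i)\circ\chi = \sup_i (a_i\circ\chi)\).
\end{itemize}

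I would derive both from the description in \cite{cho:wstarisdcpo} of directed suprema in \(\wstarcat(\csak,\csaa)\). There the supremum is the pointwise ultraweak (equivalently, strong) limit: \((\sup_i a_i)(m)\) is the Löwner supremum of \(a_i(m)\) for positive \(m\), extended by linearity.

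Given this description, the two identities follow.
\begin{itemize}
\item \textbf{Scalar multiplication.} Multiplying by a scalar \(\theta \geq 0\) is an order automorphism, so it preserves suprema.
\item \textbf{Addition.} For positive \(m\), \(\sup_i(a_i(m)+b_i(m)) = \sup_i a_i(m) + \sup_i b_i(m)\). This holds because the ultraweak limit of a bounded increasing net is its supremum, and limits are additive. The families share the index set \(\Delta\) and are both monotone in it, so the combined family is directed.
\item \textbf{Precomposition.} \((\sup_i a_i)(\chi(m)) = \sup_i a_i(\chi(m))\) holds by the pointwise description, since \(\chi(m)\) is positive when \(m\) is.
\end{itemize}

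Applying these identities componentwise, with each \(\xi \in \Delta\) already subconvex and compositional, yields the two laws for \(\sup\Delta\).

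I expect the main obstacle to be justifying this pointwise description of suprema in \(\wstarcat\), together with the fact that positivity is used on all matrix amplifications. The order on \(\wstarcat\) is defined via complete positivity of differences, not pointwise positivity. So I would also check that the pointwise limit is normal and completely positive, and that it is the least upper bound for the CP-order. I would cite \cite{cho:wstarisdcpo} for these facts rather than reprove them.
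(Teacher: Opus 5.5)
Your proposal is correct and follows the same structure as the paper's proof. Both define the componentwise partial order and check that the pointwise supremum satisfies continuity (a directed supremum of Scott-continuous maps is Scott-continuous), subconvexity and compositionality, then conclude with the componentwise least-upper-bound argument.

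The only difference is how you justify that suprema in \(\wstarcat\) commute with subconvex combinations and precomposition. You use the concrete description of directed suprema as pointwise ultraweak limits. The paper instead argues order-theoretically: scalar multiplication and translation by a fixed map are order isomorphisms, so addition is separately and hence jointly continuous. It then cites Scott-continuity of composition from \cite[Proposition~C.13]{cho:wstarisdcpo}. Both arguments rest on the same results of \cite{cho:wstarisdcpo}.
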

\begin{proof}
  As the ordering on \(\orch(\csaa, \csab, X)\) is pointwise on components, it is clear that it is a partial order. To show the suprema are well defined, fix a directed set \(\Delta \subseteq \orch(\csaa, \csab, X)\). We first show \(\sup \Delta \in \orch(\csaa, \csab, X)\):
  \begin{itemize}
  \item \textbf{Continuity}: Fix \(\csak\). Then the function:
    \[ \lambda k. (\sup \Delta)_k : \dcpo(X, \wstarcat(\csak, \csab)) \to \wstarcat(\csak, \csaa)\]
    is a supremum of functions \(\{\lambda k. \xi_k \mid \xi \in \Delta\}\), each of which is continuous by the continuity of \(\xi\). Hence \(\lambda k. (\sup \Delta)_k\) is a supremum of continuous functions, and so is continuous.
  \item \textbf{Subconvexity}: Let \(\theta, \rho \in \mathbb{R}^+\) with \(\theta + \rho \leq 1\), and suppose \(k, k' \in \dcpo(X, \wstarcat(\csak, \csab))\). Then:
    \begin{align*}
      (\sup \Delta)_{\lambda x.\theta k(x) + \rho k'(x)}
      &= \sup \{ \xi_{\lambda x. \theta k(x) + \rho k'(x)} \mid \xi \in \Delta\}\\
      &= \sup \{ \theta \xi_k + \rho \xi_{k'} \mid \xi \in \Delta\}\\
      &= \theta \sup\{ \xi_k \mid \xi \in \Delta\} + \rho \sup\{\xi_{k'} \mid \xi \in \Delta\}\\
      &= \theta (\sup \Delta)_k + \rho (\sup \Delta)_{k'}
    \end{align*}
    Where scalar multiplication by a positive number is continuous because scalar multiplication by a positive number is a poset isomorphism. Similarly, adding a fixed channel is a poset isomorphism, and so addition is continuous as it is separately continuous in both arguments.
  \item \textbf{Compositionality}: Suppose \(\chi \in \wstarcat(\csak', \csak)\). Then:
    \begin{align*}
      (\sup \Delta)_{\lambda x.k(x) \circ \chi}
      &= \sup\{\xi_{\lambda x.k(x)\circ \chi} \mid \xi \in \Delta\}\\
      &= \sup\{\xi_k \circ \chi \mid \xi \in \Delta\}\\
      &= \sup\{\xi_k \mid \xi \in \Delta\} \circ \chi &\text{by continuity of }\circ\\
      &= (\sup\Delta)_k \circ \chi
    \end{align*}
    for all \(k \in \dcpo(X, \wstarcat(\csak, \csab))\), using that composition is Scott-continuous \cite[Proposition~C.13]{cho:wstarisdcpo}.
  \end{itemize}
  Consequently, \(\sup \Delta \in \orch(\csaa, \csab, X)\).

  It remains to show that \(\sup \Delta\) actually is the supremum of \(\Delta\), as our notation has claimed. For any \(\xi' \in \Delta\), we have \(\xi'_k \leq \sup \{ \xi_k \mid \xi \in \Delta\} = (\sup \Delta)_k\), and so \(\xi' \leq \sup \Delta\). Instead, assume \(\xi'\) is an upper bound for \(\Delta\). Then for all \(k\) and \(\xi \in \Delta\), \(\xi'_k \geq \xi_k\) and so \(\xi'_k \geq \sup \{\xi_k \mid \xi \in \Delta\} = (\sup \Delta)_k\) and so \(\xi' \geq \sup \Delta\). Hence \(\sup \Delta\) is the supremum of \(\Delta\) and so \(\orch(\csaa, \csab, X) \in \dcpo\).
\end{proof}

\subsection{\texorpdfstring{\(\bm{\orch}\)}{Q} is a functor}
\label{sec:orch-funct}

The next two lemmas prove that \(\orch\) is a functor \(\wstarcat \times \wstarcat^{\text{op}} \times \dcpo \to \dcpo\). Before proving the equations for functoriality, we must first show that the action on morphisms produces a well-defined function, and that this function is continuous (and hence forms a dcpo morphism).

\begin{lemma}
  For \(\phi \in \wstarcat(\csaa, \csaa')\), \(\psi \in \wstarcat(\csab', \csab)\), and \(f \in \dcpo(X, Y)\), the function \(\orch(\phi, \psi, f)\) is well-defined and continuous.
\end{lemma}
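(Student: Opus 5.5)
The proof has two parts. First, for each \(\xi \in \orch(\csaa, \csab, X)\), the family \(\orch(\phi, \psi, f)(\xi)\) is an element of \(\orch(\csaa', \csab', Y)\). Second, the assignment \(\xi \mapsto \orch(\phi, \psi, f)(\xi)\) is Scott-continuous.

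For well-definedness, we first check that the components make sense. Given \(k \in \dcpo(Y, \wstarcat(\csak, \csab'))\), the map \(\lambda x. \psi \circ k(f(x))\) is Scott-continuous. It is the composite of \(f\), then \(k\), then post-composition with \(\psi\), and post-composition is Scott-continuous by \cref{thm:w_star_is_dcpo} and the continuity of composition \cite[Proposition~C.13]{cho:wstarisdcpo}. So it is a legitimate continuation in \(\dcpo(X, \wstarcat(\csak, \csab))\), and \(\phi \circ \xi_{\lambda x.\psi \circ k(f(x))} \in \wstarcat(\csak, \csaa')\). We then verify the three axioms, each by transporting the corresponding axiom of \(\xi\) through the operation \(k \mapsto \psi \circ k \circ f\).
\begin{itemize}
\item \textbf{Continuity.} The operation \(k \mapsto \psi \circ k \circ f\) is Scott-continuous on function spaces, because suprema there are pointwise and composition with \(\psi\) is continuous. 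Composing it with \(\lambda k.\xi_k\) and then post-composition with \(\phi\) gives a continuous map.
\item \textbf{Subconvexity.} We have \(\psi \circ (\theta k(f(x)) + \rho k'(f(x))) = \theta\, \psi \circ k(f(x)) + \rho\, \psi \circ k'(f(x))\), since \(\psi\) is linear. Applying subconvexity of \(\xi\) and then linearity of \(\phi\) gives the equation.
\item \textbf{Compositionality.} By associativity, \(\psi \circ (k(f(x)) \circ \chi) = (\psi \circ k(f(x))) \circ \chi\). Compositionality of \(\xi\) then yields \(\phi \circ \xi_{\lambda x.\psi\circ k(f(x))} \circ \chi\).
\end{itemize}

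For continuity of \(\orch(\phi,\psi,f)\), take a directed \(\Delta \subseteq \orch(\csaa, \csab, X)\). Monotonicity is immediate: \(\xi \le \xi'\) pointwise implies \(\phi \circ \xi_{k'} \le \phi \circ \xi'_{k'}\), because post-composition is monotone. For suprema, \cref{lem:orch-dcpo} says suprema in \(\orch\) are computed componentwise. So at a continuation \(k\), with \(k' = \lambda x.\psi\circ k(f(x))\), we need
\[\phi \circ \sup\{\xi_{k'} \mid \xi\in\Delta\} = \sup\{\phi\circ\xi_{k'} \mid \xi\in\Delta\}.\]
This holds because post-composition with \(\phi\) is Scott-continuous.

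The only step needing care is the repeated appeal to Scott-continuity of composition in \(\wstarcat\), in each argument separately, and of pointwise suprema in \(\dcpo(X, \wstarcat(\csak,\csab))\). This is supplied by the \(\dcpo\)-enrichment in \cref{thm:w_star_is_dcpo} together with \cite[Proposition~C.13]{cho:wstarisdcpo}. We also have to take care that the tensored parameter \(\csae\) appearing in the definitions is handled uniformly; it plays no role in the argument. Everything else is bookkeeping.
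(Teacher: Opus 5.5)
Your proposal is correct and follows the paper's proof essentially step for step. The three orchestra axioms are transported through \(k \mapsto \psi \circ k \circ f\) using Scott-continuity and linearity of composition, and continuity of \(\orch(\phi,\psi,f)\) then follows from componentwise suprema (\cref{lem:orch-dcpo}) together with continuity of post-composition by \(\phi\). Your extra checks are details the paper leaves implicit: that \(\lambda x.\psi\circ k(f(x))\) is itself a Scott-continuous continuation, and that \(\orch(\phi,\psi,f)\) is monotone, so the image of a directed set is directed.
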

\begin{proof}
  To show this function is well-defined, we need that \(\orch(\phi, \psi, f)(\xi) \in \orch(\csaa', \csab', Y)\) for \(\xi \in \orch(\csaa, \csab, X)\). We check each condition in turn:
  \begin{itemize}
  \item \textbf{Continuity}: Let \(K \subseteq \dcpo(Y, \wstarcat(\csak, \csab'))\) be directed. Then:
    \begin{align*}
      \orch(\phi, \psi, f)(\xi)_{\sup K}
      &= \phi \circ \xi_{\lambda x. \psi \circ \sup K(f(x))}\\
      &= \phi \circ \xi_{\lambda x. \sup \{\psi \circ k(f(x)) \mid k \in K\}}&\text{by continuity of }\circ\\
      &= \phi \circ \sup\{\xi_{\lambda x. \psi \circ k(f(x))} \mid k \in K\}&\text{by continuity of }\xi\\
      &= \sup\{\phi \circ \xi_{\lambda x. \psi \circ k(f(x))} \mid k \in K\}&\text{by continuity of }\circ
    \end{align*}
    for all \(k\), as required.
  \item \textbf{Subconvexity}: Let \(\theta + \rho \leq 1\) and \(k, k' \in \dcpo(X, \wstarcat(\csak, \csab'))\). Then we need:
    \begin{align*}
      \phi \circ \xi_{\lambda x. \psi \circ (\theta k(f(x)) + \rho k'(f(x)))}
      = &\phantom{{}+{}} \theta( \phi \circ \xi_{\lambda x. \psi \circ k (f(x))})\\
        &+ \rho (\phi \circ \xi_{\lambda x. \psi \circ k'(f(x))})
    \end{align*}
    but this follows from the linearity of \(\circ\) and subconvexity of \(\xi\).
  \item \textbf{Compositionality}: Let \(\chi \in \wstarcat(\csak', \csak)\). Then we need that:
    \[\phi \circ \xi_{\lambda x. \psi \circ k(f(x)) \circ \chi} = \phi \circ \xi_{\lambda x. \psi \circ k(f(x))} \circ \chi\]
    but this follows immediately from the compositionality of \(\xi\).
  \end{itemize}
  To show continuity, let \(\Delta\) be a directed set of orchestras and \(k \in \dcpo(Y, \wstarcat(\csak, \csaa'))\). Then:
  \begin{align*}
    \orch(\phi, \psi, f)(\sup \Delta)_k
    &= \phi \circ \sup \{ \xi_{\lambda x. \psi \circ k(f(x))} \mid \xi \in \Delta\}\\
    &= \sup \{ \phi \circ \xi_{\lambda x. \psi \circ k(f(x))} \mid \xi \in \Delta\} &\text{by continuity of }\circ\\
    &= {(\sup \{ \orch(\phi, \psi, f)(\xi) \mid \xi \in \Delta \})}_k
  \end{align*}
  and so \(\orch(\phi, \psi, f)\) is continuous.
\end{proof}

\begin{lemma}
  \(\orch\) is a functor from \(\wstarcat \times \wstarcat^{\text{op}} \times \dcpo \to \dcpo\).
\end{lemma}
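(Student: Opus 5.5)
Since the preceding lemma already shows that each \(\orch(\phi, \psi, f)\) is a well-defined Scott-continuous map \(\orch(\csaa, \csab, X) \to \orch(\csaa', \csab', Y)\), and \cref{lem:orch-dcpo} shows the objects are dcpos, all that remains is preservation of identities and of composition. Both are equalities of orchestras, and orchestras are equal exactly when their components agree. So the plan is to fix an arbitrary \wstar-algebra \(\csak\) and continuation \(k\), and check the two equations componentwise by unfolding the defining formula
\[\orch(\phi, \psi, f)(\xi)_k = \phi \circ \xi_{\lambda x.\psi \circ k(f(x))}.\]

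For identities, take \(\phi = \id_\csaa\), \(\psi = \id_\csab\) and \(f = \id_X\). Then \(\orch(\id, \id, \id)(\xi)_k = \id_\csaa \circ \xi_{\lambda x. \id_\csab \circ k(x)}\). The continuation \(\lambda x. \id_\csab \circ k(x)\) is literally the function \(k\), so this equals \(\xi_k\), as required.

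For composition, take \(\phi \in \wstarcat(\csaa, \csaa')\), \(\phi' \in \wstarcat(\csaa', \csaa'')\), \(\psi \in \wstarcat(\csab', \csab)\), \(\psi' \in \wstarcat(\csab'', \csab')\), \(f \in \dcpo(X,Y)\) and \(g \in \dcpo(Y,Z)\). The composite in \(\wstarcat \times \wstarcat^{\text{op}} \times \dcpo\) is \((\phi' \circ \phi, \psi \circ \psi', g \circ f)\); the contravariance in the second slot is what makes \(\psi \circ \psi'\) the right order. For \(k \in \dcpo(Z, \wstarcat(\csak, \csab''))\) we then compute
\[\orch(\phi', \psi', g)(\orch(\phi, \psi, f)(\xi))_k = \phi' \circ \orch(\phi, \psi, f)(\xi)_{\lambda y. \psi' \circ k(g(y))}.\]
Applying the definition again to the inner term, with continuation \(k' = \lambda y.\psi' \circ k(g(y))\), gives \(\phi' \circ \phi \circ \xi_{\lambda x. \psi \circ \psi' \circ k(g(f(x)))}\). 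By associativity of composition in \(\wstarcat\), this is exactly \(\orch(\phi' \circ \phi, \psi \circ \psi', g \circ f)(\xi)_k\).

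There is no real obstacle here; the calculation is purely syntactic. The only points needing care are the following.
\begin{itemize}
\item The intermediate continuation \(k'\) must be a legitimate argument, i.e.\ a Scott-continuous map \(Y \to \wstarcat(\csak, \csab')\). This holds because \(g\) is continuous and postcomposition-style composition with \(\psi'\) is Scott-continuous \cite[Proposition~C.13]{cho:wstarisdcpo}.
\item The variance bookkeeping must be kept straight, so that the types of the intermediate orchestras match.
\end{itemize}
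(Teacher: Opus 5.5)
Your proof is correct and matches the paper's: both verify the identity and composition laws componentwise by unfolding \(\orch(\phi,\psi,f)(\xi)_k = \phi \circ \xi_{\lambda x.\psi\circ k(f(x))}\), with the same variance bookkeeping, and both rely on the preceding lemma for well-definedness and continuity. Your remark that the intermediate continuation \(\lambda y.\psi'\circ k(g(y))\) is Scott-continuous is a detail the paper leaves implicit.
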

\begin{proof}
  We first check that \(\orch\) sends the identity to the identity:
  \[ \orch(\id_\csaa, \id_\csab, \id_X)(\xi)_k = \id_\csaa \circ \xi_{\lambda x. \id_\csab \circ k(\id_X(x))} = \xi_k\]
  Suppose \(\phi \in \wstarcat(\csaa, \csab)\), \(\phi' \in \wstarcat(\csab, \csac)\), \(\psi \in \wstarcat(\csab', \csaa')\), \(\psi' \in \wstarcat(\csac', \csab')\), \(f \in \dcpo(X, Y)\), and \(g \in \dcpo(Y, Z)\). Then:
  \begin{equation*}
    \orch(\phi', \psi', g)(\orch(\phi, \psi, f)(\xi))_k
    = \phi' \circ \phi \circ \xi_{\lambda x. \psi \circ \psi' \circ k(g(f(x)))}
    = \orch(\phi' \circ \phi, \psi \circ \psi', g \circ f)(\xi)_k
  \end{equation*}
  for all \(\xi\) and \(k\), and so the functor is compatible with composition.
\end{proof}

\subsection{The unit is a natural transformation}
\label{sec:unit-natur-transf}

We next show that the unit \(\eta\) is a natural transformation, whose components are morphisms \( X \to \orch(\csaa, \csaa, X) \) in \(\dcpo\)

\begin{lemma}
  The unit \(\eta\) is well-defined, that is for \(\csaa \in \wstarcat\) and dcpo \(X\) that:
  \[\eta_{\csaa X}(x) \in \orch(\csaa,\csaa, X) \]
  for \(x \in X\), and that \(\eta_{\csaa X}\) is continuous.
\end{lemma}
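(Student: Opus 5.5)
The plan is to check the three defining conditions of \cref{def:quantum-orchestra} for \(\eta_{\csaa X}(x)\) directly, and then to check Scott-continuity of \(\eta_{\csaa X}\) itself. Since \(\eta_{\csaa X}(x)_k = k(x)\) is just evaluation at \(x\), each condition should reduce to the fact that suprema and subconvex combinations in \(\dcpo(X, \wstarcat(\csak, \csaa))\) are computed pointwise. Alternatively, one could note that \(\eta_{\csaa X}(x) = \delta(x, \id_\csaa)\) and appeal to the verification already given for Dirac quantum orchestras. I would present the direct argument and remark on this coincidence.

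\textbf{Well-definedness.} First, \emph{continuity}: for a directed \(K \subseteq \dcpo(X, \wstarcat(\csak, \csaa))\), the supremum is pointwise, so \(\eta_{\csaa X}(x)_{\sup K} = (\sup K)(x) = \sup\{k(x) \mid k \in K\}\). Second, \emph{subconvexity}: \(\eta_{\csaa X}(x)_{\lambda x'.\theta k(x') + \rho k'(x')} = \theta k(x) + \rho k'(x)\), immediately. Third, \emph{compositionality}: \(\eta_{\csaa X}(x)_{\lambda x'.k(x')\circ\chi} = k(x)\circ\chi\).

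The only point needing care is that suprema in the function dcpo \(\dcpo(X, \wstarcat(\csak, \csaa))\) are pointwise. The same holds for subconvex combinations: \(\lambda x'.\theta k(x') + \rho k'(x')\) must again be Scott-continuous, which follows because scalar multiplication and addition are Scott-continuous on \(\wstarcat(\csak,\csaa)\), as argued in \cref{lem:orch-dcpo}. This is standard, and I would simply cite it.

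\textbf{Continuity of \(\eta_{\csaa X}\).} Take a directed \(D \subseteq X\). For every \(\csak\) and every \(k\), Scott-continuity of \(k\) gives
\[\eta_{\csaa X}(\sup D)_k = k(\sup D) = \sup\{k(d) \mid d \in D\} = \sup\{\eta_{\csaa X}(d)_k \mid d \in D\}.\]
The set \(\{\eta_{\csaa X}(d) \mid d \in D\}\) is directed, since \(\eta_{\csaa X}\) is monotone by the same computation using monotonicity of each \(k\). Its supremum is computed componentwise, as in \cref{lem:orch-dcpo}, so \(\eta_{\csaa X}(\sup D) = \sup \eta_{\csaa X}[D]\).

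I expect no real obstacle here. The mildly subtle step is the last one: it uses that the orchestra order is defined by testing against \emph{all} continuous continuations \(k\). This is precisely what makes \(\eta\) continuous, in contrast to the pointwise instrument order criticised earlier in the paper.
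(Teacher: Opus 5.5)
Your proposal is correct and matches the paper's proof. The paper gets well-definedness from the shortcut you mention, namely that \(\eta_{\csaa X}(x) = \delta(x, \id_\csaa)\) is a Dirac orchestra; your direct check of the three conditions is that same verification with \(\Phi = \id_\csaa\). Continuity is proved by exactly your computation \(\eta_{\csaa X}(\sup D)_k = k(\sup D) = \sup\{k(d) \mid d \in D\}\), and your observation that \(\eta_{\csaa X}[D]\) is directed, by monotonicity of each \(k\), fills in a detail the paper leaves implicit.
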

\begin{proof}
  As \(\eta_{\csaa X}(x)\) is a Dirac quantum orchestra for each \(x\), \(\eta_{\csaa X}\) is a  well-defined function from \(X\) to \(\orch(\csaa, \csaa, X)\), so it remains to show it is continuous. Let \(D \subseteq X\) be directed, then:
  \[ \eta_{\csaa X}(\sup D)_k = k(\sup D) = \sup \{k(x) \mid x \in D \} = \sup \{\eta_{\csaa X}(x) \mid x \in D\}\]
  as required, where the second equality is from the continuity of \(k\).
\end{proof}

\begin{lemma}
  The unit \(\eta\) is a natural transformation.
\end{lemma}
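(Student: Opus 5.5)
We must show that \(\eta\) is natural, treating it as a family \(\eta_{\csaa X} : X \to \orch(\csaa, \csaa, X)\) indexed by the parameter \(\csaa\) and the dcpo \(X\). For a parameterised monad in the style of \citet{atkeyParameterisedNotionsComputation2009}, this means two things. First, \(\eta\) is natural in \(X\) in the ordinary sense. Second, \(\eta\) is dinatural in the parameter \(\csaa\), since \(\csaa\) appears both covariantly and contravariantly in \(\orch(\csaa, \csaa, X)\). We will check both conditions by direct calculation on components. Both sides of each equation are orchestras, so it suffices to fix an arbitrary \(\csak\) and an arbitrary continuation \(k\) of the appropriate type, and compare the components indexed by \(k\).

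For naturality in \(X\), fix \(f \in \dcpo(X, Y)\). We must show \(\orch(\id_\csaa, \id_\csaa, f) \circ \eta_{\csaa X} = \eta_{\csaa Y} \circ f\). Take \(x \in X\) and \(k \in \dcpo(Y, \wstarcat(\csak, \csaa))\). Unfolding the definition of the functor on morphisms and then that of the unit gives
\[ \orch(\id_\csaa, \id_\csaa, f)(\eta_{\csaa X}(x))_k = \id_\csaa \circ \eta_{\csaa X}(x)_{\lambda x'. \id_\csaa \circ k(f(x'))} = k(f(x)) = \eta_{\csaa Y}(f(x))_k. \]
Alternatively, this is exactly the Dirac computation already recorded in the paper, \(\orch(\phi,\psi,f)(\delta(x,\Phi)) = \delta(f(x), \phi\circ\Phi\circ\psi)\), applied to \(\eta_{\csaa X}(x) = \delta(x, \id_\csaa)\).

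For dinaturality in the parameter, fix \(\phi \in \wstarcat(\csaa, \csaa')\). We must show that the two composites \(X \to \orch(\csaa', \csaa, X)\) agree:
\[ \orch(\phi, \id_\csaa, \id_X) \circ \eta_{\csaa X} = \orch(\id_{\csaa'}, \phi, \id_X) \circ \eta_{\csaa' X}. \]
Fix \(x \in X\) and \(k \in \dcpo(X, \wstarcat(\csak, \csaa))\). By the same Dirac formula, the left-hand side evaluated at \(x\) has \(k\)-component \(\phi \circ k(x)\). The right-hand side evaluated at \(x\) has \(k\)-component \(\eta_{\csaa' X}(x)_{\lambda x'. \phi \circ k(x')} = \phi \circ k(x)\). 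The two agree, so equivalently both composites equal \(\delta(x, \phi)\).

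The only real obstacle is bookkeeping. We must state the correct (di)naturality condition for a \(\wstarcat^{\text{op}}\)-parameterised unit, and be careful with the variances of \(\orch\): covariant in the first \wstar-argument and contravariant in the second, as fixed in the functoriality lemma. We also need every continuation \(k\) to land in the right hom-set. Once the condition is written down with matching types, each equation is a one-line unfolding of the definitions of \(\eta\) and \(\orch\) on morphisms.
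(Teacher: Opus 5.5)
Your proof is correct and follows essentially the same route as the paper: naturality in \(X\) and dinaturality in the \wstar-parameter, each verified by unfolding the definitions of \(\eta\) and of \(\orch\) on morphisms at an arbitrary continuation \(k\). The differences are cosmetic. You take \(\phi \in \wstarcat(\csaa, \csaa')\) where the paper takes \(\phi \in \wstarcat(\csab, \csaa)\). You also mention the Dirac-formula shortcut, which the paper uses only in its finite-instrument appendix and not here.
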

\begin{proof}
  To show naturality of \(\eta\), we need naturality in \(X\) and dinaturality in \(\csaa\). For the first we require the following to commute:
    \[
    \begin{tikzcd}[column sep=4em]
      X & Y \\
      {\orch(\csaa, \csaa, X)} & {\orch(\csaa, \csaa, Y)}
      \arrow["f", from=1-1, to=1-2]
      \arrow["{\eta_{\csaa X}}"', from=1-1, to=2-1]
      \arrow["{\eta_{\csaa Y}}", from=1-2, to=2-2]
      \arrow["{\orch(\csaa, \csaa, f)}", from=2-1, to=2-2]
    \end{tikzcd}
  \]
  for \(f \in \dcpo(X, Y)\). We calculate:
  \[ \eta_{\csaa Y}(f(x))_k = k(f(x)) = \eta_{\csaa X}(x)_{\lambda x.k(f(x))} = Q(\csaa, \csaa, f)(\eta_{\csaa X}(x))_k\]
  for all \(x\) and \(k\). For \(\phi : \wstarcat(\csab, \csaa)\), the dinaturality condition is the commutativity of:
  \[
    \begin{tikzcd}[row sep = tiny]
      & {\orch(\csaa,\csaa, X)} & \\
      X && {\orch(\csaa, \csab, X)} \\
      & {Q(\csab, \csab, X)}
      \arrow["{\orch(\csaa, \phi, X)}", from=1-2, to=2-3]
      \arrow["{\eta_{\csaa X}}", from=2-1, to=1-2]
      \arrow["{\eta_{\csab X}}"', from=2-1, to=3-2]
      \arrow["{\orch(\phi, \csab, X)}"', from=3-2, to=2-3]
    \end{tikzcd}
  \]
  By again applying both sides to arbitrary \(x\) and \(k\) and expanding definitions, we get:
  \[ \orch(\csaa, \phi, X)(\eta_{\csaa X}(x))_k = \phi \circ k(x) = \orch(\phi, \csab, X)(\eta_{\csab X}(x))_k \]
  Hence, \(\eta\) is natural.
\end{proof}

\subsection{The multiplication is a natural transformation}
\label{sec:mult-natur-transf}

Similarly to the unit, we show that the multiplication is a well-defined natural transformation in two parts, first showing that each component is a well-defined dcpo morphism:
\[ \orch(\csaa, \csab, \orch(\csab, \csac, X)) \to \orch(\csaa, \csac, X) \]
and secondly showing that the required naturality squares hold.

\begin{lemma}
  The multiplication \(\mu\) is well-defined: for \(\csaa, \csab, \csac \in \wstarcat\) and dcpo \(X\), we have that:
  \[ \mu_{\csaa, \csab, \csac, X}(\Xi) \in \orch(\csaa, \csac, X)\]
  for \(\Xi \in \orch(\csaa, \csab, \orch(\csab, \csac, X))\), and the function \(\mu_{\csaa, \csab, \csac, X}\) is continuous.
\end{lemma}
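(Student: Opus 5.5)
The plan is to reduce every property of \(\mu_{\csaa, \csab, \csac, X}(\Xi)\) to the matching property of \(\Xi\). The reduction goes through the evaluation maps \(\ev_k : \orch(\csab, \csac, X) \to \wstarcat(\csak, \csab)\), \(\ev_k(\xi) = \xi_k\), for each \(\csak\) and each \(k \in \dcpo(X, \wstarcat(\csak, \csac))\).

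The first step, which is the only non-bookkeeping point, is to check that \(\ev_k\) is a legitimate continuation for \(\Xi\), i.e.\ that \(\ev_k \in \dcpo(\orch(\csab, \csac, X), \wstarcat(\csak, \csab))\). Only then does \(\Xi_{\ev_k} = \Xi_{\lambda \xi. \xi_k}\) make sense. This should be immediate from \cref{lem:orch-dcpo}: the order and directed suprema on \(\orch(\csab, \csac, X)\) are defined componentwise. Hence \(\ev_k(\sup \Delta) = (\sup \Delta)_k = \sup\{\xi_k \mid \xi \in \Delta\}\), and monotonicity follows likewise. Throughout I will use the standard fact that directed suprema in the function dcpo \(\dcpo(Y, \wstarcat(\csak, \csab))\) are computed pointwise. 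I will also use the facts from \cref{lem:orch-dcpo} that positive scalar multiplication, addition and composition in \(\wstarcat\) are Scott-continuous.

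With this in hand, I verify the three orchestra axioms by lifting identities about \(\ev\) through \(\Xi\).
\begin{itemize}
\item \textbf{Continuity.} For directed \(K \subseteq \dcpo(X, \wstarcat(\csak, \csac))\), continuity of each \(\xi\) gives \(\ev_{\sup K}(\xi) = \sup_{k \in K} \ev_k(\xi)\). So \(\ev_{\sup K} = \sup_{k\in K} \ev_k\) pointwise, and the family \(\{\ev_k\}_{k\in K}\) is directed since \(k \mapsto \ev_k\) is monotone. Continuity of \(\Xi\) then yields \(\mu(\Xi)_{\sup K} = \sup_{k\in K} \mu(\Xi)_k\).
\item \textbf{Subconvexity.} Subconvexity of each \(\xi\) gives \(\ev_{\theta k + \rho k'} = \lambda \xi.\, \theta \ev_k(\xi) + \rho \ev_{k'}(\xi)\). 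Subconvexity of \(\Xi\) then gives the required equation.
\item \textbf{Compositionality.} For \(\chi \in \wstarcat(\csak', \csak)\), compositionality of each \(\xi\) gives \(\ev_{\lambda x. k(x) \circ \chi} = \lambda \xi.\, \ev_k(\xi) \circ \chi\). Compositionality of \(\Xi\) then gives \(\mu(\Xi)_{\lambda x.k(x)\circ\chi} = \mu(\Xi)_k \circ \chi\).
\end{itemize}
In each case the work is just rewriting the continuation fed to \(\Xi\) into the form required by the corresponding axiom of \(\Xi\).

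Finally, continuity of \(\mu_{\csaa, \csab, \csac, X}\) itself follows because suprema of orchestras are componentwise. For directed \(\Delta \subseteq \orch(\csaa, \csab, \orch(\csab, \csac, X))\) we get \(\mu(\sup \Delta)_k = (\sup\Delta)_{\ev_k} = \sup_{\Xi \in \Delta} \Xi_{\ev_k} = (\sup_{\Xi\in\Delta} \mu(\Xi))_k\). Monotonicity of \(\mu\) follows the same way, so the image of \(\Delta\) is directed.

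I expect the main obstacle to be the first step: making sure that \(\ev_k\) lands in the right hom-dcpo and is Scott-continuous. The continuity axiom for \(\mu(\Xi)\) also needs care, since one must identify \(\ev_{\sup K}\) with a directed supremum in the function space rather than merely a pointwise limit. Both points rest on the componentwise definition of the order in \cref{def:quantum-orchestra}.
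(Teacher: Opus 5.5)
Your proposal is correct and follows essentially the same route as the paper: each orchestra axiom for \(\mu(\Xi)\) is obtained by rewriting the continuation \(\lambda \xi.\xi_k\) and applying the corresponding axiom of \(\Xi\), and continuity of \(\mu\) follows because suprema of orchestras are computed componentwise. Your explicit checks that \(\ev_k\) is Scott-continuous (so that \(\Xi_{\ev_k}\) is defined at all) and that \(\{\ev_k\}_{k \in K}\) is directed fill in steps the paper leaves implicit.
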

\begin{proof}
  Suppose \(\Xi \in \orch(\csaa, \csab, \orch(\csab, \csac, X))\). Then we must show \(\mu_{\csaa, \csab, \csac, X}(\Xi) \in \orch(\csaa, \csac, X)\). For continuity, if \(K \subseteq \dcpo(X, \wstarcat(\csak, \csab))\) is directed, then:
  \begin{align*}
    \mu_{\csaa, \csab, \csac, X}(\Xi)_{\sup K}
    &= \Xi_{\lambda \xi. \xi_{\sup K}}\\
    &= \Xi_{\sup \{\lambda \xi. \xi_k \mid k \in K\}}&\text{by continuity of }\xi\\
    &= \sup \{ \Xi_{\lambda \xi. \xi_k} \mid k \in K\}&\text{by continuity of }\Xi\\
    &= \sup \{ \mu_{\csaa, \csab, \csac, X}(\Xi)_k \mid k \in K\}
  \end{align*}
  as required. Subconvexity holds similarly. For compositionality, if \(\chi \in \wstarcat(\csak', \csak)\), then:
  \[\mu_{\csaa, \csab, \csac, X}(\Xi)_{\lambda x. k(x) \circ \chi} = \Xi_{\lambda \xi.\xi_{\lambda x. k(x) \circ \chi}} = \Xi_{\lambda \xi. \xi_k \circ \chi} = \Xi_{\lambda \xi. \xi_k} \circ \chi = \mu_{\csaa, \csab, \csac, X}(\Xi)_k \circ \chi\]
  as required, using the compositionality of each \(\xi\) and \(\Xi\). Hence, \(\mu\) is a well-defined function. In order to show that it is a continuous function we let \(\Delta \subseteq \orch(\csaa, \csab, \orch(\csab, \csac, X))\) be directed. Then:
  \begin{align*}
    \mu_{\csaa, \csab, \csac, X}(\sup \Delta)_k
    = (\sup \Delta)_{\lambda \xi. \xi_k}
    = \sup \{\Xi_{\lambda \xi. \xi_k} \mid \Xi \in \Delta\}
    = (\sup \{\mu_{\csaa, \csab, \csac, X}(\Xi) \mid \Xi \in \Delta\})_k
  \end{align*}
  for any \(k\), completing the proof.
\end{proof}

\begin{lemma}
  The multiplication \(\mu\) is a natural transformation.
\end{lemma}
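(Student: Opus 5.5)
We must show naturality of \(\mu_{\csaa,\csab,\csac,X}\) in each of its four parameters: naturality in \(X\) (ordinary naturality in \(\dcpo\)), naturality in \(\csaa\) (covariant, via postcomposition \(\phi \in \wstarcat(\csaa,\csaa')\)), naturality in \(\csac\) (contravariant, via \(\psi \in \wstarcat(\csac',\csac)\)), and dinaturality in the middle parameter \(\csab\), which appears both covariantly and contravariantly, exactly as in the parameterised monads of \citet{atkeyParameterisedNotionsComputation2009}. As in the proof for the unit, every condition is an equation between orchestras. We verify each by applying both sides to an arbitrary \(\Xi\) and an arbitrary continuation \(k\), and unfolding the definitions of \(\orch\) on morphisms and of \(\mu\). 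Since orchestras are determined by their components, pointwise equality suffices.

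For naturality in \(X\), take \(f \in \dcpo(X,Y)\). Unfolding gives
\[\mu(\orch(\csaa,\csab,\orch(\csab,\csac,f))(\Xi))_k = \Xi_{\lambda \xi.\, \orch(\csab,\csac,f)(\xi)_k} = \Xi_{\lambda\xi.\,\xi_{\lambda x.k(f(x))}},\]
which equals \(\orch(\csaa,\csac,f)(\mu(\Xi))_k = \mu(\Xi)_{\lambda x. k(f(x))}\).

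For \(\phi\) on \(\csaa\), both sides reduce to \(\phi\circ\Xi_{\lambda\xi.\xi_k}\). For \(\psi\) on \(\csac\), both sides reduce to \(\Xi_{\lambda\xi.\xi_{\lambda x.\psi\circ k(x)}}\). Here the left side uses \(\orch(\csab,\psi,X)\) acting inside, so the continuation passed to \(\Xi\) is \(\lambda\xi.\orch(\csab,\psi,X)(\xi)_k\), which is literally the same function.

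The dinaturality in \(\csab\) is the only step needing an axiom rather than pure definition unfolding, and I expect it to be the main, though still mild, obstacle. Take \(\phi \in \wstarcat(\csab',\csab)\). The two paths start from \(\orch(\csaa,\csab,\orch(\csab',\csac,X))\) and end in \(\orch(\csaa,\csac,X)\).

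One path first applies \(\orch(\csaa,\csab,\orch(\phi,\csac,X))\) and then \(\mu_{\csaa,\csab,\csac}\). This yields \(\Xi_{\lambda\xi.\,\phi\circ\xi_k}\).

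The other path first applies \(\orch(\csaa,\phi,\orch(\csab',\csac,X))\) and then \(\mu_{\csaa,\csab',\csac}\). This yields \(\Xi_{\lambda\xi.\,\phi\circ\xi_k}\) as well, because \(\orch(\csaa,\phi,-)\) postcomposes \(\phi\) onto the continuation.

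So, provided the variances are correctly tracked, the two agree on the nose. I will carefully check the typing here, since an off-by-variance error would instead leave a term like \(\Xi_{\lambda \xi.\xi_k}\circ\phi\). That would require the compositionality axiom of \(\Xi\) (\(\Xi_{\lambda x.k(x)\circ\chi}=\Xi_k\circ\chi\)) to move \(\phi\), and I would invoke it in that case.

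Finally, we should note that all maps involved are continuous by the preceding lemmas, so these equations are equations in \(\dcpo\), completing the proof.
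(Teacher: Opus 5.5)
Your proof is correct and takes the paper's approach: apply both sides of each (di)naturality square to an arbitrary \(\Xi\) and continuation \(k\), then unfold definitions. The only cosmetic difference is that the paper checks naturality in \(\csaa\), \(\csac\) and \(X\) at once via the single map \(\orch(\phi, \csab, \orch(\csab, \psi, f))\). As you ultimately found, dinaturality in \(\csab\) also holds on the nose, with both sides equal to \(\Xi_{\lambda \xi.\, \phi \circ \xi_k}\), so the compositionality axiom is never needed.
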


\begin{proof}
  To show \(\mu\) is natural we need dinaturality in \(\csab\) and naturality in \(\csaa\), \(\csac\), and \(X\). We tackle the naturality first, for which we need tho following to commute:
  \[
    \begin{tikzcd}[column sep=7em]
      {\orch(\csaa, \csab, \orch(\csab, \csac, X))} & {\orch(\csaa', \csab, \orch(\csab, \csac', Y))} \\
      {\orch(\csaa, \csac, X)} & {\orch(\csaa', \csac', Y)}
      \arrow["{\orch(\phi, \csab, \orch(\csab, \psi, f))}", from=1-1, to=1-2]
      \arrow["{\mu_{\csaa, \csab, \csac, X}}"', from=1-1, to=2-1]
      \arrow["{\mu_{\csaa', \csab, \csac', Y}}", from=1-2, to=2-2]
      \arrow["{\orch(\phi, \psi, f)}", from=2-1, to=2-2]
    \end{tikzcd}
  \]
  for \(\phi : \wstarcat(\csaa, \csaa')\), \(\psi \in \wstarcat(\csac', \csac)\), and \(f : X \to Y\). For each \(\Xi\) and \(k\), we have:
  \begin{align*}
    \mu_{\csaa', \csab, \csac', Y} (\orch(\phi, \csab, \orch(\csab, \psi, f)) (\Xi))_k
    = \phi \circ \Xi_{\lambda \xi. \xi_{\lambda x. \psi \circ k(x)}}
    = \orch(\phi, \psi, f)(\mu_{\csaa, \csab, \csac, X}(\Xi))_k
  \end{align*}
  simply by expanding definitions on each side.
  For dinaturality, let \(\phi \in \wstarcat(\csab', \csab)\). Then the following diagram must commute:
  \[
    \begin{tikzcd}[row sep=small]
      & {\orch(\csaa, \csab',\orch(\csab',\csac, X))} & \\
      {\orch(\csaa, \csab, \orch(\csab', \csac, X))} && {\orch(\csaa, \csac, X)} \\
      & {\orch(\csaa, \csab, \orch(\csab, \csac, X))}
      \arrow["{\mu_{\csaa, \csab', \csac, X}}", from=1-2, to=2-3]
      \arrow["{\orch(\csaa, \phi, \orch(\csab', \csac, X))}", from=2-1, to=1-2]
      \arrow["{\orch(\csaa, \csab, \orch(\phi, \csac, X))}"', from=2-1, to=3-2]
      \arrow["{\mu_{\csaa, \csab, \csac, X}}"', from=3-2, to=2-3]
    \end{tikzcd}
  \]
  Then again by expanding definitions we get:
  \begin{align*}
    \mu_{\csaa, \csab, \csac, X}(\orch(\csaa, \csab, \orch(\phi, \csac, X))(\Xi))_k
    = \Xi_{\lambda \xi. \phi \circ \xi_k}
    = \mu_{\csaa, \csab', \csac, X}(\orch(\csaa, \phi, \orch(\csab', \csac, X))(\Xi))_k
  \end{align*}
   for each \(\Xi\) and \(k\) and so \(\mu\) is natural as required.
\end{proof}

\subsection{\texorpdfstring{\(\bm{\orch}\)}{Q} is a monad}
\label{sec:boldsymb-orch-monad}

We can now show that \(\orch\) does indeed form a monad.

\begin{theorem}
  \label{thm:is-a-monad}
  \((\orch, \eta, \mu)\) is a \(\wstarcat^{\text{op}}\)-parameterised monad on \(\dcpo\).
\end{theorem}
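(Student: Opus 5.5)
Since the preceding lemmas already establish that \(\orch\) is a functor \(\wstarcat \times \wstarcat^{\text{op}} \times \dcpo \to \dcpo\), that \(\eta\) is a natural transformation (natural in \(X\), dinatural in \(\csaa\)), and that \(\mu\) is a natural transformation (natural in \(\csaa,\csac,X\), dinatural in \(\csab\)), all that remains is to verify the three coherence laws of a parameterised monad in the sense of \citet{atkeyParameterisedNotionsComputation2009}. These are the two unit laws
\[ \mu_{\csaa,\csaa,\csab,X}\circ \eta_{\csaa,\orch(\csaa,\csab,X)} = \id, \qquad \mu_{\csaa,\csab,\csab,X}\circ \orch(\csaa,\csab,\eta_{\csab X}) = \id, \]
and the associativity law
\[ \mu_{\csaa,\csab,\csad,X}\circ \orch(\csaa,\csab,\mu_{\csab,\csac,\csad,X}) = \mu_{\csaa,\csac,\csad,X}\circ \mu_{\csaa,\csab,\csac,\orch(\csac,\csad,X)} \]
as maps \(\orch(\csaa,\csab,\orch(\csab,\csac,\orch(\csac,\csad,X)))\to \orch(\csaa,\csad,X)\). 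Because equality of orchestras is componentwise, I will fix an arbitrary \(\csak\) and continuation \(k\in\dcpo(X,\wstarcat(\csak,\cdot))\), and check each law by unfolding definitions at \(k\). The argument mirrors the proof that the continuation monad is a monad.

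For the left unit law, take \(\xi\in\orch(\csaa,\csab,X)\). Then \(\mu(\eta(\xi))_k = \eta(\xi)_{\lambda \xi'.\xi'_k} = (\lambda\xi'.\xi'_k)(\xi) = \xi_k\). For the right unit law, \(\mu(\orch(\csaa,\csab,\eta)(\xi))_k = \orch(\csaa,\csab,\eta)(\xi)_{\lambda \zeta.\zeta_k} = \xi_{\lambda x.\eta(x)_k} = \xi_{\lambda x.k(x)} = \xi_k\), where the identity components of the functor action disappear. For associativity, take \(\Theta\). The left side at \(k\) unfolds to \(\orch(\csaa,\csab,\mu)(\Theta)_{\lambda \xi.\xi_k} = \Theta_{\lambda\Xi.\mu(\Xi)_k} = \Theta_{\lambda \Xi.\Xi_{\lambda\xi.\xi_k}}\). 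The right side at \(k\) unfolds to \(\mu(\Theta)_{\lambda\xi.\xi_k} = \Theta_{\lambda \Xi.\Xi_{\lambda\xi.\xi_k}}\), so the two sides coincide.

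The only point requiring care is well-typedness. Each continuation used above, such as \(\lambda\xi'.\xi'_k\), \(\lambda x.\eta(x)_k\) and \(\lambda\Xi.\Xi_{\lambda\xi.\xi_k}\), must be a genuine Scott-continuous map into the appropriate hom-dcpo. The second holds because it literally equals \(k\). The first and third are evaluation-at-\(k\) maps, and their continuity follows from suprema in \(\orch\) being computed pointwise (\cref{lem:orch-dcpo}). Exactly this fact was already used in the definition of \(\mu\) and in the proof that \(\mu\) is well-defined, so it can be cited rather than reproved.

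I expect the main obstacle to be bookkeeping rather than mathematics. One must make sure the parameter indices line up in each law, and that these three equations together with the (di)naturality already proven are exactly the axioms of a \(\wstarcat^{\text{op}}\)-parameterised monad. No use of subconvexity or compositionality is needed in these equations.
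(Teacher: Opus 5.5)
Your proposal is correct and follows the paper's proof essentially line for line: both unit laws and the associativity law are checked by unfolding the definitions at an arbitrary continuation \(k\), and both sides of associativity reduce to \(\Theta_{\lambda \Xi. \Xi_{\lambda \xi. \xi_k}}\). You add two small things the paper does not: an explicit remark that the evaluation continuations are Scott-continuous because suprema in \(\orch\) are pointwise (the paper leaves this implicit), and the correct index \(\eta_{\csab X}\) in the unit law \(\mu \circ \orch(\eta) = \id\), where the paper writes \(\eta_{\csaa X}\).
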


\begin{proof}
  We check each monad law in turn. The first unitality law states that:
  \[\mu_{\csaa, \csab, \csab, X} \circ \orch(\csaa, \csab, \eta_{\csaa X}) = \id_{\orch(\csaa, \csab, X)}\]
  Taking \(\xi \in \orch(\csaa, \csab, X)\) and \(k \in \dcpo(X, \wstarcat(\csak, \csab))\):
  \begin{align*}
    \mu_{\csaa, \csab, \csab, X}(\orch(\csaa, \csab, \eta_{\csaa X})(\xi))_k
    = \orch(\csaa, \csab, \eta_{\csaa X})(\xi))_{\lambda \xi'. \xi'_k}
    = \xi_{\lambda x. \eta_{\csaa X}(x)_k}
    = \xi_k
  \end{align*}
  The second unitality law is \(\mu_{\csaa, \csaa, \csab, X} \circ \eta_{\csaa \orch(\csaa, \csab, X)} = \id_{\orch(\csaa, \csab, X)}\). Taking the same \(\xi\) and \(k\) we have:
  \begin{align*}
    \mu_{\csaa, \csaa, \csab, X}(\eta_{\csaa \orch(\csaa, \csab, X)}(\xi))_k
    = \eta_{\csaa \orch(\csaa, \csab X)}(\xi)_{\lambda \xi'. \xi'_k}
    = \xi_k
  \end{align*}
  The associativity monad law is the commutativity of the following diagram:
  \[
    \begin{tikzcd}[column sep = 8em]
      {\orch(\csaa, \csab, \orch(\csab, \csac, \orch(\csac, \csad, X)))} & {\orch(\csaa, \csac, \orch(\csac, \csad, X))} \\
      {\orch(\csaa, \csab, \orch(\csab, \csad, X))} & {\orch(\csaa, \csad, X)}
      \arrow["{\mu_{\csaa, \csab, \csac, \orch(\csac, \csad, X)}}", from=1-1, to=1-2]
      \arrow["{\orch(\csaa, \csab, \mu_{\csab, \csac, \csad, X})}"', from=1-1, to=2-1]
      \arrow["{\mu_{\csaa, \csac, \csad, X}}", from=1-2, to=2-2]
      \arrow["{\mu_{\csaa, \csab, \csad, X}}", from=2-1, to=2-2]
    \end{tikzcd}
  \]
  By expanding definitions on both sides we have:
    \begin{align*}
    \mu_{\csaa, \csab, \csad, X}(\orch(\csaa, \csab, \mu_{\csab, \csac, \csad, X})(\aleph))
    = \aleph_{\lambda \Xi. \Xi_{\lambda \xi. \xi_k}}
    = \mu_{\csaa, \csac, \csad, X}(\mu_{\csaa, \csab, \csac, \orch(\csac, \csad, X)}(\aleph))_k
  \end{align*}
  for \(\aleph \in \orch(\csaa, \csab, \orch(\csab, \csac, \orch(\csac, \csad, X)))\) and \(k \in \dcpo(X, \wstarcat(\csak, \csad))\), and so the square commutes. With these three laws proven, we have shown that \(\orch\) is a monad.
\end{proof}

\subsection{Strength of the monad}
\label{sec:strength-monad}

Finally, we show that the monad we have constructed is strong. Much like \cref{sec:unit-natur-transf,sec:mult-natur-transf}, we must first show that \(\tau\) is well-defined before showing that it is a natural transformation.

\begin{lemma}\label{lemma:orchestra-strength}
  The strength \(\tau\) is well-defined, that is:
  \[ \tau_{X, \csaa, \csab, Y} \in \dcpo(X \times \orch(\csaa, \csab, Y), \orch(\csaa, \csab, X \times Y))\]
  for dcpos \(X\) and \(Y\), and \(\csaa, \csab \in \wstarcat\).
\end{lemma}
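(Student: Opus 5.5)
We have to show two things, mirroring the proofs for \(\orch(\phi,\psi,f)\), \(\eta\) and \(\mu\). First, for fixed \((x,\xi)\), the family \(\tau(x,\xi)\) is a quantum orchestra in \(\orch(\csaa,\csab,X\times Y)\). Second, the resulting function \(X \times \orch(\csaa,\csab,Y) \to \orch(\csaa,\csab,X\times Y)\) is Scott-continuous.

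\textbf{Well-definedness.} A preliminary point is that for \(k \in \dcpo(X\times Y,\wstarcat(\csak,\csab))\) the partial application \(\lambda y.k(x,y)\) is again Scott-continuous. This holds because \(y \mapsto (x,y)\) is continuous into the product dcpo, where suprema are computed componentwise. So \(\tau(x,\xi)_k = \xi_{\lambda y.k(x,y)}\) is defined. The three conditions then reduce to those of \(\xi\), since the operation \(k \mapsto \lambda y.k(x,y)\) commutes with everything involved.
\begin{itemize}
\item \textbf{Continuity:} directed suprema of continuations are pointwise, so \(\lambda y.(\sup K)(x,y) = \sup\{\lambda y.k(x,y)\mid k\in K\}\), and continuity of \(\xi\) finishes it.
\item \textbf{Subconvexity:} subconvex combinations are pointwise, so \(\lambda y.(\theta k+\rho k')(x,y)=\theta(\lambda y.k(x,y))+\rho(\lambda y.k'(x,y))\), and subconvexity of \(\xi\) finishes it.
\item \textbf{Compositionality:} \(\lambda y.(k(x,y)\circ\chi)\) is exactly the precomposition of \(\lambda y.k(x,y)\) with \(\chi\), so compositionality of \(\xi\) applies directly.
\end{itemize}

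\textbf{Continuity of \(\tau\).} Since \(X\times\orch(\csaa,\csab,Y)\) is a product of dcpos, it suffices to show separate continuity in each argument; joint continuity then follows by the standard fact for dcpo products.

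Continuity in \(\xi\) for fixed \(x\) is immediate. Suprema in \(\orch\) are computed componentwise (\cref{lem:orch-dcpo}), so \(\tau(x,\sup\Delta)_k = \sup\{\xi_{\lambda y.k(x,y)}\mid\xi\in\Delta\}\).

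Continuity in \(x\) for fixed \(\xi\) is the step I expect to need the most care. Take a directed \(D\subseteq X\). Continuity of \(k\) in its first argument gives \(\lambda y.k(\sup D,y) = \sup_{x\in D}\lambda y.k(x,y)\) in the pointwise order on \(\dcpo(Y,\wstarcat(\csak,\csab))\). This family is directed, because \(k\) is monotone in \(x\). Continuity of \(\xi\) as a function of its continuation then gives \(\tau(\sup D,\xi)_k = \sup_{x\in D}\tau(x,\xi)_k\), which is the supremum in \(\orch\) by \cref{lem:orch-dcpo}. Monotonicity in both arguments should be checked along the way, since it is what makes the relevant families directed.
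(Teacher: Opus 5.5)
Your proposal is correct and follows the paper's proof essentially step for step. You verify continuity, subconvexity and compositionality of \(\tau(x,\xi)\) by reducing each to the corresponding property of \(\xi\) via the partial continuation \(\lambda y.k(x,y)\), and then obtain Scott-continuity of \(\tau\) from separate continuity in \(x\) (using continuity of \(k\) and of \(\xi\)) and in \(\xi\) (using componentwise suprema). Your remarks that the partial application is continuous, and that monotonicity makes \(\{\tau(x,\xi)\mid x\in D\}\) directed, are details the paper leaves implicit.
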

\begin{proof}
  We first show that \(\tau_{X, \csaa, \csab, Y}(x, \xi) \in \orch(\csaa, \csab, X \times Y)\) for \(x \in X\) and \(\xi \in \orch(\csaa, \csab, Y)\). Let \(K \subseteq \dcpo(X \times Y, \wstarcat(\csak, \csab))\) be directed. Then:
  \[\tau_{X, \csaa, \csab, Y}(\xi)_{\sup K} = \xi_{\lambda y. \sup K(x, y)} = \sup \{ \xi_{\lambda y. k(x, y)} \mid k \in K\} = \sup \{ \tau_{X, \csaa, \csab, Y}(\xi)_k \mid k \in K\}\]
  Subconvexity holds similarly: given \(\theta, \rho\) such that \(\theta + \rho \leq 1\) and \(k, k'\), we have:
  \begin{align*}
    \tau_{X, \csaa, \csab, Y}(x, \xi)_{\lambda z. \theta k(z) + \rho k'(z)}
    &= \xi_{\lambda y. \theta k(x, y) + \rho k'(x, y)}\\
    &= \theta \xi_{\lambda y.k(x, y)} + \rho \xi_{\lambda y. k'(x, y)}&\text{by subconvexity of }\xi\\
    &= \lambda \tau_{X, \csaa, \csab, Y}(x, \xi)_k + \rho \tau_{X, \csaa, \csab, Y}(x, \xi)_{k'}
  \end{align*}
  For compositionality, we have:
  \begin{equation*}
    \tau_{X, \csaa, \csab, Y}(x, \xi)_{\lambda z. k(z) \circ \chi} = \xi_{\lambda y. k(x, y) \circ \chi} = \xi_{\lambda y. k(x, y)} \circ \chi
    = \tau_{X, \csaa, \csab, Y}(x, \xi)_k \circ \chi
  \end{equation*}
  for \(\chi \in \wstarcat(\csak', \csak)\) and \(k \in \dcpo(X, \wstarcat(\csak, \csab))\), using the compositionality of \(\xi\). Therefore, \(\tau_{X, \csaa, \csab, Y}(\xi) \in \orch(\csaa, \csab, X \times Y)\) as required.

  To show \(\tau_{X, \csaa, \csab, Y} \in \dcpo(X \times \orch(\csaa, \csab, Y), \orch(\csaa, \csab, X \times Y))\), it remains to show that it is continuous, for which it is sufficient to show continuity in each argument separately. First let \(D \subseteq X\) be directed. Then for fixed \(\xi\) and \(k\):
  \begin{align*}
    \tau_{X, \csaa, \csab, Y}(\sup D, \xi)_k
    &= \xi_{\lambda y. k(\sup D, y)}\\
    &= \xi_{\sup \{\lambda y. k(d, y) \mid d \in D\}}&\text{by continuity of }k\\
    &= \sup \{ \xi_{\lambda y. k(d, y)} \mid d \in D\}&\text{by continuity of }\xi\\
    &= \sup \{ \tau_{X , \csaa, \csab, Y}(d, \xi) \mid d \in D\}_k
  \end{align*}
  Now let \(\Delta \subseteq \orch(\csaa, \csab, Y)\) be directed. Then for fixed \(x\) and \(k\):
  \begin{align*}
    \tau_{X, \csaa, \csab, Y}(x, \sup \Delta)_k = (\sup \Delta)_{\lambda y. k(x, y)} = \{\xi_{\lambda y. k(x, y)} \mid \xi \in \Delta\}
    &= \sup\{\tau_{X ,\csaa, \csab, Y}(x, \xi) \mid \xi \in \Delta\}_k
  \end{align*}
  by the definition of the supremum on \(\orch(\csaa, \csab, Y)\). Hence, \(\tau\) is well-defined.
\end{proof}

\begin{lemma}
  \(\tau\) is a natural transformation.
\end{lemma}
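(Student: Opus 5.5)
Naturality of \(\tau\) is checked the same way as for \(\eta\) and \(\mu\) in \cref{sec:unit-natur-transf,sec:mult-natur-transf}. By \cref{lemma:orchestra-strength}, \(\tau\) is already a well-defined family of dcpo morphisms. We therefore only need the required squares to commute. We check this by applying both composites to an arbitrary \(x \in X\), \(\xi\), and continuation \(k\), and unfolding \cref{def:quantum-orchestra}. Since the definition of \(\tau\) does not mention \(\csaa\) or \(\csab\), the squares in the \wstar-parameters should follow just as directly.

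First we prove naturality in the dcpo arguments. Let \(f \in \dcpo(X, X')\) and \(g \in \dcpo(Y, Y')\). We must show
\[ \orch(\csaa, \csab, f \times g) \circ \tau_{X, \csaa, \csab, Y} = \tau_{X', \csaa, \csab, Y'} \circ (f \times \orch(\csaa, \csab, g)). \]
Take \(k \in \dcpo(X' \times Y', \wstarcat(\csak, \csab))\). The left-hand side evaluates to \(\xi_{\lambda y. k(f(x), g(y))}\). On the right-hand side, \(\tau(f(x), \orch(\csaa,\csab,g)(\xi))_k = \orch(\csaa,\csab,g)(\xi)_{\lambda y'. k(f(x), y')}\), which also equals \(\xi_{\lambda y. k(f(x), g(y))}\).

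Next we prove naturality in \(\csaa\) and \(\csab\). Let \(\phi \in \wstarcat(\csaa, \csaa')\) and \(\psi \in \wstarcat(\csab', \csab)\). We must show
\[ \orch(\phi, \psi, X \times Y) \circ \tau_{X,\csaa,\csab,Y} = \tau_{X, \csaa', \csab', Y} \circ (X \times \orch(\phi, \psi, Y)). \]
Both sides evaluate at \(k\) to \(\phi \circ \xi_{\lambda y. \psi \circ k(x, y)}\), again by unfolding definitions. No continuity, subconvexity or compositionality arguments are needed here, because each side is a literal rewriting of the same expression.

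The only real obstacle is bookkeeping: confirming that every continuation passed to \(\xi\) is a genuine Scott-continuous map with the correct type. For example, \(\lambda y. \psi \circ k(x, y)\) is continuous because \(k\) is continuous in its second argument and composition in \(\wstarcat\) is Scott-continuous. We will also state which variance each parameter has in the parameterised setting, to be sure no dinaturality condition is required. Here \(\csaa\) and \(\csab\) appear on both sides in the same positions, so ordinary naturality suffices.

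If we want the standard strength coherence laws as well, we use the same unfolding. The unit law \(\tau(x, \eta(y)) = \eta(x, y)\) holds because both sides send \(k\) to \(k(x, y)\). The associativity law and the compatibility of \(\tau\) with \(\mu\) both reduce to \(\Xi_{\lambda \xi'. \xi'_{\lambda y. k(x, y)}}\) on each side.
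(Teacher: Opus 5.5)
Your proof is correct and is essentially the paper's argument. The paper checks a single combined square with \(f \times \orch(\phi, \psi, g)\) and \(\orch(\phi, \psi, f \times g)\), where both sides unfold to \(\phi \circ \xi_{\lambda y.\, \psi \circ k(f(x), g(y))}\), and your split into a dcpo-square and a \wstar-square is equivalent to it by functoriality of \(\orch\). One slip in your optional last paragraph: the strength--associator law involves only \(\xi \in \orch(\csaa, \csab, Z)\), and both sides reduce to \(\xi_{\lambda z.\, k(x, (y, z))}\); it is only the \(\mu\)-compatibility law that reduces to \(\Xi_{\lambda \xi'.\, \xi'_{\lambda y.\, k(x, y)}}\).
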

\begin{proof}
  Let \(f : X \to X'\), \(g : Y \to Y'\), \(\phi : \wstarcat(\csaa, \csaa')\), and \(\psi : \wstarcat (\csab', \csab)\). Then the following must commute:
  \[
    \begin{tikzcd}
      {X \times \orch(\csaa, \csab, Y)} & {X' \times \orch(\csaa', \csab', Y')} \\
      {\orch(\csaa, \csab, X \times Y)} & {\orch(\csaa', \csab', X' \times Y')}
      \arrow["{f \times \orch(\phi, \psi, g)}", far, from=1-1, to=1-2]
      \arrow["{\tau_{X, \csaa, \csab, Y}}"', from=1-1, to=2-1]
      \arrow["{\tau_{X', \csaa', \csab', Y'}}", from=1-2, to=2-2]
      \arrow["{\orch(\phi, \psi, f \times g)}", far, from=2-1, to=2-2]
    \end{tikzcd}
  \]
  For \(x \in X\), \(\xi \in \orch(\csaa, \csab, Y)\), and \(k \in \dcpo(X' \times Y', \wstarcat(\csak, \csab'))\), we get:
  \begin{align*}
    \tau_{X', \csaa', \csab', Y'}((f \times \orch(\phi, \psi, g))(x, \xi))_k
    &= \phi \circ \xi_{\lambda y. \psi \otimes \csae \circ k(f(x), g(y))}\\
    &= \orch(\phi, \psi, f \times g)(\tau_{X, \csaa, \csab, Y}(x, \xi))_k
  \end{align*}
  by expanding definitions and so \(\tau\) is a natural transformation.
\end{proof}

\begin{theorem}
  \label{thm:strong-monad}
  \((\orch, \eta, \mu, \tau)\) is a \(\wstarcat^{\text{op}}\)-parameterised strong monad on \(\dcpo\).
\end{theorem}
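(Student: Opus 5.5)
Everything except the strength axioms is already in place. \cref{thm:is-a-monad} gives the parameterised monad structure. \cref{lemma:orchestra-strength} and the subsequent lemma show that \(\tau\) is a well-defined natural transformation whose components are Scott-continuous. So the plan is to verify the four coherence laws for a strength of a parameterised monad (in the sense of \citet{atkeyParameterisedNotionsComputation2009}), with the cartesian product of \(\dcpo\) as the monoidal structure. As in the proof of \cref{thm:is-a-monad}, each law is an equation between orchestras, and orchestras are equal exactly when all their components \(k\) agree. I will therefore fix arbitrary elements and an arbitrary continuation \(k\), then unfold both sides using the definitions of \(\eta\), \(\mu\), \(\tau\) and the functorial action of \(\orch\).

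\textbf{Unit law.} The law \(\orch(\csaa,\csab,\pi_2)\circ\tau_{1,\csaa,\csab,Y} = \pi_2\) holds because both sides evaluate at \(k\) to \(\xi_{\lambda y.k(y)} = \xi_k\).

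\textbf{Associativity law.} This is the compatibility with the associator \(\alpha : (X\times Y)\times Z \cong X\times(Y\times Z)\). Both sides reduce to \(\xi_{\lambda z. k(x,(y,z))}\), so \(\tau_{X\times Y,\dots}((x,y),\xi)\) and \(\tau_{X,\dots}(x,\tau_{Y,\dots}(y,\xi))\) agree after relabelling along \(\alpha\).

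\textbf{Compatibility with \(\eta\).} Here \(\tau(x,\eta(y))_k = \eta(y)_{\lambda y'.k(x,y')} = k(x,y) = \eta((x,y))_k\).

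\textbf{Compatibility with \(\mu\).} This is the only law with nesting. It requires
\[
\tau_{X,\csaa,\csac,Y}\bigl(x,\mu(\Xi)\bigr) = \mu\Bigl(\orch(\csaa,\csab,\tau_{X,\csab,\csac,Y})\bigl(\tau_{X,\csaa,\csab,\orch(\csab,\csac,Y)}(x,\Xi)\bigr)\Bigr).
\]
Evaluated at \(k\), the left side is \(\Xi_{\lambda\xi.\xi_{\lambda y.k(x,y)}}\). On the right, the outer \(\mu\) contributes the continuation \(\lambda\zeta.\zeta_k\). Pushing this through the functor yields \(\lambda(x',\xi).\tau(x',\xi)_k = \lambda(x',\xi).\xi_{\lambda y.k(x',y)}\). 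The inner \(\tau\) then specialises \(x'\) to \(x\), giving the same component.

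The step needing the most care is this last one. The difficulty is not mathematical: we must apply each unfolding to the right continuation and check that the intermediate lambda terms are Scott-continuous, so that they are legitimate continuations. Their continuity was established in the preceding lemmas, namely the continuity of \(\tau\) in each argument and of evaluation \(\xi\mapsto\xi_k\), which follows from the pointwise definition of suprema. Everything else is definitional expansion, mirroring the continuation monad, from which the laws are inherited.
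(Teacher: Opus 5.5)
Your proposal is correct and matches the paper's proof. The paper also invokes \cref{thm:is-a-monad} and then checks the left-unitor, associator, unit and multiplication laws by evaluating both sides at an arbitrary continuation \(k\), arriving at the same components you compute, e.g.\ \(\Xi_{\lambda\xi.\xi_{\lambda y.k(x,y)}}\) for the \(\mu\)-law. Your remark on the Scott-continuity of the intermediate continuations is left implicit in the paper's earlier well-definedness lemmas.
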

\begin{proof}
  From \cref{thm:is-a-monad}, we have that \((\orch, \eta, \mu)\) is a monad. It remains to prove the appropriate monad laws for the strength.
  Letting \(1 = \{*\}\) and \(L_X : 1 \times X \to X\) be the left unitor isomorphism, we need:
  \[ \orch(\csaa, \csab, L_X) \circ \tau_{1, \csaa, \csab, X} = L_{\orch(\csaa, \csab, X)}\]
  For \(\xi \in \orch(\csaa, \csab, X)\) and \(k \in \dcpo(X, \wstarcat(\csak, \csab))\), we have:
  \begin{align*}
    \orch(\csaa, \csab, L_X)(\tau_{1, \csaa, \csab, X}(*, \xi))_k
    = \tau_{1, \csaa, \csab, X}(*, \xi)_{\lambda z. k(L_X(z))}
    = \xi_{\lambda x. k(L_X(*, x))}
    = \xi_k
    = L_{\orch(\csaa, \csab, X)}(*, \xi_k)
  \end{align*}
  For compatibility with the associator \(\alpha_{X, Y, Z} : (X \times Y) \times Z \to X \times (Y \times Z)\), the following must commute:
  \[
    \begin{tikzcd}
      {(X \times Y) \times \orch(\csaa, \csab, Z)} && {\orch(\csaa, \csab, (X \times Y) \times Z)} \\
      {X \times (Y \times \orch(\csaa, \csab, Z))} & {X \times \orch(\csaa, \csab, Y \times Z)} & {\orch(\csaa, \csab, X \times (Y \times Z))}
      \arrow["{\tau_{X \times Y, \csaa, \csab, Z}}", from=1-1, to=1-3]
      \arrow["{\alpha_{X, Y, \orch(\csaa, \csab, Z)}}"', from=1-1, to=2-1]
      \arrow["{\orch(\csaa, \csab, \alpha_{X,Y,Z})}", from=1-3, to=2-3]
      \arrow["{\id_X \times \tau_{Y, \csaa, \csab, Z}}", far, from=2-1, to=2-2]
      \arrow["{\tau_{X, \csaa, \csab, Y \times Z}}", far, from=2-2, to=2-3]
    \end{tikzcd}
  \]
  For \(x \in X\), \(y \in Y\), \(\xi \in \orch(\csaa, \csab, Z)\), and \(k \in \dcpo(X \times (Y \times Z), \wstarcat(\csak, \csab))\), we have:
  \begin{align*}
    \tau_{X, \csaa, \csab, Y \times Z}((\id_X \times \tau_{Y, \csaa, \csab, Z})(\alpha_{X, Y, \orch(\csaa, \csab, Z)}((x, y), \xi)))_k
    &= \xi_{\lambda z. k(x, (y, z))}\\
    &=\orch(\csaa, \csab, \alpha_{X, Y, Z})(\tau_{X \times Y, \csaa, \csab, Z}((x, y), \xi))_k
  \end{align*}
  by unfolding definitions.
  For compatibility with the unit we need:
  \[ \tau_{X, \csaa, \csaa, Y} \circ (\id_X \times \eta_{\csaa Y}) = \eta_{\csaa X \times Y}\]
  Given \(x \in X\), \(y \in Y\), and \(k \in \dcpo(X \times Y, \wstarcat(\csak, \csaa))\):
  \begin{align*}
    \tau_{X, \csaa, \csaa, Y}(x, \eta_{\csaa Y}(y))_k = \eta_{\csaa Y}(y)_{\lambda y. k(x, y)} = k(x, y) = \eta_{\csaa X\times Y}(x, y)_k
  \end{align*}
  Finally for compatibility with \(\mu\) we need the following to commute:
  \[
    \begin{tikzcd}
      {X \times \orch(\csaa, \csab, \orch(\csab, \csac, Y))} & {\orch(\csaa, \csab, X \times \orch(\csab, \csac, Y))} & {\orch(\csaa, \csab, \orch(\csab, \csac, X \times Y))} \\
      {X \times \orch(\csaa, \csac, Y)} && {\orch(\csaa, \csac, X \times Y)}
      \arrow["{\tau_{X, \csaa, \csab, \orch(\csab, \csac, Y)}}", far, from=1-1, to=1-2]
      \arrow["{\id_X \times \mu_{\csaa, \csab, \csac, Y}}", from=1-1, to=2-1]
      \arrow["{\orch(\csaa, \csab, \tau_{X, \csab, \csac, Y})}", far, from=1-2, to=1-3]
      \arrow["{\mu_{\csaa, \csab, \csac, X \times Y}}", from=1-3, to=2-3]
      \arrow["{\tau_{X, \csaa, \csac, Y}}", from=2-1, to=2-3]
    \end{tikzcd}
  \]
  So taking \(x \in X\), \(\Xi \in \orch(\csaa, \csab, \orch(\csab, \csac, Y))\), and \(k \in \dcpo(X \times Y, \wstarcat(\csak, \csac))\):
  \begin{align*}
    \mu_{\csaa, \csab, \csac, X \times Y}(\orch(\csaa, \csab, \tau_{X, \csab, \csac, Y})(\tau_{X, \csaa, \csab, \orch(\csab, \csac, Y)}(x, \Xi)))_k
    &= \Xi_{\lambda \xi. \xi_{\lambda y. k(x, y)}}\\
    &= \tau_{X, \csaa, \csac, Y}(x, \mu_{\csaa, \csab, \csac, Y}(\Xi))_k
  \end{align*}
  again by simply unfolding definitions on both sides. Hence, all of the monad laws hold and \((\orch, \eta, \mu, \tau)\) is a \(\wstarcat^{\text{op}}\)-parameterised strong monad.
\end{proof}

\section{Discussion and future work}%
\label{sec:discussion-and-future-work}

In this paper, we presented the \emph{quantum orchestra} monad, which captures quantum side effects by describing them as the most general form of physical evolution of quantum systems, subunital completely positive maps on \wstar-algebras. As the monad acts on the category \(\dcpo\) of directed complete partially ordered sets, it naturally enables the interpretation of hybrid classical-quantum programs, including those relying on unbounded recursion.

The quantum orchestra monad heavily draws inspiration from \emph{continuation passing-style} monads, in agreement with other monad constructions based on the quantum instrument formalism \cite{booth2026composingquantuminstruments,fritz2026quantuminstrumentmonad}.
This seems to indicate that quantum effects of programs are most naturally described in CPS, rather than in direct style.
We leave the question whether a quantum orchestra-like monad with direct-style flavour is possible for future investigation.

\paragraph{Order structure of quantum orchestras and \(\wstarcat(\csab,\csaa)\)-valued valuations}
As hinted at in \cref{sec:continuous-orchestras}, one main advantage of defining the quantum orchestra monad in CPS is the possibility of defining the order of quantum orchestras pointwise for all continuations.
Valuations that take values in some \(\wstarcat(\csab,\csaa)\) can be seen as a direct-style alternative to quantum orchestras, and in fact orchestras restrict to valuations, by considering the components associated to indicator functions.
It remains unclear how the pointwise order of these valuations however relates to the order of quantum orchestras, as the former only ``tests'' on indicator functions as opposed to all continuations, and thus has the potential to be insufficiently coarse to render integration continuous.
Interestingly, it is known that these two orders coincide in the case of scalar-valued valuations, a consequence of the max-flow min-cut theorem \cite{jones:probabilisticpowerdomain}.
While the additional lattice-structure of commutative \wstar-algebras (their self-adjoint parts form in fact Dedekind complete Riesz spaces \cite{Dodds_1974}) makes it believable that an analogous correspondence might hold more generally for valuations and quantum orchestras over commutative \wstar-algebras, thus allowing for continuous integration along valuations, we leave this problem, as well as the more general non-commutative case for future research.

\paragraph{Relation to the probabilistic powerdomain monad}
Given any \(X \in \dcpo\), we observe that all orchestras \( \xi \in \orch(\mathbb{C}, \mathbb{C}, X) \) restrict to \([0,1]\)-valued valuations on indicator functions.
Compositionality of quantum orchestras guarantees that this mapping is injective, and the pointwise nature of the order of quantum orchestras implies that this mapping is continuous. The structure of such quantum orchestras is thus entirely determined by the underlying valuations.
Vice versa, any \([0,1]\)-valued valuation on \(X\) yields a unique continuous integral for continuous scalar-valued functions on \(X\) which indeed gives rise a full, consistent orchestra.
Perhaps surprisingly, keeping in mind that quantum orchestras were entirely defined in CPS, it turns out that the thus found relation between the simplest quantum orchestras and probabilistic powerdomains is in fact an \emph{isomorphism of monads}.
The full proof of this claim can be found in \cref{app:comparison-prob-powerdomain}.
\begin{theorem}
	\( \orch(\mathbb{C}, \mathbb{C}, -) \) and the probabilistic powerdomain monad \( \valu \) are equivalent.
\end{theorem}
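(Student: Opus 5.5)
The plan is to build an explicit bijection \(\orch(\mathbb{C}, \mathbb{C}, X) \cong \valu(X)\), where \(\valu(X)\) denotes the subprobability valuations on \(X\), and then check that it is an order-isomorphism commuting with the unit and the multiplication.

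\textbf{Step 1: restrict to scalar continuations.} First I will show that every orchestra \(\xi \in \orch(\mathbb{C}, \mathbb{C}, X)\) is determined by its \emph{scalar part}. This is the component \(\csak = \mathbb{C}\), where continuations are Scott-continuous \(f : X \to [0,1]\), using \(\wstarcat(\mathbb{C},\mathbb{C}) \cong [0,1]\). Now take a general \(\csak\) and \(k : X \to \wstarcat(\csak, \mathbb{C})\); each \(k(x)\) is a subunital normal positive functional on \(\csak\). For a positive \(a \in \csak\) with \(a \le 1\), the map \(\chi_a : \mathbb{C} \to \csak\), \(\lambda \mapsto \lambda a\), is nCPSU. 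Compositionality then gives
\[ \xi_k(a) = (\xi_k \circ \chi_a)(1) = \xi_{\lambda x. k(x)(a)} . \]
Positive elements of the unit ball span \(\csak\), so \(\xi_k\) is recovered from the scalar part. This gives injectivity.

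\textbf{Step 2: scalar parts are exactly valuations.} Write \(I(f) = \xi_f\) for the scalar part. Subconvexity with \(\rho = 0\) gives homogeneity \(I(\theta f) = \theta I(f)\). Taking \(\theta = \rho = \tfrac12\) gives \(I(f+g) = 2 I(\tfrac{f+g}{2}) = I(f) + I(g)\) whenever \(f + g \le 1\). Continuity gives Scott-continuity of \(I\).

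Define \(\nu(U) = I(\mathbbm{1}_U)\) for Scott-open \(U\). Then:
\begin{itemize}
\item strictness and monotonicity are immediate;
\item modularity follows by applying additivity to the halved identity \(\tfrac12 \mathbbm{1}_U + \tfrac12 \mathbbm{1}_V = \tfrac12 \mathbbm{1}_{U \cup V} + \tfrac12 \mathbbm{1}_{U\cap V}\);
\item Scott-continuity follows from that of \(I\).
\end{itemize}
Conversely, write \(f = \sup_n \tfrac1n \sum_{i=1}^{n} \mathbbm{1}_{\{f > i/n\}}\), a directed supremum along \(n = 2^m\). By additivity and continuity, \(I(f) = \int f \, d\nu\) in the sense of Jones' lower integral \cite{jones:probabilisticpowerdomain}.

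For surjectivity, given \(\nu \in \valu(X)\) I set \(\xi_k(a) = \int k(x)(a)\, d\nu\) for positive \(a\), extended linearly to all of \(\csak\). The verifications are as follows.
\begin{itemize}
\item \(x \mapsto k(x)(a)\) is Scott-continuous, since evaluation at a positive element preserves directed suprema in \(\wstarcat(\csak,\mathbb{C})\) \cite{cho:wstarisdcpo}.
\item \(a \mapsto \xi_k(a)\) is additive, by additivity of Jones' integral, and positive and subunital since \(\nu(X) \le 1\). Complete positivity is automatic for maps into \(\mathbb{C}\).
\item Normality holds because, for \(a_i \uparrow a\), the integrands increase directedly, and the integral is Scott-continuous in the integrand.
\item Continuity in \(k\) follows from the same continuity argument, and subconvexity from linearity of the integral.
\item Compositionality is \(\xi_{k\circ\chi}(a) = \int k(x)(\chi(a))\,d\nu = (\xi_k\circ\chi)(a)\).
\end{itemize}

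\textbf{Step 3: order and monad structure.} The order on orchestras is pointwise. By Step 1 it is equivalent to the order on scalar parts, that is \(\int f\,d\nu \le \int f\,d\nu'\) for all Scott-continuous \(f\). By Jones this is equivalent to \(\nu \le \nu'\) on opens; in one direction take \(f = \mathbbm{1}_U\), in the other use the step-function approximation.

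For the monad structure:
\begin{itemize}
\item The unit \(\eta(x)_f = f(x)\) corresponds to the Dirac valuation \(\delta_x\).
\item For the multiplication, \(\mu(\Xi)_f = \Xi_{\lambda \xi. \xi_f}\) becomes \(\int \big(\int f\, d\nu'\big)\, d\nu(\nu')\), which is exactly Jones' multiplication.
\item Naturality in \(X\) is the change-of-variables formula \(\int f\, d(g_*\nu) = \int f\circ g\, d\nu\).
\end{itemize}

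\textbf{Main obstacle.} I expect the delicate part to be the analytic facts about Jones' integral used in Step 2: additivity and Scott-continuity in both the integrand and the valuation, on possibly non-continuous dcpos. I would cite these from \cite{jones:probabilisticpowerdomain} rather than reprove them. The secondary subtlety is that evaluation at positive \(a\) is Scott-continuous for the CP-difference order on \(\wstarcat(\csak,\mathbb{C})\); here I would rely on the description of directed suprema of normal positive maps in \cite{cho:wstarisdcpo}.
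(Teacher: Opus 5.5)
Your proposal is correct and follows essentially the same route as the paper. Both restrict orchestras to scalar (in particular indicator) continuations to obtain a valuation, get injectivity from compositionality with maps \(\mathbb{C} \to \csak\), build the inverse by Jones integration on positive elements extended linearly, and check the unit and multiplication through the identity \(\xi_f = \int f \, dv^\xi\). The only minor difference is that you obtain Scott-continuity in both directions for free, by showing the bijection is an order-isomorphism using your Step 1 reduction of the order to scalar continuations. The paper instead verifies continuity of \(\alpha_X\) and of its inverse separately.
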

Note that the commutativity of the probabilistic powerdomain monad, which is equivalent to a Fubini property of the valuations integral, has been an open problem since its discovery \cite{carette2023central}, and thus this question also affects \( \orch(\mathbb{C}, \mathbb{C}, -) \).
Beyond this special case, \( \orch(\csaa, \csaa, -) \) is not commutative for any \wstar-algebra \(\csaa\) which is not equal to \(\mathbb{C}\), which reflects the fact that the composition of morphisms in \(\wstarcat(\csaa,\csaa)\) is non-commutative, even if \(\csaa\) is abelian.
This perspective indicates that the multiplication of the quantum orchestra monad implements a form of \emph{non-commutative integral}.

\paragraph{Denotational semantics for existing languages}
In this work, we have seen how the new quantum orchestra monad could be used to give semantics for a prototype hybrid classical-quantum language, exhibiting the monad's versatility.
This makes it a natural and flexible tool to further develop the semantics of existing languages in the QRAM model.
In this context, it would be interesting to explore the possibility of accompanying orchestra-based denotational semantics with a compatible operational semantics.
We leave this question for future work.

\begin{acks}
  AR was funded by \grantsponsor{Rubberduq}{UKRI}{https://www.ukri.org/} grant \grantnum{Rubberduq}{EP/X025551/1}: Rubber DUQ: Flexible Dynamic Universal Quantum programming.
  DL acknowledges support from the Quantum Advantage Pathfinder research program within the UK's National Quantum Computing Center.
  RIB was supported by the \grantsponsor{EPSRC-DeQS}{Engineering and Physical Sciences Research Council}{https://www.ukri.org/councils/epsrc/} grant reference \grantnum{EPSRC-DeQS}{EP/Z002230/1}: (De)constructing quantum software (DeQS).
\end{acks}

\bibliographystyle{ACM-Reference-Format}
\bibliography{bibliography}

\appendix
\crefalias{section}{appendix}

\section{Toy Language - Full Description}
\label{app:toy}

The final typing rules are given below. We first begin with the rules for values.
\begin{mathpar}
  \inferrule{x : X \in \Gamma}{\Gamma \judgementvalue x : X}\and
  \inferrule{ }{\Gamma \judgementvalue () : \unit}\and
  \inferrule{\Gamma \judgementvalue v_1 : X \and \Gamma \judgementvalue v_2 : Y}{\Gamma \judgementvalue (v_1, v_2) : X \times Y}\and
  \inferrule{\Gamma \judgementvalue v : X \times Y}{\Gamma \judgementvalue \fst v : X}\and
  \inferrule{\Gamma \judgementvalue v : X \times Y}{\Gamma \judgementvalue \snd v : Y}\and
  \inferrule{ }{\Gamma \judgementvalue \tt : \bool}\and
  \inferrule{ }{\Gamma \judgementvalue \ff : \bool}\and
  \inferrule{\Gamma, x : X \judgementproducer_n e : Y}{\Gamma \judgementvalue \lambda x. e : X \to_n Y}\and
  \inferrule{\textsc{G} \in \{\Had, \SG, \X, \T\} }{\Gamma \judgementvalue \textsc{G} : \qid \to_0 \unit}\and
  \inferrule{ }{\Gamma \judgementvalue \CNOT : \qid \times \qid \to_0 \unit}\and
  \inferrule{ }{\Gamma \judgementvalue \meas : \qid \to_0 \bool}\and
  \inferrule{ }{\Gamma \judgementvalue \alloc : \unit \to \qid}
\end{mathpar}
And then we have the rules for expressions.
\begin{mathpar}
  \inferrule{\Gamma \judgementvalue v : X}{\Gamma \judgementproducer_0 \return v : X}\and
  \inferrule{\Gamma \judgementproducer_{n_1} e_1 : X \and \Gamma, x : X \judgementproducer_{n_2} e_2 : Y}{\Gamma \judgementproducer_{n_1 + n_2} \letin{x}{e_1}e_2 : Y}\and
  \inferrule{\Gamma \judgementproducer_{n_1} e_1 : X \and \Gamma \judgementproducer_{n_2} e_2 : Y}{\Gamma \judgementproducer_{n_1 + n_2} e_1 \seq e_2 : Y}\and
  \inferrule{\Gamma \judgementvalue v_1 : X \to_n Y \and \Gamma \judgementvalue v_2 : X}{\Gamma \judgementproducer_n v_1(v_2) : Y}\and
  \inferrule{\Gamma \judgementvalue v : \bool \and \Gamma \judgementproducer_n e_1 : X \and \Gamma \judgementproducer_n e_2 : X}{\Gamma \judgementproducer_n \termif{v}{e_1}{e_2} : X}\and
  \inferrule{\Gamma \judgementvalue f : (1 \to_0 X) \to_0 X}{\Gamma \judgementproducer_0 \fix f : X}
\end{mathpar}
The semantics for types are parameterised over the number of qubits available and are given by:
\begin{mathpar}
  \sem{\unit}_m = \{()\} \and
  \sem{X \times Y}_m = \sem{X}_m \times \sem{Y}_m \and
  \sem{\bool}_m = \{\tt, \ff\} \and
  \sem{X \to_n Y}_m = \forall m' \geq m. \dcpo(\sem{X}_{m'}, \orch(\mathcal{B}(\mathcal{H}^{m'}), \mathcal{B}(\mathcal{H}^{m' + n}), \sem{Y}_{m' + n})) \and
  \sem{\qid}_m = \{0, \dots, m - 1\}
\end{mathpar}
where we let \(\mathcal{H} = \mathbb{C}^2\) and the semantics of contexts is given by:
\[ \sem{x_1 : X_1, \dots, x_n : X_n}_m = \sem{X_1}_m \times \dots \times \sem{X_n}_m \]
We note that the definition of the function type is quantified over the number of qubits. This trick allows us to define a function:
\[ \iota_{X,m,m'} : \sem{X}_m \to \sem{X}_{m'}\]
which can be extended from types \(X\) to contexts \(\Gamma\).

The semantics for a value \(\Gamma \judgementvalue v : X\) is a (continuous) function:
\[ {v}_m \in \dcpo(\sem{\Gamma}_m, \sem{X}_m) \]
We define it for each constructor as follows:
\begin{align*}
  \sem{x}_m(\semctx) &= \semctx_x & \sem{(v_1, v_2)}_m(\semctx) &= (\sem {v_1}_m(\semctx), \sem {v_2}_m(\semctx))\\
  \sem {()}_m(\semctx) &= () & \sem {\fst v}_m(\semctx) &= \pi_1(\sem{v}_m(\semctx))\\
  \sem{\tt}_m(\semctx) &= \tt & \sem{\snd v}_m(\semctx) &= \pi_2(\sem{v}_m(\semctx))\\
  \sem{\ff}_m(\semctx) &= \ff & \sem{\lambda x. e}_m(\semctx)(\semctx_x) &= \sem{e}_{m'}(\iota_{\Gamma, m, m'}(\semctx), \semctx_x)\\
  \sem{\Had}_m(\semctx)(q) &= \delta((), {\overline \Had_q}) & \sem{\T}_m(\semctx)(q) &= \delta((), {\overline \T_q})\\
  \sem{\SG}_m(\semctx)(q) &= \delta((), {\overline \SG_q}) & \sem{\CNOT}_m(\semctx)((q_1, q_2)) &= \delta((), \overline \CNOT_{q_1, q_2})\\
  \sem{\X}_m(\semctx)(q) &= \delta((), {\overline \X_q}) & \sem{\meas}_m(\semctx)(q) &= \delta(\ff, \overline{\ketbra{0}}_q) + \delta(\tt, \overline{\ketbra{1}}_q)\\
  \sem{\alloc}_m(\semctx)(u) &= \delta(m, \overline V) &\text{where } V(v) &= v \otimes \ket{0}
\end{align*}
where \(\semctx_x\) refers to the \(x^{\text{th}}\) projection from the semantics of the context, and we recall that for a linear map \(f\) we define:
\[ \overline f = \rho \mapsto f^\dagger \circ \rho \circ f\]
and then use subscripts to refer to the qubits of \(\mathcal{H}^m\) that this channel should target.

Expressions \(\Gamma \judgementproducer_n e : X\) have semantics given by (continuous) functions:
\[ \sem{e}_m \in \dcpo(\sem{\Gamma}_m, \orch(\mathcal{B}(\mathcal{H}^m), \mathcal{B}(\mathcal{H}^{m+n}), \sem{X}_{m + n}))\]
The semantics for each constructor of expressions are given by the following composites:
\begin{align*}
  \sem{\return v}_m ={} &\sem{\Gamma}_m \xrightarrow{\sem{v}_m} \sem{X}_m \xrightarrow{\eta_{\mathcal{B}(\mathcal{H}^m) X}} \orch(\mathcal{B}(\mathcal{H}^m), \mathcal{B}(\mathcal{H}^m), \sem{X}_m)\\
  \sem{\letin {x}{e_1}e_2}_m ={} &\sem{\Gamma}_m\\
                        &\Big\downarrow\langle \iota_{\Gamma,m,m+n_1}, \sem{e_1}_m \rangle\\
                        &\sem{\Gamma}_{m+n_1} \times \orch(\mathcal{B}(\mathcal{H}^m), \mathcal{B}(\mathcal{H}^{m + n_1}), \sem{X}_{m + n_1})\\
                        &\Big\downarrow\tau_{\sem{\Gamma}_{m + n_1}, \mathcal{B}(\mathcal{H}^m), \mathcal{B}(\mathcal{H}^{m + n_1}), \sem{X}_{m + n_1}}\\
                        &\orch(\mathcal{B}(\mathcal{H}^m), \mathcal{B}(\mathcal{H}^{m + n_1}), \sem{\Gamma}_{m + n_1} \times \sem{X}_{m + n_1})\\
                        &\Big\downarrow \kleisli{\sem{e_2}_{m + n_1}}\\
                        &\orch(\mathcal{B}(\mathcal{H}^m), \mathcal{B}(\mathcal{H}^{m + n_1 + n_2}), \sem{Y}_{m + n_1 + n_2})\\
\intertext{where \(\Gamma \judgementproducer_{n_1} e_1 : X\) and \(\Gamma, x : X \judgementproducer_{n_2} e_2 : Y\)}
  \sem {e_1 \seq e_2}_m ={} &\sem{\Gamma}_m \\
                        &\Big\downarrow \langle \iota_{\Gamma, m, m+n_1}, \sem{e_1}_m \rangle\\
                        &\sem{\Gamma}_{m+n_1} \times \orch(\mathcal{B}(\mathcal{H}^m), \mathcal{B}(\mathcal{H}^{m + n_1}), \sem{X}_{m + n_1})\\
                        &\Big\downarrow\tau_{\sem{\Gamma}_{m + n_1}, \mathcal{B}(\mathcal{H}^m), \mathcal{B}(\mathcal{H}^{m + n_1}), \sem{X}_{m + n_1}}\\
                        &\orch(\mathcal{B}(\mathcal{H}^m), \mathcal{B}(\mathcal{H}^{m + n_1}), \sem{\Gamma}_{m + n_1} \times \sem{X}_{m + n_1})\\
                        &\Big\downarrow\orch(\mathcal{B}(\mathcal{H}^m), \mathcal{B}(\mathcal{H}^{m + n_1}), \pi_1)\\
                        &\orch(\mathcal{B}(\mathcal{H}^m), \mathcal{B}(\mathcal{H}^{m + n_1}), \sem{\Gamma}_{m + n_1})\\
                        &\Big\downarrow\kleisli{\sem{e_2}_{m + n_1}}\\
                        &\orch(\mathcal{B}(\mathcal{H}^m), \mathcal{B}(\mathcal{H}^{m + n_1 + n_2}), \sem{Y}_{m + n_1 + n_2})\\
\intertext{where \(\Gamma \judgementproducer_{n_1} e_1 : X\) and \(\Gamma, x : X \judgementproducer_{n_2} e_2 : Y\)}
  \sem {v_1(v_2)}_m ={} &\sem{\Gamma}_m \\
                        &\Big\downarrow\langle \sem{v_1}_m, \sem{v_2}_m \rangle\\
                        &(\dcpo(\sem{X}_m, \orch(\mathcal{B}(\mathcal{H}^m), \mathcal{B}(\mathcal{H}^{m + n_1}), \sem{Y}_{m + n_1}))) \times \sem{X}_m\\
                        &\Big\downarrow \mathsf{eval}\\
                        &\orch(\mathcal{B}(\mathcal{H}^m), \mathcal{B}(\mathcal{H}^{m + n}), \sem{Y}_{m + n})\\
  \intertext{where \(\Gamma \judgementvalue v_1 : X \to_n Y\)}
  \sem {\termif{v}{e_1}{e_2}}_m ={} &\sem{\Gamma}_m\\
                        &\Big\downarrow \langle \sem{v}_m, \id \rangle\\
                        &\sem{\bool}_m \times \sem{\Gamma}_m\\
                        &\Big\downarrow {}\simeq{}\\
                        &\sem{\Gamma}_m + \sem{\Gamma}_m\\
                        &\Big\downarrow [\sem{e_1}_m, \sem{e_2}_m]\\
                        &\orch(\mathcal{B}(\mathcal{H}^m), \mathcal{B}(\mathcal{H}^{m + n}), \sem{X}_{m + n})\\
  \sem{\fix f}_m ={} &\sem{\Gamma}_m \\
                        &\Big\downarrow{\sem{f}_m}\\
                        &\dcpo((\unit \to \orch(\mathcal{B}(\mathcal{H}^m), \mathcal{B}(\mathcal{H}^m), \sem{X}_m)), \orch(\mathcal{B}(\mathcal{H}^m), \mathcal{B}(\mathcal{H}^m), \sem{X}_m))\\
                        &\Big\downarrow {}\simeq{}\\
                        &\dcpo(\orch(\mathcal{B}(\mathcal{H}^m), \mathcal{B}(\mathcal{H}^m), \sem{X}_m), \orch(\mathcal{B}(\mathcal{H}^m), \mathcal{B}(\mathcal{H}^m), \sem{X}_m))\\
                        &\Big\downarrow \fix\\
                        &\orch(\mathcal{B}(\mathcal{H}^m), \mathcal{B}(\mathcal{H}^m), \sem{X}_m)
\end{align*}
where \(\langle f, g \rangle\) is the universal map into the product, and \([f, g]\) is the universal map out of the sum.

\section{Finite Quantum Instruments are a Strong Parametrised Monad}
\label{sec:quant-instr-are}

In this section we show that the finite quantum instruments used in \cref{sec:toy-inst,sec:toy-alloc} actually form a monad.

\begin{definition}[Parameterised Quantum Instrument Monad]
  We define the \(\wstarcat^{\text{op}}\)-parameterised quantum instrument monad \(Q\) as follows.
  \begin{itemize}
  \item Let \(\csaa\) and \(\csab\) be \wstar-algebras and \(X\) be a set. Then elements \(\xi \in Q(\csaa, \csab, X)\) are given by a collection \(\{\xi_x \in \wstarcat(\csab, \csaa) \mid x \in X\}\) such that:
    \begin{itemize}
    \item \(\Supp(\xi) \coloneq \{x \in X \mid \xi_x \neq 0 \}\) is finite.
    \item \(\sum_{x \in X} \xi_x\) is subunital.
    \end{itemize}
  \item To show that \(Q\) is a functor \(\wstarcat \times \wstarcat^{\text{op}} \times \Set \to \Set\), we let \(\phi : \wstarcat(\csaa, \csab)\), \(\psi : \wstarcat(\csab', \csaa')\), and \(f : \Set(X, Y)\), and define \(Q(\phi, \psi, f) : Q(\csaa, \csaa', X) \to Q(\csab, \csab', Y)\) by:
    \[ Q(\phi, \psi, f)(\xi)_y = \sum_{x \in f^{-1}(y)} \phi \circ \xi_x \circ \psi\]
    for \(\xi \in Q(\csaa, \csaa', X)\) and \(y \in Y\).
  \item We define the unit \(\eta_{\csaa X} : X \to Q(\csaa, \csaa, X)\) by:
    \[ \eta_{\csaa X}(x) = \delta(x, \id_{\csaa}) \]
    for \(x \in X\), recalling that \(\delta\) is defined as the \emph{Dirac quantum instrument}.
  \item Let the multiplication \(\mu_{\csaa, \csab, \csac, X} : Q(\csaa, \csab, Q(\csab, \csac, X)) \to Q(\csaa, \csac, X)\) be defined by:
    \[ \mu_{\csaa, \csab, \csac, X}(\Xi)_x = \sum_{\xi \in Q(\csab, \csac, X)} \Xi_\xi \circ \xi_x \]
    for \(x \in X\).
  \item For completeness, we explicitly give the strength \(\tau_{X, \csaa, \csab, Y} : X \times Q(\csaa, \csab, Y) \to Q(\csaa, \csab, X \times Y)\) by:
    \[ \tau_{X, \csaa, \csab, Y}(x, \xi)_{(x',y)} = \delta(x, \xi_y)_{x'} \]
    for each \(x, x' \in X\) and \(y \in Y\).
  \end{itemize}
\end{definition}

We use the multiplication instead of the Kleisli extension above to align more closely with the laws of a parameterised monad. The extension of a map \(f : X \to Q(\csab, \csac, Y)\) is given by:
\begin{align*}
  f_\csaa^\# &: Q(\csaa, \csab, X) \to Q(\csaa, \csac, Y)\\
  f_\csaa^\# &= \mu_{\csaa, \csab, \csac, Y} \circ Q(\csaa, \csab, f)
    \intertext{expanding out these definitions, we get:}
             f_\csaa^\#(\xi)_y &= \mu_{\csaa, \csab, \csac, Y}(Q(\csaa, \csab, f)(\xi))_y\\
            &= \sum_{\chi \in Q(\csab, \csac, Y)} \left( Q(\csaa, \csab, f)(\xi)_\chi \right) \circ \chi_y\\
            &= \sum_{\chi \in Q(\csab, \csac, Y)} \left( \sum_{x \in f^{-1}(\chi)} \xi_x \right) \circ \chi_y\\
            &= \sum_{\substack{\chi \in Q(\csab, \csac, Y)\\x \in f^{-1}(\chi)}} \xi_x \circ \chi_y\\
            &= \sum_{x \in X} \xi_x \circ f(x)_y
\end{align*}
for \(\xi \in Q(\csaa, \csab, X)\) and \(y \in Y\), aligning with the extension given in the previous section.

\begin{lemma}
  \label{lem:dirac-well-defined}
  If \(x \in X\) and \(\Phi \in \wstarcat(\csab, \csaa)\), then the Dirac quantum instrument \(\delta(x, \Phi) \in Q(\csaa, \csab, X)\) is well-defined. Further, we have:
  \[ Q(\phi, \psi, f)(\delta(x, \Phi)) = \delta(f(x), \phi \circ \Phi \circ \psi)\]
  for \(f : X \to Y\), \(\phi \in \wstarcat(\csaa, \csaa')\), and \(\psi \in \wstarcat(\csab', \csab)\).
\end{lemma}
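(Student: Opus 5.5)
We check the two defining conditions of \(Q(\csaa, \csab, X)\) for \(\delta(x, \Phi)\), and then compute the functorial action componentwise from the explicit formula for \(Q(\phi, \psi, f)\).

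\textbf{Well-definedness.} Every component of \(\delta(x, \Phi)\) is either \(\Phi\) or the zero channel \(0_{\csab\csaa}\), and both lie in \(\wstarcat(\csab, \csaa)\). The support \(\Supp(\delta(x, \Phi))\) is contained in \(\{x\}\), so it is finite; it may be empty when \(\Phi = 0\), which is harmless. The only possibly non-zero summand of \(\sum_{x'} \delta(x, \Phi)_{x'}\) is \(\Phi\), so this sum equals \(\Phi\), which is subunital because \(\Phi\) is an nCPSU map.

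\textbf{Functorial action.} Fix \(y \in Y\). By definition,
\[ Q(\phi, \psi, f)(\delta(x, \Phi))_y = \sum_{x' \in f^{-1}(y)} \phi \circ \delta(x, \Phi)_{x'} \circ \psi . \]
The summand at \(x' \neq x\) vanishes, because composition with the zero map is zero. This leaves two cases.
\begin{itemize}
\item If \(f(x) = y\), then \(x \in f^{-1}(y)\), so the sum reduces to \(\phi \circ \Phi \circ \psi\).
\item If \(f(x) \neq y\), then \(x \notin f^{-1}(y)\), so the sum is empty and equals \(0\).
\end{itemize}
This is exactly \(\delta(f(x), \phi \circ \Phi \circ \psi)_y\). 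Note that \(\phi \circ \Phi \circ \psi \in \wstarcat(\csab', \csaa')\) since \(\wstarcat\) is a category, so the right-hand side is itself a well-defined Dirac instrument by the first part.

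\textbf{Main obstacle.} There is no real obstacle. The only care needed is in handling the infinite index sets: the sums are well-defined because only finitely many terms are non-zero, here at most one. One should also keep the Heisenberg-picture variance straight, so that the composite has type \(\csab' \to \csaa'\).
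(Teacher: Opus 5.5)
Your proof is correct and follows the paper's argument essentially verbatim. Both check finite support and subunitality of the single possibly non-zero component, then compute the functorial action componentwise by splitting on whether \(f(x) = y\). Your remark that the support is only contained in \(\{x\}\), and is empty when \(\Phi = 0\), is slightly more precise than the paper's claim that it equals \(\{x\}\).
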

\begin{proof}
  The support of \(\delta(x, \Phi)\) is \(\{x\}\) and so is finite. The sum of all the components is \(\Phi\), which is subunital by definition. Hence \(\delta(x, \Phi) \in Q(\csaa, \csab, X)\). Let \(f\), \(\phi\), and \(\psi\) be as in the statement of the lemma. Then:
  \begin{align*}
    Q(\phi, \psi, f)(\delta(x, \Phi))_y
    &= \sum_{x' \in f^{-1}(y)} \phi \circ \delta(x, \Phi)_{x'} \circ \psi \\
    &= \begin{cases}
      \phi \circ \Phi \circ \psi&\text{if }x \in f^{-1}(y)\\
      0&\text{otherwise}
    \end{cases}\\
    &=
      \begin{cases}
        \phi \circ \Phi \circ \psi &\text{if }f(x) = y\\
        0&\text{otherwise}
      \end{cases}\\
    &= \delta(f(x), \phi \circ \Phi \circ \psi)_y
  \end{align*}
  for all \(y\), and so the stated equality holds.
\end{proof}

\begin{lemma}
  \(Q\) is well-defined and a functor from \(\wstarcat \times \wstarcat^{op} \times \Set \to \Set\).
\end{lemma}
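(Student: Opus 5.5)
The proof splits into two parts: first that \(Q(\phi, \psi, f)\) sends finite instruments to finite instruments, and then the two functor equations.

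\textbf{Well-definedness.} Fix \(\xi \in Q(\csaa, \csaa', X)\), \(\phi \in \wstarcat(\csaa, \csab)\), \(\psi \in \wstarcat(\csab', \csaa')\) and \(f : X \to Y\).

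The first point is that each component \(Q(\phi, \psi, f)(\xi)_y = \sum_{x \in f^{-1}(y)} \phi \circ \xi_x \circ \psi\) is a genuine nCPSU map. The sum is effectively over the finite set \(f^{-1}(y) \cap \Supp(\xi)\), since all other terms vanish. Each term \(\phi \circ \xi_x \circ \psi\) is a composite of morphisms of \(\wstarcat\), so it is normal and completely positive. A finite sum of normal CP maps is again normal and CP. Subunitality of the components will follow from the global subunitality below, since each component lies below the total sum in the Löwner order.

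The second point is finite support. If \(Q(\phi, \psi, f)(\xi)_y \neq 0\), then some \(x \in f^{-1}(y)\) has \(\xi_x \neq 0\). Hence \(\Supp(Q(\phi, \psi, f)(\xi)) \subseteq f(\Supp(\xi))\), which is finite.

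The third point is global subunitality, and this is the only step requiring care. Because the fibres \(f^{-1}(y)\) partition \(X\), regrouping gives
\[ \sum_{y} Q(\phi,\psi,f)(\xi)_y(1) = \phi\Big(\big(\textstyle\sum_x \xi_x\big)(\psi(1))\Big). \]
Here I use that \(\phi\) is linear and that the sum is finite. Now \(\psi(1) \leq 1\), and \(\sum_x \xi_x\) is positive and subunital, so \(\big(\sum_x \xi_x\big)(\psi(1)) \leq \big(\sum_x \xi_x\big)(1) \leq 1\). Applying the positive subunital map \(\phi\) then gives a result \(\leq \phi(1) \leq 1\). So the total is subunital.

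\textbf{Functor laws.} By \cref{lem:dirac-well-defined} we already know \(Q(\phi,\psi,f)(\delta(x,\Phi)) = \delta(f(x), \phi\circ\Phi\circ\psi)\). Every finite instrument is the finite sum \(\xi = \sum_{x \in \Supp(\xi)} \delta(x, \xi_x)\). Moreover \(Q(\phi,\psi,f)\) is additive on such decompositions, because it is defined componentwise by sums of compositions, and composition is bilinear.

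\emph{Identity.} For the identity morphisms, each fibre \(\id^{-1}(x)\) is the singleton \(\{x\}\). So \(Q(\id, \id, \id)(\xi)_x = \xi_x\) directly.

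\emph{Composition.} For composites, I would verify the law on Dirac instruments via \cref{lem:dirac-well-defined} and then extend by additivity:
\[ Q(\phi',\psi',g)(Q(\phi,\psi,f)(\delta(x,\Phi))) = \delta(g(f(x)), \phi'\circ\phi\circ\Phi\circ\psi\circ\psi'), \]
and this equals \(Q(\phi'\circ\phi, \psi\circ\psi', g\circ f)(\delta(x,\Phi))\). Alternatively, the law can be checked componentwise: the identity \((g\circ f)^{-1}(z) = \bigsqcup_{y \in g^{-1}(z)} f^{-1}(y)\) gives
\[ \sum_{y\in g^{-1}(z)}\sum_{x\in f^{-1}(y)} \phi'\circ\phi\circ\xi_x\circ\psi\circ\psi' = \sum_{x \in (g\circ f)^{-1}(z)} \phi'\circ\phi\circ\xi_x\circ\psi\circ\psi', \]
using bilinearity of composition to pull \(\phi'\) and \(\psi'\) through the inner sum.

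I expect no real obstacle beyond the subunitality bookkeeping in the third point.
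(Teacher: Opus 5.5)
Your proof is correct and follows essentially the same route as the paper: finite support via $\Supp(Q(\phi,\psi,f)(\xi)) \subseteq f(\Supp(\xi))$, subunitality by regrouping the fibres into $\phi \circ (\sum_x \xi_x) \circ \psi$, and the composition law by the componentwise fibre decomposition $(g\circ f)^{-1}(z) = \bigsqcup_{y \in g^{-1}(z)} f^{-1}(y)$. Your extra details are valid additions rather than a different method: you check that each component is an nCPSU map, you spell out subunitality via positivity applied to $\psi(1)\leq 1$, and you give an alternative composition argument through Dirac instruments and \cref{lem:dirac-well-defined}.
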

\begin{proof}
  Firstly,
  \[ Q(\id_\csaa, \id_\csab, \id_X)(\xi)_x = \sum_{x' \in \id_X^{-1}(x)} \id_\csaa \circ \xi_x \circ \id_\csab = \xi_x \]
  and so \(Q(\id_\csaa, \id_\csab, \id_X) = \id_{Q(\csaa, \csab, X)}\). Supposing \(\phi : \wstarcat(\csaa, \csab)\), \(\psi : \wstarcat(\csab', \csaa')\), and \(f : X \to Y\), we must show that \(Q(\phi, \psi, f)\) is well defined. Specifically, for \(\xi \in Q(\csaa, \csaa', X)\), we have that:
  \begin{align*}
    \Supp(Q(\phi, \psi, f)(\xi))
    &= \left\{y \in Y \mid \sum\nolimits_{x \in f^{-1}(y)} \phi \circ \xi_x \circ \psi \neq 0 \right\}\\
    &\subseteq \left\{y \in Y \mid \sum\nolimits_{x \in f^{-1}(y)} \xi_x \neq 0\right\}\\
    &\subseteq f(\Supp(\xi))
  \end{align*}
  and so the support is finite. Then:
  \begin{align*}
    \sum_{y \in Y} Q(\phi,\psi,f)(\xi)_y
    &= \sum_{\substack{y \in Y\\x \in f^{-1}(y)}} \phi \circ \xi_x \circ \psi \\
    &= \sum_{x \in X} \phi \circ \xi_x \circ \psi \\
    &= \phi \circ \left( \sum_{x \in X} \xi_i  \right) \circ \psi
  \end{align*}
  which is a composition of subunital maps and so is subunital. Hence, \(Q(\phi, \psi, f)(\xi) \in Q(\csab, \csab', Y)\) and so \(Q(\phi, \psi, f)\) is well-defined.

  If we further suppose that we have \(\phi': \wstarcat(\csab, \csac)\), \(\psi' : \wstarcat(\csac', \csab')\), and \(g : Y \to Z\), then:
  \begin{align*}
    Q(\phi', \psi', g) (Q(\phi,\psi,f) (\xi))_z
    &= \sum_{y \in g^{-1}(z)} \phi' \circ Q(\phi,\psi,f)(\xi)_y \circ \psi'\\
    &= \sum_{y \in g^{-1}(z)} \phi' \circ \left( \sum_{x \in f^{-1}(y)} \phi \circ \xi_x \circ \psi \right) \circ \psi'\\
    &= \sum_{\substack{y \in g^{-1}(z)\\x\in f^{-1}(y)}} \phi' \circ \phi \circ \xi_x \circ \psi \circ \psi' \\
    &= \sum_{x \in (g \circ f)^{-1}(z)} \phi' \circ \phi \circ \xi_x \circ \psi \circ \psi'\\
    &= Q(\phi' \circ \phi, \psi \circ \psi', g \circ f)(\xi)_z
  \end{align*}
  and so \(Q\) is a functor.
\end{proof}

\begin{lemma}
  The unit \(\eta\) is a natural transformation.
\end{lemma}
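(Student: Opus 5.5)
The unit lemma for the finite instrument monad \(Q\) in \cref{sec:quant-instr-are} follows the pattern of the corresponding lemma for \(\orch\) in \cref{sec:unit-natur-transf}. We must check two conditions: naturality in the set argument \(X\), and dinaturality in the \wstar-algebra parameter \(\csaa\). Since \(\eta_{\csaa X}(x) = \delta(x, \id_\csaa)\) is a Dirac instrument, both reduce to the computation of \(Q\) on Dirac instruments already recorded in \cref{lem:dirac-well-defined}. That lemma also guarantees each \(\eta_{\csaa X}\) is a well-defined function into \(Q(\csaa,\csaa,X)\).

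\textbf{Naturality in \(X\).} For \(f : X \to Y\), we need \(Q(\csaa,\csaa,f)\circ \eta_{\csaa X} = \eta_{\csaa Y}\circ f\). Applying \cref{lem:dirac-well-defined} with \(\phi = \psi = \id_\csaa\) gives
\[ Q(\csaa,\csaa,f)(\delta(x,\id_\csaa)) = \delta(f(x), \id_\csaa\circ\id_\csaa\circ\id_\csaa) = \delta(f(x),\id_\csaa) = \eta_{\csaa Y}(f(x)). \]

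\textbf{Dinaturality in \(\csaa\).} For \(\phi \in \wstarcat(\csab,\csaa)\), the hexagon-shaped condition is \(Q(\csaa,\phi,X)\circ\eta_{\csaa X} = Q(\phi,\csab,X)\circ \eta_{\csab X}\) as maps \(X \to Q(\csaa,\csab,X)\). We apply \cref{lem:dirac-well-defined} to each side:
\begin{align*}
  Q(\csaa,\phi,X)(\delta(x,\id_\csaa)) &= \delta(x, \id_\csaa\circ\id_\csaa\circ\phi) = \delta(x,\phi),\\
  Q(\phi,\csab,X)(\delta(x,\id_\csab)) &= \delta(x, \phi\circ\id_\csab\circ\id_\csab) = \delta(x,\phi).
\end{align*}
The two sides agree.

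There is essentially no obstacle. The only care needed is with variance bookkeeping: \(Q\) is covariant in its first \wstar-argument and contravariant in its second, matching the paper's convention that \(\phi \in \wstarcat(\csab,\csaa)\) acts by composition in the Heisenberg picture. The components of \(\phi\circ\Phi\circ\psi\) must then be ordered consistently with the typing of \cref{lem:dirac-well-defined}, i.e.\ \(\Phi \in \wstarcat(\csab,\csaa)\) and \(\phi\) postcomposed on the output side.
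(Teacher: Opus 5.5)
Your proof is correct and is essentially the paper's own proof of this lemma for the finite instrument monad \(Q\) in \cref{sec:quant-instr-are}. Like the paper, you get well-definedness, naturality in \(X\), and dinaturality in \(\csaa\) by applying \cref{lem:dirac-well-defined} to \(\eta_{\csaa X}(x)=\delta(x,\id_\csaa)\). If the lemma is instead read as the one for \(\orch\) in \cref{sec:unit-natur-transf}, the paper evaluates both sides at an arbitrary continuation \(k\); for example, both dinaturality legs come out as \(\phi\circ k(x)\). Your argument still carries over verbatim there, using the identity \(\orch(\phi,\psi,f)(\delta(x,\Phi))=\delta(f(x),\phi\circ\Phi\circ\psi)\) recorded in \cref{sec:continuous-orchestras}, with continuity of \(\eta\) handled by the preceding lemma.
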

\begin{proof}
  As \(\eta\) is a Dirac quantum instrument it is well-defined by \cref{lem:dirac-well-defined}.

  To show naturality of \( \eta \), we need naturality in \(X\) and dinaturality in \(\csaa\). The first requires that the following square commutes for all \(\csaa\):
  \[
    \begin{tikzcd}[column sep=4em]
      X & Y \\
      {Q(\csaa, \csaa, X)} & {Q(\csaa, \csaa, Y)}
      \arrow["f", from=1-1, to=1-2]
      \arrow["{\eta_{\csaa X}}"', from=1-1, to=2-1]
      \arrow["{\eta_{\csaa Y}}", from=1-2, to=2-2]
      \arrow["{Q(\csaa, \csaa, f)}", from=2-1, to=2-2]
    \end{tikzcd}
  \]
  By \cref{lem:dirac-well-defined}, it holds that
  \[ Q(\csaa, \csaa, f)(\eta_{\csaa X}(x)) = Q(\csaa, \csaa, f)(\delta(x, \id_{\csaa})) = \delta(f(x), \id_{\csaa}) = \eta_{\csaa Y}(f(x)). \]
  Showing dinaturality in \(\csaa\) amounts to proving the following diagram commutes for \(\phi : \wstarcat(\csab, \csaa)\):
  \[
    \begin{tikzcd}[row sep=small]
      & {Q(\csaa,\csaa, X)} & \\
      X && {Q(\csaa, \csab, X)} \\
      & {Q(\csab, \csab, X)}
      \arrow["{Q(\csaa, \phi, X)}", from=1-2, to=2-3]
      \arrow["{\eta_{\csaa X}}", from=2-1, to=1-2]
      \arrow["{\eta_{\csab X}}"', from=2-1, to=3-2]
      \arrow["{Q(\phi, \csab, X)}"', from=3-2, to=2-3]
    \end{tikzcd}
  \]
  Letting \(x \in X\) and again using \cref{lem:dirac-well-defined}:
  \[ Q(\csaa, \phi, X)(\eta_{\csaa X}(x)) = Q(\csaa, \phi, X)(\delta(x, \id_{\csaa})) = \delta(x, \phi) = Q(\phi, \csab, X)(\delta(x, \id_{\csab})) = \eta_{\csab X}(x), \]
  and so \(\eta\) is natural as required.
\end{proof}

\begin{lemma}
  The multiplication \(\mu\) is a natural transformation.
\end{lemma}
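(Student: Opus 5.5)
The plan has two parts: first show each component \(\mu_{\csaa, \csab, \csac, X}\) is well-defined, then check the naturality and dinaturality squares on components, reducing everything to finite-sum manipulations as in \cref{lem:dirac-well-defined}.

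\textbf{Well-definedness.} Fix \(\Xi \in Q(\csaa, \csab, Q(\csab, \csac, X))\). The sum \(\sum_{\xi} \Xi_\xi \circ \xi_x\) is a finite sum because \(\Supp(\Xi)\) is finite. Its support is contained in \(\bigcup_{\xi \in \Supp(\Xi)} \Supp(\xi)\), a finite union of finite sets. For subunitality, compute
\[ \sum_{x} \mu(\Xi)_x = \sum_{\xi \in \Supp(\Xi)} \Xi_\xi \circ \Bigl(\sum_x \xi_x\Bigr). \]
Each \(\sum_x \xi_x\) is subunital, so \(\Xi_\xi\bigl(\sum_x \xi_x(1)\bigr) \leq \Xi_\xi(1)\) by positivity of \(\Xi_\xi\). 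Summing over \(\xi\) then bounds the expression by \(\sum_\xi \Xi_\xi(1) \leq 1\). Normality and complete positivity are preserved by finite sums and composition. This positivity-and-monotonicity step is the only non-formal point, and I expect it to be the main (though mild) obstacle.

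\textbf{Naturality in \(\csaa\), \(\csac\), \(X\).} For \(\phi \in \wstarcat(\csaa, \csaa')\), \(\psi \in \wstarcat(\csac', \csac)\) and \(f : X \to Y\), I would evaluate both paths of the square from the orchestra case at a component \(y\). Going along the top gives \(\sum_{\chi} \bigl(\sum_{\xi : Q(\csab,\psi,f)(\xi) = \chi} \phi \circ \Xi_\xi\bigr) \circ \chi_y\). Reindexing by \(\xi\) turns this into \(\sum_\xi \phi \circ \Xi_\xi \circ \sum_{x \in f^{-1}(y)} \xi_x \circ \psi\). Going down and then across gives \(\sum_{x \in f^{-1}(y)} \phi \circ \bigl(\sum_\xi \Xi_\xi \circ \xi_x\bigr) \circ \psi\). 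The two agree by linearity of composition in \(\wstarcat\) and exchange of finite sums. The key reindexing is that summing over \(\chi\) and then over the fibre of \(Q(\csab,\psi,f)\) is the same as summing over all \(\xi\); this is legitimate since only finitely many terms are nonzero.

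\textbf{Dinaturality in \(\csab\).} For \(\phi \in \wstarcat(\csab', \csab)\) the same reindexing applies. Both composites at component \(x\) reduce to \(\sum_\xi \Xi_\xi \circ \phi \circ \xi_x\). On one side \(\phi\) is pushed into the outer instrument via \(Q(\csaa, \phi, -)\); on the other it is pushed into the inner instruments via \(Q(\phi, \csac, X)\), which sends \(\xi\) to \(\phi \circ \xi\) by \cref{lem:dirac-well-defined} extended linearly. The fibres of \(\xi \mapsto Q(\phi,\csac,X)(\xi)\) may merge several \(\xi\)'s, but distributing the sum over each fibre makes the two expressions coincide.
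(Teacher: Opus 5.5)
Your proposal is correct and follows essentially the same route as the paper's proof. Both establish well-definedness by bounding the support by \(\bigcup_{\xi \in \Supp(\Xi)} \Supp(\xi)\) and proving subunitality via positivity of each \(\Xi_\xi\), then verify the naturality and dinaturality squares by reindexing finite sums over fibres. The differences are minor: you check naturality in \(\csaa\), \(\csac\), \(X\) in one combined square where the paper treats \(X\) separately, and \(Q(\phi, \csac, X)(\xi)_x = \phi \circ \xi_x\) follows directly from the definition of the functor, so you do not need the Dirac lemma for it.
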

\begin{proof}
  For an instrument \(\Xi \in Q(\csaa, \csab, Q(\csab, \csac, X))\) and \(x \in X\), we have that \(\mu_{\csaa, \csab, \csac}(\Xi)_x \neq 0\) implies that there is \(\xi \in \Supp(\Xi)\) with \(x \in \Supp(\xi)\). Therefore,
\[ \Supp \left( \mu_{\csaa, \csab, \csac}(\Xi) \right) \subseteq \bigcup_{\xi \in \Supp(\Xi)} \Supp(\xi), \]
with the right-hand side being a finite union of finite sets. We then note that
  \begin{align*}
    \sum_{x \in X} \mu_{\csaa, \csab, \csac, X}(\Xi)_x
    &= \sum_{\substack{x \in X\\\xi \in Q(\csab, \csac, X)}} \Xi_\xi \circ \xi_x
    = \sum_{\xi \in Q(\csab, \csac, X)} \Xi_\xi \circ \Bigl( \sum_{x \in X} \xi_x \Bigr), \\
    \intertext{and hence,}
    \sum_{x \in X} \mu_{\csaa, \csab, \csac, X}(\Xi)_x (1_\csac)
    &= \sum_{\xi \in Q(\csab, \csac, X)} \Xi_\xi \Bigl( \sum_{x \in X} \xi_x(1_\csac) \Bigr)
    \leq \sum_{\xi \in Q(\csab, \csac, X)} \Xi_\xi(1_\csab)
    \leq 1_\csaa,
  \end{align*}
  where the second equality holds as \(\sum_{x \in X} \xi_x\) is subunital for each \(\xi\) and the third holds as the sum of the components of \(\Xi\) is subunital.

  We now require naturality in \(\csaa, \csac\), and \(X\), with dinaturality in \(\csab\). We treat each case separately:
  \begin{itemize}
  \item Naturality in \(X\): We must show the following square commutes:
    \[
      \begin{tikzcd}[column sep=7em]
        {Q(\csaa, \csab, Q(\csab, \csac, X))} & {Q(\csaa, \csab, Q(\csab, \csac, Y))} \\
        {Q(\csaa, \csac, X)} & {Q(\csaa, \csac, Y)}
        \arrow["{Q(\csaa, \csab, Q(\csab, \csac, f))}", from=1-1, to=1-2]
        \arrow["{\mu_{\csaa, \csab, \csac, X}}"', from=1-1, to=2-1]
        \arrow["{\mu_{\csaa, \csab, \csac, Y}}", from=1-2, to=2-2]
        \arrow["{Q(\csaa, \csac, f)}", from=2-1, to=2-2]
      \end{tikzcd}
    \]
    Take \(\Xi \in Q(\csaa, \csab, Q(\csab, \csac, X))\). Then for \(y \in Y\) we have:
    \begin{align*}
      \mu_{\csaa, \csab, \csac, Y}(Q(\csaa, \csab, Q(\csab, \csac, f))(\Xi))_y
      &= \sum_{\xi \in Q(\csab, \csac, Y)} Q(\csaa, \csab, Q(\csab, \csac, f))(\Xi)_\phi \circ \xi_y\\
      &= \sum_{\substack{\xi \in Q(\csab, \csac, Y)\\ \xi' \in Q(\csab, \csac, f)^{-1}(\xi)}} \Xi_{\xi'} \circ \xi'_y\\
      &= \sum_{\xi' \in Q(\csab, \csac, X)} \Xi_{\xi'} \circ Q(\csab, \csac, f)(\xi')_y\\
      &= \sum_{\substack{x \in f^{-1}(y) \\ \xi' \in Q(\csab, \csac, X)}} \Xi_{\xi'} \circ \xi'_x \\
      &= \sum_{x \in f^{-1}(y)} \mu_{\csaa, \csab, \csac, X}(\Xi)_x\\
      &= Q(\csaa, \csac, f)(\mu_{\csaa, \csab, \csac, X}(\Xi))_y
    \end{align*}
    and so the square commutes.
  \item Naturality in \(\csaa\) and \(\csac\): Suppose \(\phi \in \wstarcat(\csaa, \csaa')\) and \(\psi \in \wstarcat(\csac', \csac)\), then the following square must commute:
    \[
      \begin{tikzcd}[column sep=7em]
	{Q(\csaa, \csab, Q(\csab, \csac, X))} & {Q(\csaa', \csab, Q(\csab, \csac', X))} \\
	{Q(\csaa, \csac, X)} & {Q(\csaa', \csac', X)}
	\arrow["{Q(\phi, \csab, Q(\csab, \psi, X))}", from=1-1, to=1-2]
	\arrow["{\mu_{\csaa, \csab, \csac, X}}"{description}, from=1-1, to=2-1]
	\arrow["{\mu_{\csaa', \csab, \csac', X}}"{description}, from=1-2, to=2-2]
	\arrow["{Q(\phi, \psi, X)}", from=2-1, to=2-2]
      \end{tikzcd}
    \]
    Again taking \(\Xi \in Q(\csaa, \csab, Q(\csab, \csac, X))\) and \(x \in X\) we have:
    \begin{align*}
      \mu_{\csaa', \csab, \csac', X} (Q(\phi, \csab, Q(\csab, \psi, X))(\Xi))
      &= \sum_{\xi' \in Q(\csab, \csac', X)} Q(\phi, \csab, Q(\csab, \psi, X))(\Xi)_{\xi'} \circ \xi'_x\\
      &= \sum_{\xi' \in Q(\csab, \csac', X)} \Bigl(\sum_{\xi \in Q(\csab, \psi, X)^{-1}(\xi')} \phi \circ \Xi_\xi \Bigl) \circ \xi'_x \\
      &= \sum_{\substack{\xi' \in Q(\csab, \csac', X)\\\xi \in Q(\csab, \psi, X)^{-1}(\xi')}} \phi \circ \Xi_\xi \circ \xi'_x\\
      &= \sum_{\xi \in Q(\csab, \csac, X)} \phi \circ \Xi_\xi \circ \xi_x \circ \psi\\
      &= \phi \circ \Bigl( \sum_{\xi \in Q(\csab, \csac, X)} \Xi_\xi \circ \xi_x \Bigr) \circ \psi\\
      &= \phi \circ \mu_{\csaa, \csab, \csac, X}(\Xi)_x \circ \psi\\
      &= Q(\phi, \psi, X) (\mu_{\csaa, \csab, \csac, X}(\Xi))_x
    \end{align*}
  \item Dinaturality in \(\csab\). Suppose \(\phi \in \wstarcat(\csab', \csab)\). Dinaturality is the commutativity of the following square:
    \[
      \begin{tikzcd}
        & {Q(\csaa, \csab',Q(\csab',\csac, X))} & \\
        {Q(\csaa, \csab, Q(\csab', \csac, X))} && {Q(\csaa, \csac, X)} \\
        & {Q(\csaa, \csab, Q(\csab, \csac, X))}
        \arrow["{\mu_{\csaa, \csab', \csac, X}}", from=1-2, to=2-3]
        \arrow["{Q(\csaa, \phi, Q(\csab', \csac, X))}", from=2-1, to=1-2]
        \arrow["{Q(\csaa, \csab, Q(\phi, \csac, X))}"', from=2-1, to=3-2]
        \arrow["{\mu_{\csaa, \csab, \csac, X}}"', from=3-2, to=2-3]
      \end{tikzcd}
    \]
    With \(\Xi \in Q(\csaa, \csab', Q(\csab', \csac, X))\) and \(x \in X\):
    \begin{align*}
      \mu_{\csaa, \csab, \csac, X}(Q(\csaa, \csab, Q(\phi, \csac, X))(\Xi))_x
      &= \sum_{\xi \in Q(\csab, \csac, X)} Q(\csaa, \csab, Q(\phi, \csac, X))(\Xi)_\xi \circ \xi_x\\
      &= \sum_{\xi \in Q(\csab, \csac, X)} \Bigl( \sum_{\xi' \in Q(\phi, \csac, X)^{-1}(\xi)} \Xi_{\xi'} \Bigr) \circ \xi_x\\
      &= \sum_{\substack{\xi \in Q(\csab, \csac, X)\\\xi' \in Q(\phi, \csac, X)^{-1}(\xi)}} \Xi_{\xi'} \circ \xi_x\\
      &= \sum_{\xi' \in Q(\csab', \csac, X)} \Xi_{\xi'} \circ \phi \circ \xi'_x\\
      &= \sum_{\xi' \in Q(\csab', \csac, X)} Q(\csaa, \phi, Q(\csab', \csac, X))(\Xi)_{\xi'} \circ \xi'_x\\
      &= \mu_{\csaa, \csab',\csac, X}(Q(\csaa, \phi, Q(\csab', \csac, X))(\Xi))_x
    \end{align*}
  \end{itemize}
  Hence \(\mu\) is natural.
\end{proof}

\begin{lemma}
  The strength \(\tau\) is a natural transformation.
\end{lemma}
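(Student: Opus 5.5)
We first check that \(\tau\) is well-defined, mirroring the treatment of \(\eta\). The definition \(\tau_{X, \csaa, \csab, Y}(x, \xi)_{(x',y)} = \delta(x, \xi_y)_{x'}\) says that the \((x',y)\) component is \(\xi_y\) when \(x' = x\) and \(0\) otherwise. So the support of \(\tau(x,\xi)\) is \(\{x\} \times \Supp(\xi)\), which is finite. Summing over all \((x',y)\) gives \(\sum_y \xi_y\), which is subunital by assumption. Hence \(\tau(x,\xi) \in Q(\csaa, \csab, X \times Y)\).

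A convenient reformulation is \(\tau(x, \xi) = Q(\csaa, \csab, \langle x, \id_Y\rangle)(\xi)\), where \(\langle x, \id_Y \rangle : Y \to X \times Y\) sends \(y \mapsto (x,y)\). This follows from the definition of the functor action: the preimage of \((x',y)\) under \(\langle x, \id_Y\rangle\) is \(\{y\}\) if \(x'=x\) and empty otherwise. With this identity, naturality reduces largely to the functoriality of \(Q\) already established.

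For naturality, take \(f : X \to X'\), \(g : Y \to Y'\), \(\phi \in \wstarcat(\csaa, \csaa')\) and \(\psi \in \wstarcat(\csab', \csab)\). We must show
\[ \tau_{X', \csaa', \csab', Y'}(f(x), Q(\phi, \psi, g)(\xi)) = Q(\phi, \psi, f \times g)(\tau_{X, \csaa, \csab, Y}(x, \xi)). \]
Using the reformulation, the left side is \(Q(\csaa', \csab', \langle f(x), \id_{Y'}\rangle)(Q(\phi,\psi,g)(\xi))\). By functoriality this equals \(Q(\phi, \psi, \langle f(x), \id_{Y'}\rangle \circ g)(\xi)\). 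The right side is \(Q(\phi,\psi, (f\times g)\circ\langle x, \id_Y\rangle)(\xi)\). Both underlying functions are \(y \mapsto (f(x), g(y))\), so the two sides agree.

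Alternatively, one can compute componentwise at \((x'', y')\). Both sides reduce to \(\sum_{y \in g^{-1}(y')} \phi \circ \xi_y \circ \psi\) when \(x'' = f(x)\) and to \(0\) otherwise. This is the same reindexing of finite sums used in the naturality of \(\mu\).

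The only step needing care is verifying the reformulation \(\tau(x,\xi) = Q(\csaa,\csab,\langle x,\id_Y\rangle)(\xi)\), specifically the preimage computation. Otherwise, the subtlety is bookkeeping: the variances must match, with \(\phi\) covariant in the input algebra and \(\psi\) contravariant in the output. Everything else is formal.
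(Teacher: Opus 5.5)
Your proof is correct, and your well-definedness check (support \(\{x\}\times\Supp(\xi)\), total sum \(\sum_y \xi_y\)) matches the paper's. For naturality you take a different route.

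The paper verifies the square componentwise, in two pieces. It first proves \(Q(\phi,\psi,\id_X\times g)(\tau(x,\xi)) = \tau(x,Q(\phi,\psi,g)(\xi))\) by pulling the \(\delta\) through the sum over \(g^{-1}(y)\). It then proves \(Q(\csaa,\csab,f\times\id_Y)(\tau(x,\xi)) = \tau(f(x),\xi)\) using the pushforward formula for Dirac instruments. Joint naturality follows from these two by functoriality.

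You instead identify \(\tau(x,-)\) with the functor action \(Q(\csaa,\csab,\langle x,\id_Y\rangle)\). The whole square in \((f,g,\phi,\psi)\) then reduces to functoriality of \(Q\) together with the identity of functions \(\langle f(x),\id_{Y'}\rangle\circ g = (f\times g)\circ\langle x,\id_Y\rangle\). This is the canonical strength that every functor on \(\Set\) carries, and it gives you more than brevity:
\begin{itemize}
\item It makes your separate support and subunitality check redundant, since the functor lemma already gives \(Q(\csaa,\csab,\langle x,\id_Y\rangle)(\xi)\in Q(\csaa,\csab,X\times Y)\).
\item It would shorten the subsequent strength laws to one-line arguments: the unitor and associator laws directly, and compatibility with \(\eta\) and \(\mu\) via their naturality.
\end{itemize}

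The paper's computation is more pedestrian, but it stays in the explicit sum-over-preimages style used throughout that appendix. Your ``alternative'' componentwise check is essentially the paper's two computations merged into one.
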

\begin{proof}
  We first note that \(\Supp(\tau_{X,\csaa,\csab,Y}(x, \xi)) = \{(x, y) \mid y \in \Supp(\xi)\}\), and so is finite. The sum of components of \(\tau_{X,\csaa,\csab,Y}(x, \xi)\) is equal to the sum of components of \(\xi\) so is subunital. Hence, \(\tau_{X, \csaa, \csab, Y}(x, \xi)\) is well-defined.

  Now suppose \(\phi \in \wstarcat(\csaa, \csaa')\), \(\psi \in \wstarcat(\csab', \csab)\), and \(g \in \Set(Y, Y')\). Then:
  \begin{align*}
    Q(\phi, \psi, \id_X \times g) (\tau_{X, \csaa, \csab, Y'}(x, \xi))_{(x', y)}
    &= \sum_{y' \in g^{-1}(y)} \phi \circ \tau_{X, \csaa, \csab, Y'}(x, \xi)_{(x', y')} \circ \psi\\
    &= \sum_{y' \in g^{-1}(y)} \phi \circ \delta(x, \xi_{y'})_{x'} \circ \psi\\
    &= \delta(x, \sum_{y' \in g^{-1}(y)} \phi \circ \xi_{y'} \circ \psi)_{x'}\\
    &= \delta(x, Q(\phi, \psi, g)(\xi)_{y})_{x'}\\
    &= \tau_{X, \csaa', \csab', Y}(x, Q(\phi, \psi, g)(\xi))_{(x', y)}
  \end{align*}
  and for \(f \in \Set(X, X')\):
  \begin{align*}
    Q(\csaa, \csab, f \times \id_Y)(\tau_{X, \csaa, \csab, Y}(x, \xi))_{(x', y)}
    &= \sum_{x'' \in f^{-1}(x')} \tau_{X, \csaa, \csab, Y}(x, \xi)_{(x'', y)} \\
    &= \sum_{x'' \in f^{-1}(x')} \delta(x, \xi_y)_{x''}\\
    &= Q(\csaa, \csab, f)(\delta(x, \xi_y))_{x'}\\
    &= \delta(f(x), \xi_y)_{x'}\\
    &=\tau_{X', \csaa, \csab, Y}(f(x), \xi)_{(x', y)}
  \end{align*}
  where the second to last line is by \cref{lem:dirac-well-defined} and so \(\tau\) is a natural transformation.
\end{proof}

\begin{theorem}
  \((Q, \eta, \mu, \tau)\) is a \(\wstarcat^{\text{op}}\)-parameterised monad.
\end{theorem}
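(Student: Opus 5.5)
The plan is to mirror the proof of \cref{thm:is-a-monad} for the orchestra monad. Functoriality of \(Q\), well-definedness of the Dirac instruments, and naturality of \(\eta\), \(\mu\) and \(\tau\) have already been established in the preceding lemmas. So it remains to check the three monad laws, then the four strength coherence laws (left unitor, associator, unit, multiplication). Each law is an equation between finite instruments, so it suffices to compare components at each outcome \(x \in X\) (or \((x,y)\) for the strength laws). Every sum involved ranges over a finite support, so it is harmless to reindex and interchange sums, and to distribute composition in \(\wstarcat\) over these finite sums, since composition is bilinear.

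For the unit laws, take \(\xi \in Q(\csaa,\csab,X)\) and \(x \in X\).
\begin{itemize}
\item For \(\mu \circ Q(\csaa,\csab,\eta)\), the sum \(\sum_{\chi} Q(\csaa,\csab,\eta)(\xi)_\chi \circ \chi_x\) only has contributions from \(\chi = \delta(x',\id_\csab)\) with \(x'\) in the support of \(\xi\). It therefore collapses to \(\sum_{x'} \xi_{x'} \circ \delta(x',\id)_x = \xi_x\). Equivalently, this is the extension formula \(f^\#(\xi)_y = \sum_x \xi_x \circ f(x)_y\) derived above, specialised to \(f = \eta\).
\item For \(\mu \circ \eta\), the instrument \(\eta(\xi) = \delta(\xi,\id_\csaa)\) has a single nonzero component, so \(\mu(\eta(\xi))_x = \id \circ \xi_x = \xi_x\).
\end{itemize}

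For associativity, take \(\aleph \in Q(\csaa,\csab,Q(\csab,\csac,Q(\csac,\csad,X)))\). Unfolding \(\mu \circ Q(\csaa,\csab,\mu)\) and reindexing over preimages exactly as in the naturality proof for \(\mu\), I expect both sides to reduce to the same triple sum
\[
\sum_{\Xi,\ \xi} \aleph_\Xi \circ \Xi_\xi \circ \xi_x .
\]
Here \(\Xi\) ranges over \(Q(\csab,\csac,Q(\csac,\csad,X))\) and \(\xi\) over \(Q(\csac,\csad,X)\). Getting there uses associativity of composition and bilinearity to pull the inner sums through.

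The strength laws are componentwise computations with \(\delta\):
\begin{itemize}
\item The left unitor law and the unit law \(\tau(x,\delta(y,\id)) = \delta((x,y),\id)\) are immediate from the definition \(\tau(x,\xi)_{(x',y)} = \delta(x,\xi_y)_{x'}\).
\item For the associator, both sides send \(((x,y),\xi)\) to the instrument whose \((x',(y',z))\)-component is \(\xi_z\) when \(x'=x\) and \(y'=y\), and \(0\) otherwise.
\item For the multiplication law, both sides at \((x',y)\) evaluate to \(\delta\bigl(x,\sum_{\xi}\Xi_\xi\circ\xi_y\bigr)_{x'}\). On the left-hand side this requires noting that \(Q(\csaa,\csab,\tau(x,-))(\tau(x,\Xi))\) is supported on instruments of the form \(\tau(x,\xi)\). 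Since \(\tau(x,-)\) is injective, the reindexing is a bijection.
\end{itemize}

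The main obstacle is purely bookkeeping. It is the careful reindexing of nested sums over preimages in the associativity and \(\mu\)-strength laws, where the outcome set is itself a set of instruments, so one must check that the preimage decompositions are disjoint and that only finitely many terms are nonzero. I would handle this by first proving the general identity \(\mu(Q(\csaa,\csab,g)(\Xi))_x = \sum_{\chi} \Xi_\chi \circ g(\chi)_x\), which is the extension formula above. Both the associativity law and the \(\mu\)-strength law then follow by one application of that identity on each side.
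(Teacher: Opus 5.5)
Your proposal is correct and follows the paper's proof essentially step for step. The paper also checks the two unit laws componentwise, reduces both sides of associativity to the triple sum $\sum_{\Xi,\xi}\aleph_\Xi\circ\Xi_\xi\circ\xi_x$, and evaluates the four strength laws case by case using $\delta$. Your general identity $\mu(Q(\csaa,\csab,g)(\Xi))_x=\sum_\chi \Xi_\chi\circ g(\chi)_x$ is just the extension formula the paper derives after the definition, which it then uses inline. It also makes your injectivity remark about $\tau(x,-)$ unnecessary (though true), because summing over preimages partitions the domain whether or not the map is injective.
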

\begin{proof}
  We begin with the monad laws for unitality. We must have \(\mu_{\csaa, \csab, \csab, X} \circ Q(\csaa, \csab, \eta_{\csaa X}) = \id_{Q(\csaa, \csab, X)}\). Taking \(\xi \in Q(\csaa, \csab, X)\) and \(x \in X\):
  \begin{align*}
    \mu_{\csaa, \csab, \csab, X}(Q(\csaa, \csab, \eta_{\csaa X})(\xi))_x
    &= \sum_{\xi' \in Q(\csab, \csab, X)} Q(\csaa, \csab, \eta_{\csaa X})(\xi)_{\xi'} \circ \xi'_x\\
    &= \sum_{\substack{\xi' \in Q(\csab, \csab, X)\\x' \in \eta_{\csaa X}^{-1}(\xi')}} \xi_{x'} \circ \xi'_x\\
    &= \sum_{x' \in X} \xi_{x'} \circ \eta_{\csaa X}(x')_x\\
    &= \xi_x
  \end{align*}
  Similarly we must have \(\mu_{\csaa, \csaa, \csab, X} \circ \eta_{\csaa Q(\csaa, \csab, X)} = \id_{Q(\csaa, \csab, X)}\). Again for all \(\xi\) and \(x\) we have:
  \begin{align*}
    \mu_{\csaa, \csaa, \csab, X}(\eta_{\csaa Q(\csaa, \csab, X)}(\xi))_x
    &= \sum_{\xi' \in Q(\csaa, \csab, X)} \eta_{\csaa Q(\csaa, \csab, X)}(\xi)_{\xi'} \circ \xi'_x\\
    &= \sum_{\xi' \in Q(\csab, \csab, X)}
      \begin{cases}
        \id_{\csaa} \circ \xi'_x&\text{if }\xi = \xi'\\
        0 &\text{otherwise}
      \end{cases}\\
    &= \xi_x
  \end{align*}

For the associativity monad law, we need the following diagram to commute:
\[
  \begin{tikzcd}[column sep = 8em]
    {Q(\csaa, \csab, Q(\csab, \csac, Q(\csac, \csad, X)))} & {Q(\csaa, \csac, Q(\csac, \csad, X))} \\
    {Q(\csaa, \csab, Q(\csab, \csad, X))} & {Q(\csaa, \csad, X)}
    \arrow["{\mu_{\csaa, \csab, \csac, Q(\csac, \csad, X)}}", from=1-1, to=1-2]
    \arrow["{Q(\csaa, \csab, \mu_{\csab, \csac, \csad, X})}"', from=1-1, to=2-1]
    \arrow["{\mu_{\csaa, \csac, \csad, X}}", from=1-2, to=2-2]
    \arrow["{\mu_{\csaa, \csab, \csad, X}}", from=2-1, to=2-2]
  \end{tikzcd}
\]
Take \(\aleph \in Q(\csaa, \csab, Q(\csab, \csac, Q(\csac, \csad, X)))\) and \(x \in X\). Then evaluating the bottom left of the square gives:
\begin{align*}
  \mu_{\csaa, \csab, \csad, X}(Q(\csaa, \csab, \mu_{\csab, \csac, \csad, X})(\aleph))_x
  &= \sum_{\xi \in Q(\csab, \csad, X)} Q(\csaa, \csab, \mu_{\csab, \csac, \csad, X})(\aleph)_\xi \circ \xi_x\\
  &= \sum_{\xi \in Q(\csab, \csad, X)} \sum_{\Xi \in \mu_{\csab, \csac, \csad, X}^{-1}(\xi)} \aleph_\Xi \circ \xi_x\\
  &= \sum_{\substack{\xi \in Q(\csab, \csad, X)\\\Xi \in \mu_{\csab, \csac, \csad, X}^{-1}(\xi)}} \aleph_\Xi \circ \xi_x\\
  &= \sum_{\Xi \in Q(\csab, \csac, Q(\csac, \csad, X))} \aleph_\Xi \circ \mu_{\csab, \csac, \csad, X}(\Xi)_x\\
  &= \sum_{\Xi \in Q(\csab, \csac, Q(\csac, \csad, X))} \aleph_\Xi \circ \Bigl( \sum_{\xi \in Q(X)} \Xi_\xi \circ \xi_x \Bigr)\\
  &= \sum_{\substack{\Xi \in Q(\csab, \csac, Q(\csac, \csad, X)) \\ \xi \in Q(\csac, \csad, X)}} \aleph_\Xi \circ \Xi_\xi \circ \xi_x\\
  \intertext{and evaluating the top right gives:}
  \mu_{\csaa, \csac, \csad, X}(\mu_{\csaa, \csab, \csac, Q(\csac, \csad, X)}(\aleph))_x &= \sum_{\xi \in Q(\csac, \csad, X)} \mu_{\csaa, \csab, \csac, Q(\csac, \csad, X)}(\aleph)_\xi \circ \xi_x\\
  &= \sum_{\xi \in Q(\csac, \csad, X)} \Bigl( \sum_{\Xi \in Q(\csaa, \csab, Q(\csab, \csac, X))} \aleph_\Xi \circ \Xi_\xi \Bigr) \circ \xi_x\\
  &= \sum_{\substack{\Xi \in Q(\csab, \csac, Q(\csac, \csad, X)) \\ \xi \in Q(\csac, \csad, X)}} \aleph_\Xi \circ \Xi_\xi \circ \xi_x
\end{align*}
and so the square commutes as required.

For strength, we first show that it interacts well with \(1 = \{*\}\):
  \[ Q(\csaa, \csab, \lambda_X) \circ \tau_{1, \csaa, \csab, X} = \lambda_{Q(\csaa, \csab, X)} \]
  Where \(\lambda_X : 1 \times X \to X\) is the left unitor isomorphism. For \(\xi \in Q(\csaa, \csab, X)\) and \(x \in X\):
  \begin{align*}
    Q(\csaa, \csab, \lambda_X)(\tau_{1, \csaa, \csab, X}(*, \xi))_x &= \sum_{x \in \lambda_X^{-1}(x)} \tau_{1, \csaa, \csab, X}(*, \xi)_x\\
                                                                    &= \tau_{1, \csaa, \csab, X}(*, \xi)_{(*, x)}\\
                                                                    &= \xi_x \\
                                                                    &= \lambda_{Q(\csaa, \csab, X)}(*, \xi)_x
  \end{align*}

  For compatibility with the associator isomorphism \(\alpha_{X,Y,Z} : (X \times Y) \times Z \to X \times (Y \times Z)\), we need the following to commute:
  \[
    \begin{tikzcd}
      {(X \times Y)  \times Q(\csaa, \csab, Z)} && {Q(\csaa, \csab, (X \times Y) \times Z)} \\
      {X \times (Y \times Q(\csaa, \csab, Z))} & {X \times Q(\csaa, \csab, Y \times Z)} & {Q(\csaa, \csab, X \times (Y \times Z))}
      \arrow["{\tau_{X \times Y, \csaa, \csab, Z}}", from=1-1, to=1-3]
      \arrow["{\alpha_{X, Y, Q(\csaa, \csab, Z)}}"', from=1-1, to=2-1]
      \arrow["{Q(\csaa, \csab, \alpha_{X,Y,Z})}", from=1-3, to=2-3]
      \arrow["{\id_X \times \tau_{Y, \csaa, \csab, Z}}", far, from=2-1, to=2-2]
      \arrow["{\tau_{X, \csaa, \csab, Y \times Z}}", far, from=2-2, to=2-3]
    \end{tikzcd}
  \]
  With \(\xi \in Q(\csaa, \csab, Z)\), \(x, x' \in X\), \(y, y' \in Y\), and \(z \in Z\):
  \begin{align*}
    &\phantom{{}={}}\tau_{X, \csaa, \csab, Y \times Z} (\id_X \times \tau_{Y, \csaa, \csab, Z} (\alpha_{X, Y, Q(\csaa, \csab, Z)}((x', y'), \xi)))_{(x,(y,z))} \\
    &= \tau_{X, \csaa, \csab, Y \times Z} (x', \tau_{Y, \csaa, \csab, Z}(y', \xi))_{(x, (y, z))}\\
    &=
      \begin{cases}
        \tau_{Y, \csaa, \csab, Z}(y', \xi)_{(y, z)}&\text{if }x = x'\\
        0 &\text{otherwise}
      \end{cases}\\
    &=
      \begin{cases}
        \xi_z &\text{if }x = x'\text{ and }y = y'\\
        0&\text{otherwise}
      \end{cases}\\
    &= \tau_{X \times Y, \csaa, \csab, Z}((x', y'), \xi)_{((x, y), z)}\\
    &= Q(\csaa, \csab, \alpha_{X, Y, Z})(\tau_{X \times Y, \csaa, \csab, Z}((x', y'), \xi))_{(x, (y, z))}
  \end{align*}
  For compatibility with the unit \(\eta\):
  \[ \tau_{X, \csaa, \csaa, Y} \circ (\id_X \times \eta_{\csaa Y}) = \eta_{\csaa X \times Y} \]
  Taking \(x, x' \in X\), and \(y, y' \in Y\):
  \begin{align*}
    \tau_{X, \csaa, \csaa, Y}(x, \eta_{\csaa Y}(y))_{(x', y')} &=
                                                                         \begin{cases}
                                                                           \eta_{\csaa, \csab, Y}(y)_{y'}&\text{if }x = x'\\
                                                                           0&\text{otherwise}
                                                                         \end{cases}\\
    &=
      \begin{cases}
        \id_{\csaa} &\text{if }x = x'\text{ and }y = y'\\
        0&\text{otherwise}
      \end{cases}\\
    &= \eta_{\csaa X \times Y}(x, y)_{(x', y')}
  \end{align*}
  Finally for compatibility with \(\mu\) the following must commute:
  \[
    \begin{tikzcd}
      {X \times Q(\csaa, \csab, Q(\csab, \csac, Y))} & {Q(\csaa, \csab, X \times Q(\csab, \csac, Y))} & {Q(\csaa, \csab, Q(\csab, \csac, X \times Y))} \\
      {X \times Q(\csaa, \csac, Y)} && {Q(\csaa, \csac, X \times Y)}
      \arrow["{\tau_{X, \csaa, \csab, Q(\csab, \csac, Y)}}", far, from=1-1, to=1-2]
      \arrow["{\id_X \times \mu_{\csaa, \csab, \csac, Y}}", from=1-1, to=2-1]
      \arrow["{Q(\csaa, \csab, \tau_{X, \csab, \csac, Y})}", far, from=1-2, to=1-3]
      \arrow["{\mu_{\csaa, \csab, \csac, X \times Y}}", from=1-3, to=2-3]
      \arrow["{\tau_{X, \csaa, \csac, Y}}", from=2-1, to=2-3]
    \end{tikzcd}
  \]
  With \(x', x \in X\), \(y \in Y\), and \(\Xi \in Q(\csaa, \csab, Q(\csab, \csac, Y))\):
  \begin{align*}
    &\phantom{{}={}}\mu_{\csaa, \csab, \csac, X \times Y}(Q(\csaa, \csab, \tau_{X, \csab, \csac, Y})(\tau_{X, \csaa, \csab, Q(\csab, \csac, Y)}(x', \Xi)))_{(x, y)}\\
    &= \sum_{\xi \in Q(\csab, \csac, X \times Y)} Q(\csaa, \csab, \tau_{X, \csab, \csac, Y})(\tau_{X, \csaa, \csab, Q(\csab, \csac, Y)}(x', \Xi))_\xi \circ \xi_{(x, y)}\\
    &= \sum_{\substack{\xi \in Q(\csab, \csac, X \times Y)\\(x'', \xi') \in \tau_{X, \csab, \csac, Y}^{-1}(\xi)}} \tau_{X, \csaa, \csab, Q(\csab, \csac, Y)}(x', \Xi)_{(x'', \xi')} \circ \xi_{(x, y)}\\
    &= \sum_{\substack{x'' \in X\\ \xi' \in Q(\csab, \csac, Y)}} \tau_{X, \csaa, \csab, Q(\csab, \csac, Y)}(x', \Xi)_{(x'', \xi')} \circ \tau_{X, \csab, \csac, Y}(x'', \xi')_{(x, y)}\\
    &= \sum_{\xi' \in Q(\csab, \csac, Y)} \tau_{X, \csaa, \csab, Q(\csab, \csac, Y)}(x', \Xi)_{(x, \xi')} \circ \xi'_y\\
    &=
      \begin{cases}
        \sum_{\xi' \in Q(\csab, \csac, Y)} \Xi_{\xi'} \circ \xi'_y&\text{if }x = x'\\
        0 &\text{otherwise}
      \end{cases}\\
    &=
      \begin{cases}
        \mu_{\csaa, \csab, \csac, Y}(\Xi)_y&\text{if }x = x'\\
        0 &\text{otherwise}
      \end{cases}\\
    &= \tau_{X, \csaa, \csac, Y}(x', \mu_{\csaa, \csab, \csac, Y}(\Xi))_{(x,y)}
  \end{align*}
  Therefore, \((Q, \eta, \mu, \tau)\) is a parameterised strong monad.
\end{proof}

\section{Finite Quantum Orchestras}
\label{sec:finite-quant-orch}

The collection of quantum orchestras introduced above is large, and it can be unwieldy to perform calculations or define operations on it, motivating the search for well-behaved subsets of this structure. We have already seen the Dirac quantum orchestra, which corresponds to a quantum operations which always returns the same classical value. A natural extension is to consider the quantum orchestras who have a finite number of possible outputs. We will see that these correspond closely to quantum instruments.

\begin{definition}[Finite Quantum Orchestra]
  \label{def:fin-quantum-orchestra}
  Recall that \(\delta(x, \phi)\) is a Dirac quantum orchestra, and let addition of quantum orchestras be given pointwise (when this is well defined). Then, a quantum orchestra is \emph{finite} if it takes the form:
  \[ \sum_i \delta(x_i, \phi_i)\]
  where \(x_0, \dots, x_n \in X\) and \(\phi_0, \dots, \phi_n \in \wstarcat(\csab, \csaa)\) such that:
  \[ \phi_0(1) + \dots + \phi_n(1) \leq 1\]
  that is \(\sum\phi_i\) is subunital. We write \(\orch_{\mathsf{f}}(\csaa, \csab, X)\) for the subset of \(\orch(\csaa, \csab, X)\) consiting of finite quantum orchestras.
\end{definition}

The condition that \(\sum_i\phi_i\) is subunital is clear necessary for being a quantum orchestra. Consider \(k\) which maps everything to the identity channel. Then:
\[ \sum_i \delta(x_i, \phi_i)_k = \sum_i \phi_i \circ k(x_i) = \sum_i \phi_i\]
and so the right hand side must be subunital. It in fact turns out that this condition is sufficient to be a quantum orchestra, meaning all finite quantum orchestras take this form.

\begin{proposition}
  Let \(x_0, \dots, x_n\) and \(\phi_0, \dots, \phi_n\) be as in \cref{def:fin-quantum-orchestra}. Then,
  \[ \sum_i \delta(x_i, \phi_i) \in \orch_{\mathsf{f}}(\csaa, \csab, X)\]
  and so the finite quantum orchestras are exactly the sums of this form.
\end{proposition}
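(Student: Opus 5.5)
Writing \(\xi = \sum_i \delta(x_i, \phi_i)\) for the pointwise sum, so that \(\xi_k = \sum_i \phi_i \circ k(x_i)\) for every \(\csak\) and every \(k \in \dcpo(X, \wstarcat(\csak, \csab))\), the plan is to check first that each component \(\xi_k\) really lies in \(\wstarcat(\csak, \csaa)\), and then to verify the three axioms of \cref{def:quantum-orchestra}. Of these steps, only the first needs an actual estimate.

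\textbf{Step 1: well-definedness.} A finite sum of normal completely positive maps is again normal and completely positive, so the only issue is subunitality. Each \(k(x_i)\) is subunital, so \(k(x_i)(1_\csak) \le 1_\csab\). Positive maps preserve the Löwner order, which gives \(\phi_i(k(x_i)(1_\csak)) \le \phi_i(1_\csab)\). Summing over \(i\) yields
\[\xi_k(1_\csak) \le \sum_i \phi_i(1_\csab) \le 1_\csaa\]
by the hypothesis of \cref{def:fin-quantum-orchestra}. This is the only point where the hypothesis enters, and it is the one place needing care. Note that individual Dirac orchestras cannot simply be "added" inside \(\orch\), since sums need not be subunital.

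\textbf{Step 2: the axioms.} Each axiom reduces to the corresponding property of the single Dirac terms \(\delta(x_i, \phi_i)\), which were already verified when Dirac orchestras were defined.
\begin{itemize}
\item \emph{Subconvexity} and \emph{compositionality} follow termwise from linearity of composition on the right:
\[\phi_i \circ (\theta k(x_i) + \rho k'(x_i)) = \theta\, \phi_i \circ k(x_i) + \rho\, \phi_i \circ k'(x_i), \qquad \phi_i \circ k(x_i) \circ \chi = \bigl(\phi_i \circ k(x_i)\bigr) \circ \chi.\]
Summing over \(i\) gives both identities.
\item \emph{Continuity}: for directed \(K\), suprema in \(\dcpo(X, \wstarcat(\csak, \csab))\) are pointwise, so \((\sup K)(x_i) = \sup_{k \in K} k(x_i)\). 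Composition is Scott-continuous by \cite[Proposition~C.13]{cho:wstarisdcpo}, hence \(\phi_i \circ (\sup K)(x_i) = \sup_{k \in K} \phi_i \circ k(x_i)\).
\item It remains to pass the finite sum through directed suprema. Each family \(\{\phi_i \circ k(x_i)\}_{k \in K}\) is directed and indexed by the same directed \(K\), and addition is monotone and separately continuous (as observed in the proof of \cref{lem:orch-dcpo}). Therefore
\[\sup_k \sum_i \phi_i \circ k(x_i) = \sum_i \sup_k \phi_i \circ k(x_i).\]
Here one uses a standard diagonal argument: since \(K\) is directed, the supremum over the product index set \(K^{n+1}\) agrees with the supremum over the diagonal, and one can take suprema one argument at a time. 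Note that the partial sums and the suprema are bounded by the subunital \(\xi_{\sup K}\) from Step 1, so everything stays inside \(\wstarcat\).
\end{itemize}

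\textbf{Step 3: conclusion.} With the axioms established, \(\xi \in \orch(\csaa, \csab, X)\), and \(\xi\) is of the form required by \cref{def:fin-quantum-orchestra}, so \(\xi \in \orch_{\mathsf f}(\csaa, \csab, X)\). The converse direction was shown just before the statement: evaluating any finite orchestra at the constant continuation \(k = \lambda x.\, \id\) forces \(\sum_i \phi_i\) to be subunital. Together these give the characterisation that the finite quantum orchestras are exactly such sums.
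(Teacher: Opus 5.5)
Your proof is correct and follows essentially the same route as the paper: the same subunitality estimate $\xi_k(1) \leq \sum_i \phi_i(1) \leq 1$, then termwise verification of subconvexity and compositionality, and continuity from the Scott-continuity of composition and of addition. You give more detail than the paper on why the finite sum commutes with directed suprema, and you also recall the converse direction via the constant identity continuation, but the argument is the same.
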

\begin{proof}
  We must first show that the sum sends \(k \in \dcpo(X, \wstarcat(\csak, \csab))\) to an element of \(\wstarcat(\csak, \csaa)\). Applying it to the unit \(1\) we get:
  \begin{align*}
    \sum_i \delta(x_i, \phi_i)_k(1)
    &= \left(\sum_i \phi_i \circ k(x_i)\right)(1)\\
    &= \sum_i \phi_i (k(x_i)(1))\\
    &\leq \sum_i \phi_i(1)&\text{as each }k(x_i)\text{ is subunital}\\
    &\leq 1&\text{by assumption on the }\phi_i
  \end{align*}
  Continuity is clear from the continuity of the Dirac quantum orchestra and the continuity of addition. For subconvexity, let \(\lambda, \rho \geq 0\) with \(\lambda + \rho \leq 1\). Then for all \(k\) and \(k'\):
  \[ \sum_i \delta(x_i, \phi_i)_{\lambda k + \rho k'} = \sum_i \lambda\delta(x_i, \phi_i)_k + \rho \delta(x_i, \phi_i)_{k'} = \lambda \sum_i \delta(x_i, \phi_i)_k + \rho \sum_i \delta(x_i, \phi_i)_{k'} \]
  Lastly, for compositionality, suppose \(\chi \in \wstarcat(\csak', \csak)\). Then:
  \[\sum_i \delta(x_i, \phi_i)_{k(\_) \circ \chi} = \sum_i \delta(x_i, \phi_i) \circ \chi = \left( \sum_i \delta(x_i, \phi_i) \right) \circ \chi \]
  and so the sum is a quantum orchestra.
\end{proof}

We now investigate the action of the monad operations on finite quantum instruments. We first consider the action of the functor. Let
\[ \sum_i \delta(x_i, \phi_i)\]
be a finite quantum orchestra. Suppose \(\phi \in \wstarcat(\csaa, \csaa')\), \(\psi \in \wstarcat(\csab', \csab)\) and \(f : X \to Y\). Then:
\begin{align*}
  \orch(\phi, \psi, f)\left( \sum_i \delta(x_i, \phi_i) \right)_k
  &= \phi \circ \sum_i \delta(x_i, \phi_i)_{\psi \circ k(f(\_))}\\
  &= \sum_i \phi \circ \phi_i \circ \psi \circ k(f(x_i))\\
  &= \sum_i \delta(f(x_i), \phi \circ \phi_i \circ \psi)_k
\end{align*}
and so \(\orch(\phi, \psi, f)\) preserves finiteness.

By definition, the unit of the monad always produces a finite quantum orchestra. The action of the strength of the monad is also clear. Suppose \(x \in X\) and that:
\[ \sum_i \delta(y_i, \phi_i)\]
is a finite quantum orchestras in \(\orch(\csaa, \csab, Y)\). Then:
\[ \tau_{X, \csaa, \csab, Y}\left(x, \sum_i \delta(y_i, \phi_i)\right)_k = \sum_i \delta(y_i, \phi_i)_{k(x, \_)} = \sum_i \phi_i \circ k(x, y_i) = \sum_i \delta(x, y_i)_k\]

We now move our attention to the multiplication. Let \(I\) be finite and \(\xi_i\) be a finite quantum orchestras for \(i \in I\), each given by:
\[ \xi_i = \sum_{j \in J_i} \delta(x_{ij}, \psi_{ij}) \]
where \(x \in X\) and \(\psi_{ij} \in \wstarcat(\csac, \csab)\). Further suppose:
\[ \sum_{i \in I} \delta(\xi_i, \phi_i)\]
is a finite quantum orchestra. Then we have:
\begin{align*}
  \mu_{\csaa, \csab, \csac, X}\left( \sum_i \delta(\xi_i, \phi_i) \right)_k
  &= \sum_i \delta(\xi_i, \phi_i)_{\ev_k}\\
  &= \sum_i \phi_i \circ \ev_k(\xi_i)\\
  &= \sum_i \phi_i \circ \sum_{j \in J_i} \delta(x_{ij}, \psi_{ij})_k\\
  &= \sum_i \sum_{j \in J_i} \phi_i \circ  \psi_{ij} \circ k(x_{ij})
\end{align*}
which produces a finite quantum orchestra (as the disjoint union of the \(J_i\) is finite).

Although we know the form that finite quantum orchestras take, it can be hard to utilise this to define operations on them, as their decompositions may not be unique. However, functions can be defined this way if they satisfy certain conditions.

\begin{proposition}
  \label{prop:ap-functor}
  Let \(F : \wstarcat \to \wstarcat\) be a functor such that the maps on morphisms are additive. Then, there are functions \(F^\dagger_{\csaa, \csab, X} : \orch_{\mathsf{f}}(\csaa, \csab, X) \to \orch_{\mathsf{f}}(F(\csaa), F(\csab), X)\) given by:
  \[ \sum_i \delta(x_i, \phi_i) \mapsto \sum_i \delta(x_i, F(\phi_i))\]
  These maps form a natural transformation.
\end{proposition}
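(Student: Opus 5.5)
The plan is to define \(F^\dagger\) on a chosen decomposition and show that the result does not depend on that choice; naturality should then follow from the explicit formulas for the functor action on finite orchestras given above.

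\emph{Step 1: the image is a finite orchestra.} Given \(\sum_i \delta(x_i,\phi_i)\) with \(\sum_i\phi_i\) subunital, additivity of \(F\) on morphisms gives \(\sum_i F(\phi_i) = F(\sum_i \phi_i)\). Here \(\sum_i\phi_i\) is an nCPSU map \(\csab\to\csaa\), so the right-hand side is a morphism of \(\wstarcat\) and hence subunital. By the preceding proposition, \(\sum_i\delta(x_i,F(\phi_i))\) is therefore a finite quantum orchestra in \(\orch_{\mathsf{f}}(F(\csaa),F(\csab),X)\).

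\emph{Step 2: independence of the decomposition.} I expect this to be the main obstacle. First I regroup by points. Let \(S=\{x_i\}\) be the finite set of points occurring, and put \(\Phi_x=\sum_{x_i=x}\phi_i\). Additivity gives \(\sum_i\delta(x_i,F(\phi_i))=\sum_{x\in S}\delta(x,F(\Phi_x))\). It therefore suffices to show that whenever \(\sum_{x\in S}\delta(x,\Phi_x)=\sum_{x\in S}\delta(x,\Phi'_x)\), with \(S\) a common finite set of distinct points (padding by zeros), we have \(\Phi_x=\Phi'_x\) for all \(x\in S\).

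To separate points I will use Scott-open indicator continuations. For a finite \(G\subseteq X\), the set \(U=X\setminus{\downarrow}G\) is Scott-open, because \({\downarrow}g\) is Scott-closed and a finite union of Scott-closed sets is Scott-closed. Take \(\csak=\csab\) and define \(k_U(x)=\id_\csab\) if \(x\in U\) and \(k_U(x)=0\) otherwise. This \(k_U\) is monotone, and it preserves directed suprema since \(U\) is inaccessible from below, so \(k_U\in\dcpo(X,\wstarcat(\csab,\csab))\). Evaluating a finite orchestra at \(k_U\) gives \(\sum_{x\in S\cap U}\Phi_x\).

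Now pick \(x_0\) maximal in \(S\) and set \(G=S\setminus\{x_0\}\). Then \(S\cap U=\{x_0\}\): every other point of \(S\) lies in \({\downarrow}G\), and by maximality \(x_0\not\leq g\) for \(g\in G\), so \(x_0\notin{\downarrow}G\). Hence \(\Phi_{x_0}=\Phi'_{x_0}\). Components are maps into a vector space, so I can subtract \(\delta(x_0,\Phi_{x_0})\) from both sides (equality holds pointwise in \(k\)) and induct on \(|S|\). Consequently the regrouped decomposition is unique, and \(F^\dagger\) is a well-defined function.

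\emph{Step 3: naturality.} This is in \(\csaa,\csab\) through \(F\), and in \(X\). By the computation above, \(\orch(\phi,\psi,f)\big(\sum_i\delta(x_i,\phi_i)\big)=\sum_i\delta(f(x_i),\phi\circ\phi_i\circ\psi)\). Applying \(F^\dagger\) to this gives \(\sum_i\delta(f(x_i),F(\phi)\circ F(\phi_i)\circ F(\psi))\), using functoriality of \(F\). This equals \(\orch(F\phi,F\psi,f)\big(F^\dagger(\sum_i\delta(x_i,\phi_i))\big)\), which is the required naturality square. Step 2 is what justifies computing both sides on an arbitrary chosen decomposition.
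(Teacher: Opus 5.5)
Your proposal is correct and takes essentially the same route as the paper. Both arguments regroup by points and separate the points of the finite set $S$ from the top down, using indicator continuations of Scott-open sets that are complements of finitely generated down-sets; the paper argues by contradiction from a maximal point where the equality fails, while you induct by isolating a maximal point of $S$ exactly. Naturality is then checked on an arbitrary decomposition via functoriality of $F$, as in the paper. Your Step~1, which uses additivity to show that $\sum_i F(\phi_i)$ is subunital so the image lies in $\orch_{\mathsf{f}}(F(\csaa),F(\csab),X)$, makes explicit a point the paper leaves implicit.
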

\begin{proof}
  Suppose \(\xi \in \orch_{\mathsf{f}}(\csaa, \csab, X)\) and that there are decompositions:
  \[ \sum_{i \in I} \delta(x_i, \phi_i) = \xi = \sum_{j \in J} \delta(x_j', \phi_j')\]
  Then we must show that:
  \[ \sum_{i \in I} \delta(x_i, F(\phi_i)) = \sum_{j \in J} \delta(x_j', F(\phi_j'))\]
  Now let \(S\) be the (finite) set \(\{x_i \mid i \in I\} \cup \{x_j' \mid j \in J\}\). Then:
  \begin{align*}
    \sum_{i \in I} \delta(x_i, F(\phi_i))
    &= \sum_{x \in S} \sum_{\substack{i \in I\\x_i = x}} \delta(x, F(\phi_i))\\
    &= \sum_{x \in S} \delta\bigg(x, \sum_{\substack{i \in I\\x_i = x} }F(\phi_i)\bigg)\\
    &= \sum_{x \in S} \delta\bigg(x, F\bigg(\sum_{\substack{i \in I\\x_i = x}}\phi_i\bigg)\bigg)
  \end{align*}
  Similarly we have:
  \[ \sum_{j \in J} \delta(x_j', F(\phi_j')) = \sum_{x \in S} \delta\bigg(x, F\bigg(\sum_{\substack{j \in J\\x_j'=x}}\phi_j'\bigg)\bigg)\]
  and so it is sufficient to show that for all \(x \in S\) that:
  \[ \sum_{\substack{i \in I\\x_i = x}}\phi_i = \sum_{\substack{j \in J\\x_j'=x}}\phi_j' \]
  Suppose for contradiction that this doesn't hold, and let \(x \in S\) be maximal such that this equality doesn't hold. Then consider the set:
  \[ O = \bigcap_{\substack{s \in S\\x \not\leq s}} s_{\not\leq} \quad \text{where} \quad s_{\not\leq} = \{y \in X \mid y \not\leq s\}\]
  \(O\) is a Scott open subset of \(X\): it is a finite intersection of sets of the form \(s_{\not\leq}\) which are all upwards closed, as if \(y \not\leq s\) and \(y' \geq y\) then certainly \(y' \not \leq s\), and inaccessible from below, as if \(D\) is directed with \(D \cap s_{\not\leq}\) then \(s\) is an upper bound for \(D\) and so \(\sup D \leq s\), so is not in \(s_{\not\leq}\). Further \(y \in O\) if and only if \(x \leq y\): if \(x \not \leq y\) then \(y_{\not\leq} \supseteq O\) and so \(y \not\in O\). By construction \(x \in O\) as \(x \in s_{\not\leq}\) when \(x \not\leq s\), and so \(y \in O\) for \(y \geq x\) as \(O\) is upwards closed.

  Then let \(k = \mathbbm{1}_{O}\), the indicator function on the set \(O\), which is continuous as \(O\) is open. Then:
  \[\xi_k = \sum_i \delta(x_i, \phi_i)_k = \sum_i \phi_i \circ k(x_i) = \sum_{\substack{i \in I\\x_i \geq x}} \phi_i\]
  and similarly:
  \[\xi_k = \sum_{\substack{j \in J\\x_j' \geq x}} \phi_j'\]
  However, by maximality of \(x\) we have that:
  \[\sum_{\substack{i \in I\\x_i > x}} \phi_i = \sum_{\substack{j \in J\\x_j' \geq x}} \phi_j'\]
  and so:
  \[\sum_{\substack{i \in I\\x_i = x}} \phi_i = \sum_{\substack{j \in J\\x_j' = x}} \phi_j'\]
  contradicting the construction of \(x\).

  To show \(F^\dagger\) is a natural transformation we need the following to commute:
  \[
    \begin{tikzcd}
      {\orch_{\mathsf{f}}(\csaa, \csab, X)} & {\orch_{\mathsf{f}}(F(\csaa), F(\csab), X)} \\
      {\orch_{\mathsf{f}}(\csaa', \csab', Y)} & {\orch_{\mathsf{f}}(F(\csaa'), F(\csab'), Y)}
      \arrow["{F^\dagger_{\csaa, \csab, X}}", from=1-1, to=1-2]
      \arrow["{\orch_{\mathsf{f}}(\phi, \psi,f)}"', from=1-1, to=2-1]
      \arrow["{\orch_{\mathsf{f}}(F(\phi),F(\psi), f)}", from=1-2, to=2-2]
      \arrow["{F^\dagger_{\csaa', \csab', Y}}", from=2-1, to=2-2]
    \end{tikzcd}
  \]
  For \(\phi \in \wstarcat(\csaa, \csaa')\), \(\psi \in \wstarcat(\csab', \csab)\), \(f : \dcpo(X, Y)\). Taking some decomposition:
  \[ \sum_i \delta(x_i, \chi_i) \in \orch_{\mathsf{f}}(\csaa, \csab, X)\]
  we get:
  \begin{align*}
    \orch_{\mathsf{f}}(F(\phi), F(\psi), f) \bigg(F^\dagger_{\csaa, \csab, X}\bigg(\sum_i \delta(x_i, \chi_i)\bigg)\bigg)
    &= \orch_{\mathsf{f}}(F(\phi), F(\psi), f) \bigg( \sum_i \delta(x_i, F(\chi_i)) \bigg)\\
    &= \sum_i \delta(f(x_i), F(\phi) \circ F(\chi_i) \circ F(\psi))\\
    &= \sum_i \delta(f(x_i), F(\phi \circ \chi_i \circ \psi))\\
    &= F^\dagger_{\csaa', \csab', Y}\bigg(\sum_i \delta(f(x_i), \phi \circ \chi_i \circ \psi)\bigg)\\
    &= F^\dagger_{\csaa', \csab', Y}\bigg(\orch_{\mathsf{f}}(\phi, \psi, f)\bigg( \sum_i \delta(x_i, \chi_i) \bigg) \bigg)
  \end{align*}
  And so \(F^\dagger\) is natural.
\end{proof}

Hence, all the operations of the monad preserve the finiteness of quantum orchestras. Of course, the set of finite quantum orchestras on a (non-finite) dcpo is not itself a dcpo, and so \(\orch_{\mathsf{f}}\) it not itself a monad. We can however study the order structure on this set.
\begin{lemma}
  \label{lem:dirac-order}
  If \(x \leq y\) and \(\Phi \leq \Psi\), then \(\delta(x, \Phi) \leq \delta(y, \Psi)\). Conversely, if \(\delta(x, \Phi) \leq \delta(y, \Psi)\) then \(\Phi \leq \Psi\) and if \(\Phi \neq 0\) then \(x \leq y\).
\end{lemma}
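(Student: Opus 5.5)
For the forward direction, I will fix an arbitrary \wstar-algebra \(\csak\) and a continuation \(k \in \dcpo(X, \wstarcat(\csak, \csab))\), and aim to show
\[ \delta(x,\Phi)_k = \Phi \circ k(x) \leq \Psi \circ k(y) = \delta(y,\Psi)_k. \]
Since \(k\) is Scott-continuous it is monotone, so \(k(x) \leq k(y)\). Composition in \(\wstarcat\) is monotone in each argument, because it is Scott-continuous (as used in \cref{lem:orch-dcpo}). This gives the chain
\[ \Phi \circ k(x) \leq \Phi \circ k(y) \leq \Psi \circ k(y). \]
If I want to avoid citing that result, I can check it directly: for instance \(\Phi\circ(k(y)-k(x))\) is a composite of CP maps and hence CP. As \(k\) was arbitrary, the pointwise order yields \(\delta(x,\Phi) \leq \delta(y,\Psi)\).

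For the converse I will test the assumed inequality against well-chosen continuations.

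\emph{First, \(\Phi \leq \Psi\).} Take \(\csak = \csab\) and the constant continuation \(k = \lambda z.\, \id_\csab\). It is trivially Scott-continuous, and it gives \(\Phi = \delta(x,\Phi)_k \leq \delta(y,\Psi)_k = \Psi\).

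\emph{Second, \(x \leq y\) when \(\Phi \neq 0\).} I argue by contradiction, assuming \(x \not\leq y\). I will use the Scott-open set \(O = \{z \in X \mid z \not\leq y\}\). By the argument in \cref{prop:ap-functor}, \(O\) is upward closed and inaccessible by directed suprema, since \(y\) bounds any directed set lying below it. Let \(k = \mathbbm{1}_O\), that is \(k(z) = \id_\csab\) for \(z \in O\) and \(0\) otherwise. This is Scott-continuous, again as in \cref{prop:ap-functor}: monotone because \(O\) is an upset and \(0 \leq \id\), and sup-preserving because \(O\) is open. Then \(x \in O\) and \(y \notin O\), so the assumption gives
\[ \Phi = \Phi \circ \id = \delta(x,\Phi)_k \leq \delta(y,\Psi)_k = \Psi \circ 0 = 0. \]
Since \(\Phi\) is CP, \(\Phi \geq 0\), and combined with \(\Phi \leq 0\) this gives \(\Phi = 0\), contradicting the hypothesis.

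The only delicate step is confirming that the indicator continuation is a legitimate element of \(\dcpo(X, \wstarcat(\csab,\csab))\), i.e.\ that \(O\) is Scott-open and that a two-valued map \(\{0, \id\}\) on an open set is continuous. This repeats reasoning already given in \cref{prop:ap-functor}, so I expect no real obstacle. I also rely on antisymmetry of the Löwner-type order on CP maps, which holds because \(\wstarcat(\csab,\csaa)\) is a partial order by \cref{thm:w_star_is_dcpo}.
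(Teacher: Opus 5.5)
Your proof is correct and takes the same approach as the paper. The forward direction uses monotonicity of \(k\) and of composition. The converse tests the assumed inequality against the constant continuation \(\lambda z.\,\id_\csab\) and the indicator of the Scott-open set \(\{z \mid z \not\leq y\}\). The only differences are that you argue the last step by contradiction and spell out the Scott-openness and antisymmetry details that the paper leaves implicit.
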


\begin{proof}
  Suppose \(x \leq y\) and \(\Phi \leq \Psi\). Then for all \(k\):
  \[ \delta(x, \Phi)_k = \Phi \circ k(x) \leq \Psi \circ k(y) = \delta(y, \Psi)\]
  as composition of channels is monotone. Conversely assume \(\delta(x, \Phi) \leq \delta(y, \Psi)\). Then let \(k\) be the continuation such that \(k(z) = \id\) for all \(z\). Then:
  \[ \Phi = \Phi \circ k(x) = \delta(x, \Phi) \leq \delta(y, \Psi) = \Psi \circ k(y) = \Psi\]
  Now consider \(k\) such that:
  \[ k(z) =
    \begin{cases*}
      \id&if \(z \not\leq y\)\\
      0&otherwise
    \end{cases*}
  \]
  Then \(\delta(y, \Psi)_k = 0\), hence \(\delta(x, \Phi) = 0\) and so either \(\Phi = 0\) or \(x \leq y\).
\end{proof}

Despite the relative simplicity of the order structure on Dirac quantum orchestras, the story for finite quantum orchestras is far less trivial. We conjecture that, for finite \(\xi\) and \(\xi'\) with \(\xi \leq \xi'\), one can realise this inequality as a finite sequence of inequalities from \cref{lem:dirac-order} on a single Dirac orchestra. A corresponding theorem is for classical valuations can be proved using the max-flow min-cut theorem (see \cite{jones:probabilisticpowerdomain}), but this theorem depends on the lattice structure of the real numbers, and does not hold for quantum channels. Despite the proof not generalising, we have found no counterexample to this statement, and it is unclear if it holds.
For a detailed discussion of the link between orchestras and valuations, see \cref{app:comparison-prob-powerdomain}.

\section{Comparison to the Probabilistic Powerdomain Monad}%
\label{app:comparison-prob-powerdomain}

The goal of this section is to clarify the relation of the quantum orchestra monad with the probabilistic powerdomain monad.
To this end, recall that the probabilistic powerdomain monad assigns to a dcpo \(X\) the set of valuations on \(X\), denoted \(\valu(X)\).
\begin{definition}[Scalar-valued valuations \cite{jones:probabilisticpowerdomain}]
	Let \(X\) be a topological space, and \(O(X)\) its complete lattice of open subsets. A function \( v : O(X) \to [0,1] \) is a \emph{(scalar-valued) valuation} if and only if it is
	\begin{itemize}
		\item monotone, \emph{i.e.}, \( v(U) \leq v(V) \) for all \( U \subseteq V \),
		\item strict, \emph{i.e.}, \( v(\emptyset) = 0 \),
		\item modular, \emph{i.e.}, \( v(U) + v(V) = v(U \cup V) + v(U \cap V) \),
		\item continuous, \emph{i.e.}, for any directed set \( \{ U_\lambda \}_{\lambda \in \Lambda} \) of open sets, it holds that
			\[ v \left( \bigcup_\lambda U_\lambda \right) = \sup_\lambda v \left( U_\lambda \right). \]
	\end{itemize}
\end{definition}
\begin{theorem}[\cite{jones:probabilisticpowerdomain}]
	Let \(X\) be a dcpo. Then, \(\valu(X)\) is the set of (scalar-valued) valuations on \(X\) with the Scott topology, ordered by:
	\[
		v \leq w \quad :\Leftrightarrow \quad v(U) \leq w(U) \;\forall U \in O(X).
	\]
	Also, \(\valu(X)\) is a pointed dcpo with directed suprema defined pointwise and the zero valuation as its least element.
	Finally, \(\valu\) is a monad on \(\dcpo\), called the \emph{probabilistic powerdomain} monad, with the unit \(\epsilon_X : X \to \valu(X)\) given by
	\[
		\epsilon_X (x)(U) =
		\begin{cases}
      		1, 	& \text{if } x \in U, \\
      		0, 	& \text{otherwise}.
    		\end{cases}
	\]
	and the Kleisli extension \( \kleisli{f} : \valu(X) \to \valu(Y)\) for continuous functions \( f : X \to \valu(Y) \) given by
	\[
		\kleisli{f}(v)(V) = \int_{x\in X} f(x)(V) dv.
	\]
\end{theorem}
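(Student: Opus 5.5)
The statement is Jones's classical result~\cite{jones:probabilisticpowerdomain}, so the plan is to reconstruct the standard argument, organised around a lower integral against a valuation. There are three parts: the dcpo structure, continuity of the unit, and the Kleisli extension together with the monad laws.

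\emph{Dcpo structure.} Given a directed family \(\{v_\lambda\}\) of valuations, I would define \((\sup_\lambda v_\lambda)(U) = \sup_\lambda v_\lambda(U)\) and check the axioms one by one.
\begin{itemize}
\item Monotonicity and strictness are immediate.
\item Modularity follows because addition on \([0,1]\) preserves directed suprema in each argument, and the family is directed.
\item Continuity follows by exchanging two suprema: \(\sup_\lambda \sup_\mu v_\lambda(U_\mu) = \sup_\mu \sup_\lambda v_\lambda(U_\mu)\).
\end{itemize}
That this is the least upper bound, and that the zero valuation is least, is clear pointwise.

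\emph{The unit.} \(\epsilon_X(x)\) is a valuation; modularity holds by a case split on membership of \(x\) in \(U\) and \(V\), and continuity holds because \(x \in \bigcup_\lambda U_\lambda\) iff \(x \in U_\lambda\) for some \(\lambda\). For Scott-continuity of \(\epsilon_X\), take a directed \(D \subseteq X\) and an open \(U\). Since Scott-open sets are inaccessible from below, \(\sup D \in U\) iff some \(d \in D\) lies in \(U\), which gives \(\epsilon_X(\sup D)(U) = \sup_{d\in D}\epsilon_X(d)(U)\).

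\emph{The integral.} For Scott-continuous \(g : X \to [0,1]\) (with the Scott topology on \([0,1]\)), I would define the Choquet-style integral
\[
  \int g\, dv \;=\; \int_0^1 v\bigl(g^{-1}((t,1])\bigr)\, dt,
\]
where each \(g^{-1}((t,1])\) is open. The key properties I would establish, in this order, are:
\begin{enumerate}
\item \(\int \mathbbm{1}_U\, dv = v(U)\).
\item The integral is monotone in \(g\) and in \(v\).
\item It is Scott-continuous in \(g\) and in \(v\). Both follow from the monotone convergence theorem for the Lebesgue integral in \(t\), together with continuity of \(v\) for the integrand argument.
\item It is additive and positively homogeneous in \(g\).
\end{enumerate}

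I expect additivity to be the main obstacle. I would first prove it for simple step functions \(\sum_i r_i \mathbbm{1}_{U_i}\) with \(U_1 \supseteq \dots \supseteq U_n\), using modularity to rearrange arbitrary sums of such functions into chain form. This is the step needing genuine care: it is essentially the Kirch/Tix argument that modular continuous valuations extend to additive functionals on simple functions. I would then pass to general \(g\) via the approximants
\[
  g_n \;=\; \sum_{i=1}^{2^n} 2^{-n}\, \mathbbm{1}_{g^{-1}((i/2^n,1])},
\]
which form a directed family with supremum \(g\), and invoke Scott-continuity of the integral in \(g\).

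\emph{Kleisli extension and the monad laws.} For \(f : X \to \valu(Y)\) and \(V\) open, the map \(x \mapsto f(x)(V)\) is Scott-continuous, since evaluation at \(V\) is continuous on \(\valu(Y)\); hence \(\kleisli f(v)(V)\) is well-defined. I would verify that \(\kleisli f(v)\) is a valuation:
\begin{itemize}
\item strictness and monotonicity are immediate from the integral;
\item modularity follows from additivity of the integral applied to \(f(\cdot)(U) + f(\cdot)(V)\);
\item continuity follows from Scott-continuity of the integral in the integrand.
\end{itemize}
Scott-continuity of \(\kleisli f\) in \(v\) comes from continuity of the integral in the measure.

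For the monad laws:
\begin{itemize}
\item \(\kleisli f \circ \epsilon_X = f\) holds because \(\int g\, d\epsilon_X(x) = g(x)\).
\item \(\kleisli{\epsilon_X} = \id\) holds because \(\int \epsilon_X(\cdot)(U)\, dv = \int \mathbbm{1}_U\, dv = v(U)\).
\item For associativity, I would show \(\int g\, d\kleisli f(v) = \int \bigl(\int g\, df(x)\bigr)\, dv\) for all continuous \(g\). This holds for indicators \(g = \mathbbm{1}_V\) by definition of \(\kleisli f\), extends to simple functions by linearity (property 4), and extends to all \(g\) by taking suprema of the approximants \(g_n\) on both sides. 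Applying this identity to \(g = \kleisli h(\cdot)(W)\) yields \(\kleisli h \circ \kleisli f = \kleisli{(\kleisli h \circ f)}\).
\end{itemize}
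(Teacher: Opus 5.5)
\textbf{Verdict.} Your proof is correct; it takes a different route from the paper only in the sense that the paper gives no proof at all.

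\textbf{Comparison.} The paper recalls this theorem as background and imports it wholesale from Jones, so there is no internal argument to check yours against. You define the integral by the Choquet formula \(\int_0^1 v(g^{-1}((t,1]))\,dt\), which is the later presentation in the style of Tix and Kirch. Jones instead builds the integral from simple functions and suprema. Under either definition the one nontrivial ingredient is the same, and you isolate it correctly: the quantity \(\sum_i r_i v(U_i)\) depends only on the simple function and not on its representation. This is exactly where modularity enters, through an uncrossing or Smiley--Horn--Tarski type argument. What your reconstruction buys is that its list of integral properties is exactly what the paper later uses without comment in \cref{app:comparison-prob-powerdomain}:
\begin{itemize}
\item agreement with \(v\) on indicators;
\item linearity on simple functions;
\item approximation of continuous \(g\) by the dyadic simple functions \(g_n\);
\item Scott-continuity in both the integrand and the valuation.
\end{itemize}
These are the facts behind \(\int f\,dv^\xi = \xi_f\), normality of \(\xi^v_k\), and continuity of \(\beta_X\). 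The citation is simply more economical.

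\textbf{Small repairs.}
\begin{enumerate}
\item In the associativity step, the function to plug into your identity is \(g = h(\cdot)(W) : Y \to [0,1]\). It is not \(\kleisli h(\cdot)(W)\), which is defined on \(\valu(Y)\) rather than on \(Y\).
\item Modularity of \(\kleisli f(v)\) integrates \(f(\cdot)(U) + f(\cdot)(V)\), which can exceed \(1\). That lies outside the domain on which you defined the integral. Either extend the Choquet formula to \([0,\infty)\)-valued functions, or apply additivity and homogeneity to \(\tfrac12 f(\cdot)(U) + \tfrac12 f(\cdot)(V)\).
\item The monotone convergence theorem is stated for sequences, whereas Scott-continuity concerns arbitrary directed families. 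Each integrand \(t \mapsto v(g^{-1}((t,1]))\) is antitone, so you can write its integral as a supremum of lower Riemann sums over finite partitions. Then use directedness at the finitely many partition points to exchange the two suprema.
\end{enumerate}
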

Note that the notion of valuation can be generalised almost verbatim from scalar-valued to \(\wstarcat(\csab, \csaa)\)-valued maps as follows.
\begin{definition}[\(\wstarcat(\csab, \csaa)\)-valued valuations]
	Let \(\csaa,\csab \in \wstarcat\), \(X\) be a topological space, and \(O(X)\) its complete lattice of open subsets. A function \( v : O(X) \to \wstarcat(\csab, \csaa) \) is called a valuation if and only if it is
	\begin{itemize}
		\item monotone, \emph{i.e.}, \( v(U) \leq v(V) \) for all \( U \subseteq V \),
		\item strict, \emph{i.e.}, \( v(\emptyset) = 0 \),
		\item modular, \emph{i.e.}, \( v(U) + v(V) = v(U \cup V) + v(U \cap V) \),
		\item continuous, \emph{i.e.}, for any directed set \( \{ U_\lambda \}_{\lambda \in \Lambda} \) of open sets, it holds that
			\[ v \left( \bigcup_\lambda U_\lambda \right) = \sup_\lambda v \left( U_\lambda \right). \]
	\end{itemize}
\end{definition}
By identifying \(\wstarcat(\mathbb{C}, \mathbb{C})\) with the real interval \([0,1]\), \(\wstarcat(\mathbb{C}, \mathbb{C})\)-valued valuations are in fact scalar-valued valuations.

\subsection{Mapping orchestras to valuations}%
\label{sec:mapping-orch-to-val}

Let now \(X \in \dcpo\). We note that any quantum orchestra \(\xi \in \orch(\csaa, \csab, X)\) restricts to a \(\wstarcat(\csab, \csaa)\)-valued valuation \(v^\xi\) by taking
  \[ v^\xi (U) = \xi_{\mathbbm{1}_U}, \]
where \(\mathbbm{1}_U\) is the indicator function \(X \to \wstarcat(\csab, \csab)\):
  \[
    \mathbbm{1}_U(x) =
    \begin{cases}
      \id_{\csab},	&\text{if }x \in U, \\
      0, 			&\text{otherwise}.
    \end{cases}
  \]
The continuity of \(\mathbbm{1}_U\) corresponds directly to \(U\) being a Scott-open subset of \(X\).
By the subconvexity property of \(\xi\), we obtain
  \[ v^\xi(\emptyset) = \xi_{\mathbbm{1}_\emptyset} = \xi_{0} = 0 \]
and
\begin{align*}
    v^\xi(U \cup V) + v^\xi(U \cap V) &= \xi_{\mathbbm{1}_{U \cup V}} + \xi_{\mathbbm{1}_{U \cap V}} = 2 \left(\frac{1}{2} \xi_{\mathbbm{1}_{U \cup V}} + \frac{1}{2}\xi_{\mathbbm{1}_{U \cap V}}\right) = 2 \xi_{\frac{1}{2}\mathbbm{1}_{U \cup V} + \frac{1}{2}\mathbbm{1}_{U \cap V}} = 2 \xi_{\frac{1}{2}\mathbbm{1}_U + \frac{1}{2}\mathbbm{1}_V} \\
     								 &= 2 \left(\frac{1}{2} \xi_{\mathbbm{1}_U} + \frac{1}{2}\xi_{\mathbbm{1}_V}\right) = \xi_{\mathbbm{1}_U} + \xi_{\mathbbm{1}_V} = v^\xi(U) + v^\xi(V).
\end{align*}
By monotonicity we immediately have for \(U \subseteq V\) that
  \[ v^\xi(U) = \xi_{\mathbbm{1}_U} \leq \xi_{\mathbbm{1}_V} = v^\xi(V), \]
and by continuity we get that if \( \{ U_\lambda \}_{\lambda \in \Lambda} \) is a directed set of open subsets, then it also follows that
\begin{align*}
    \sup_\lambda v^\xi(U_\lambda) &= \sup_\lambda \xi_{\mathbbm{1}_{U_\lambda}} = \xi_{\sup_\lambda \mathbbm{1}_{U_\lambda}} = \xi_{\mathbbm{1}_{\bigcup_\lambda U_\lambda}} = v^\xi \left( \bigcup_\lambda U_\lambda \right),
\end{align*}
and so \(v^\xi\) is a valuation.
The pointwise order or quantum orchestras immediately implies for orchestras \(\xi,\zeta \in \orch(\csaa, \csab, X)\) that
\begin{align*}
	\xi \leq \zeta \quad\Rightarrow\quad \forall U \in O(X): \xi_{\mathbbm{1}_U} \leq \zeta_{\mathbbm{1}_U} \quad\Rightarrow\quad v^\xi \leq v^\zeta,
\end{align*}
and thus the mapping \( \xi \mapsto v^\xi \) is a monotone mapping from orchestras to valuations.

While we have obtained a monotone mapping from orchestras to valuations in this way, it is not immediately clear whether this mapping is injective in the general case.
The difficulty in showing injectivity lies in the absence of an existing non-commutative integration theory for \(\wstarcat(\csab, \csaa)\)-valued valuations.
We leave this question for future exploration, and subsequently focus on the scalar-valued case, in which a well-developed integration theory exists \cite{jones:probabilisticpowerdomain}.

\subsection{The scalar-valued case}%
\label{sec:scalar-valued-valuations}

Now consider the (non-parameterised) monad obtained as \(\orch(\mathbb{C},\mathbb{C},-)\). By the arguments from \cref{sec:mapping-orch-to-val}, every orchestra \(\xi \in \orch(\mathbb{C},\mathbb{C},-)\) gives rise to a scalar-valued valuation \(v^\xi : O(X) \to [0,1] \), where we identified the real interval \([0,1]\) with \(\wstarcat(\mathbb{C},\mathbb{C})\).
By use of the integration theory for scalar-valued valuations \cite{jones:probabilisticpowerdomain}, any such valuation gives rise to a continuous integral of any continuous function \(f : X \to [0,1] \):
\[
	f \mapsto \int f dv^\xi.
\]
By the definition of \(v^\xi\), this integral needs to coincide with the orchestra on indicator functions:
\[
	\int \mathbbm{1}_U dv^\xi = v^\xi (U) = \xi_{\mathbbm{1}_U},
\]
and by linearity also on simple functions, \emph{i.e.}, linear combinations of indicator functions. As both the integral and the orchestra are continuous, they preserve suprema, and the equality extends to all continuous functions, as all continuous functions can be represented as suprema of simple functions.
We thus record: for all continuous functions \(f : X \to [0,1] \), it holds that:
\[
	\int f dv^\xi = \xi_{f}.
\]
We now proceed to show that in the scalar-valued case, the mapping from orchestras to valuations is injective.
To this end, let \(\xi,\zeta \in \orch(\mathbb{C},\mathbb{C},-)\) with \(\xi \neq \zeta\). There then exists a \wstar-algebra \(\csak\) and a continuation \(k \in \dcpo(X,\wstarcat(\csak,\mathbb{C}))\) such that \(\xi_k \neq \zeta_k\), where \(\xi_k, \zeta_k \in \wstarcat(\csak, \mathbb{C})\).
There thus exists some \(m \in \wstarcat(\mathbb{C},\csak)\) such that \(\xi_k \circ m \neq \zeta_k \circ m\).
Define the continuous function
\[
	g : X \to [0,1] ,\quad x \mapsto k(x) \circ m.
\]
Using the compositionality of orchestras, we conclude that
\[
	\int g dv^\xi = \xi_g = \xi_k \circ m \neq \zeta_k \circ m = \zeta_g = \int g dv^\zeta,
\]
and thus \(\xi^v \neq \zeta^v\). The mapping \( \xi \mapsto v^\xi \) is hence injective.

Finally, let \(\Delta \subseteq \orch(\mathbb{C}, \mathbb{C}, X)\) be directed. Using that scalar-valued valuations form a dcpo \cite{jones:probabilisticpowerdomain}, it then follows for all \(U \in O(X)\) that
\begin{align*}
	\sup \left\{ v^\xi \mid \xi \in \Delta \right\} (U) &= \sup \left\{ v^\xi(U) \mid \xi \in \Delta \right\} = \sup \left\{ \xi_{\mathbbm{1}_U} \mid \xi \in \Delta \right\} \\
												  &= \left[ \sup \left\{ \xi \mid \xi \in \Delta \right\} \right]_{\mathbbm{1}_U} = v^{\left[ \sup \left\{ \xi \mid \xi \in \Delta \right\} \right]} (U).
\end{align*}
The mapping \( \xi \mapsto v^\xi \) is thus also preserving suprema and hence continuous.
We summarise these results in the following proposition.
\begin{proposition}
	For any \( X \in \dcpo\), there exists an injective, continuous map \( \alpha_X : \orch(\mathbb{C},\mathbb{C},X) \to \valu(X) ,\, \xi \mapsto v^\xi \).
\end{proposition}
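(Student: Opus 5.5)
The map \(\alpha_X\) is the one already constructed in \cref{sec:mapping-orch-to-val}, \(\xi \mapsto v^\xi\) with \(v^\xi(U) = \xi_{\mathbbm{1}_U}\). The proof therefore assembles three facts established in the preceding discussion and checks the points where it was informal.

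\textbf{Well-definedness.} We need \(v^\xi \in \valu(X)\). Identify \(\wstarcat(\mathbb{C},\mathbb{C})\) with \([0,1]\); this works because an nCPSU map \(\mathbb{C}\to\mathbb{C}\) is multiplication by some \(t\in[0,1]\).
\begin{itemize}
\item Strictness and modularity follow from subconvexity, using the halving trick together with \(\tfrac12\mathbbm{1}_{U\cup V}+\tfrac12\mathbbm{1}_{U\cap V} = \tfrac12\mathbbm{1}_U+\tfrac12\mathbbm{1}_V\) pointwise.
\item Monotonicity follows from Scott-continuity (hence monotonicity) of \(\lambda k.\xi_k\).
\item Continuity follows because a directed union of opens gives a directed family of indicator functions whose pointwise supremum is \(\mathbbm{1}_{\bigcup U_\lambda}\). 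Suprema in \(\dcpo(X,\wstarcat(\csab,\csab))\) are pointwise, so \(\lambda k.\xi_k\) preserves this supremum.
\end{itemize}
The indicator \(\mathbbm{1}_U\) is Scott-continuous exactly when \(U\) is Scott-open.

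\textbf{Continuity.} Suprema in \(\orch\) are computed componentwise (\cref{lem:orch-dcpo}), and suprema in \(\valu(X)\) are pointwise. So for directed \(\Delta\) we get \(v^{\sup\Delta}(U) = \sup_{\xi\in\Delta}\xi_{\mathbbm{1}_U} = (\sup_\xi v^\xi)(U)\). Monotonicity of \(\alpha_X\) is immediate from the pointwise order.

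\textbf{Injectivity.} This is the substantive step and I expect it to be the main obstacle.
\begin{enumerate}
\item First I would prove \(\xi_f = \int f\,dv^\xi\) for every Scott-continuous \(f : X\to[0,1]\), viewing scalars as maps \(\mathbb{C}\to\mathbb{C}\). Approximate \(f\) from below by the directed sequence of simple functions
\[ f_n = \frac{1}{2^n}\sum_{i=1}^{2^n}\mathbbm{1}_{f^{-1}((i/2^n,1])}. \]
Each \(f_n\) is a subconvex combination of indicators of Scott-open sets, with total weight \(\le 1\) so that subconvexity applies iteratively. Thus \(\xi_{f_n} = \frac{1}{2^n}\sum_i v^\xi(f^{-1}((i/2^n,1]))\), which equals \(\int f_n\,dv^\xi\) by Jones' definition of the integral. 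Since \(f=\sup_n f_n\) pointwise and both \(\xi\) and the integral are Scott-continuous, the two agree on \(f\).
\item Now suppose \(\xi\neq\zeta\). Then some \(\csak\) and \(k\in\dcpo(X,\wstarcat(\csak,\mathbb{C}))\) have \(\xi_k\neq\zeta_k\) in \(\wstarcat(\csak,\mathbb{C})\).
\item Next I would find \(m\in\wstarcat(\mathbb{C},\csak)\) with \(\xi_k\circ m\neq\zeta_k\circ m\). In the Heisenberg picture \(\xi_k:\csak\to\mathbb{C}\) is a normal subunital functional, and \(m:\mathbb{C}\to\csak\) is \(t\mapsto ta\) for some \(0\le a\le 1\). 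Two normal functionals that agree on all effects \(a\in[0,1]_\csak\) agree everywhere, because effects span \(\csak\). So such an \(m\) exists.
\item Compositionality of \(\xi\) and \(\zeta\) with \(\chi=m\) gives \(\xi_g = \xi_k\circ m\) and \(\zeta_g = \zeta_k\circ m\), where \(g=\lambda x.\,k(x)\circ m : X\to[0,1]\). The map \(g\) is continuous because postcomposition in \(\wstarcat\) is Scott-continuous.
\item Combining with step 1 gives \(\int g\,dv^\xi \neq \int g\,dv^\zeta\), hence \(v^\xi\neq v^\zeta\).
\end{enumerate}
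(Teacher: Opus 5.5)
Your proposal is correct and follows the paper's own argument. The steps match: you check that \(v^\xi\) is a valuation via subconvexity and continuity, get continuity of \(\alpha_X\) from componentwise suprema, and prove injectivity by identifying \(\xi_f\) with \(\int f\,dv^\xi\) and separating \(\xi,\zeta\) with \(g=\lambda x.\,k(x)\circ m\) via compositionality. You also supply two details the paper only asserts: the explicit dyadic approximation of \(f\) by simple functions, and the existence of the separating \(m\), which holds because effects span \(\csak\).
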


\subsection{Mapping scalar-valued valuations to orchestras}%
\label{sec:mapping-valuations-to-orchestras}

Having established that there is an injective, continuous mapping from \(\orch(\mathbb{C}, \mathbb{C}, X)\) to the dcpo of valuations \(O(X) \to [0,1]\), we now ask the question whether this mapping is invertible.

To this end, let \(v : O(X) \to [0,1]\) be a valuation, let \(\csak \in \wstarcat\), and consider a continuation \(k \in \dcpo(X,\wstarcat(\csak,\mathbb{C}))\).
As a first step towards constructing an integral of \(k\) along \(v\), define
\[
	\xi^v_k : \csak^+ \to \mathbb{R}^+ ,\quad \kappa \mapsto \int k(\kappa) dv,
\]
using that \(k(\kappa)\) is always a positive real number for all \(\kappa \in \csak^+\).
The positive cone \(\csak^+\) linearly spans the full algebra \(\csak\) as every element can be decomposed into a linear combination of four positive elements.
Linearly extending to all of \(\csak\), we obtain a map \( \xi^v_k : \csak \to \mathbb{C} \).
The positivity of \( \xi^v_k \) is immediate, complete positivity is equivalent to positivity for functionals, and normality of \( \xi^v_k \) follows from the fact that normality on \wstar-algebras is equivalent to the preservation of suprema \cite{cho:wstarisdcpo} and the fact that both \(k\) and integration along valuations are Scott-continuous. We thus showed that \( \xi^v_k \in \wstarcat(\csak,\mathbb{C}) \).
As a candidate orchestra, we have obtained \( \xi^v = \{ \xi^v_k \in \wstarcat(\csak,\mathbb{C}) \mid \csak \in \wstarcat, k \in \dcpo(X,\wstarcat(\csak,\mathbb{C})) \} \).

It remains to show that \(\xi^v\) is continuous as a mapping \( k \mapsto \xi^v_k \), subconvex, and satisfies compositionality.
Subconvexity and compositionality are direct consequences of the linear construction of the candidate orchestra.
To prove continuity, let \( K \subseteq \dcpo(X,\wstarcat(\csak,\mathbb{C}))\) be directed. It then follows for \( \kappa \in \csak^+ \) that
\begin{align*}
	\xi^v_{\sup K} (\kappa) &= \int [\sup K](\kappa) dv = \int \sup \left\{ k(\kappa)
\mid k \in K \right\} dv = \sup \left\{ \left. \int k(\kappa) dv \,\right|\, k\in K \right\} \\
	                        &= \sup \left\{ \left. \xi^v_k (\kappa) \,\right|\, k\in K \right\} = \left[ \sup \left\{ \left. \xi^v_k \,\right|\, k\in K \right\} \right] (\kappa),
\end{align*}
and thus by linearity that \( \xi^v_{\sup K} = \sup \left\{ \xi^v_k \mid k\in K \right\} \).
We conclude that \( \xi^v \) is a quantum orchestra.
We have thus constructed a map \( \beta_X : \valu(X) \to \orch(\mathbb{C},\mathbb{C},X) ,\, v \mapsto \xi^v \).

By construction of \( \beta_X \), it holds that \( v = \alpha_X [ \beta_X [v] ] \) for all \( v \in \valu(X) \).
This implies that \( \alpha_X \) from \cref{sec:scalar-valued-valuations} must have been surjective, showing that it is in fact bijective, and \( \beta_X \) its inverse.

For the final goal to establish an isomorphism between scalar-valued orchestras and valuations on \(\dcpo\) it remains to show that this inverse is Scott-continuous.

In order to show monotonicity of the inverse, start with two comparable valuations \( v, w : O(X) \to [0,1] \) with \( v \leq w \).
Let \( \xi^v, \xi^w \) be the two arising orchestras constructed as above.
For any \(\csak \in \wstarcat\), any continuation \(k \in \dcpo(X,\wstarcat(\csak,\mathbb{C}))\), and \( \kappa \in \csak^+ \), it holds that
\begin{align*}
	\xi^v_k (\kappa) = \int k(\kappa) dv \leq \int k(\kappa) dw = \xi^w_k (\kappa),
\end{align*}
and thus \( \xi^v_k \leq \xi^w_k \) in the Löwner order for all \(k\), implying that \( \xi^v \leq \xi^w \).
The map \( \beta_X \) is thus monotone.

To prove the preservation of suprema, let \( K \subseteq \valu(X) \) be a directed set of valuations. It then holds for any \(\csak \in \wstarcat\), any continuation \(k \in \dcpo(X,\wstarcat(\csak,\mathbb{C}))\), and for any \( \kappa \in \csak^+ \) that
\begin{align*}
	\left[ \xi^{\sup K} \right]_k (\kappa)
	&= \int k(\kappa) d[\sup K]
	= \sup \left\{\left. \int k(\kappa) dv \,\right|\, v \in K \right\} \\
	&= \sup \left\{\left. \xi^v_k (\kappa) \,\right|\, v \in K \right\}
	= \left[ \sup \left\{\left. \xi^v_k  \,\right|\, v \in K \right\} \right] (\kappa)
	= \left[ \sup \left\{\left. \xi^v  \,\right|\, v \in K \right\} \right]_k (\kappa),
\end{align*}
using the continuity of the integral in the valuation.
It follows that \( \xi^{\sup K} = \left[ \sup \left\{\left. \xi^v  \,\right|\, v \in K \right\} \right] \), and thus \( \beta_X \) is continuous.
This fact implies the next proposition.
\begin{proposition}
  For any \( X \in \dcpo\), \( \alpha_X \) is a \(\dcpo\)-isomorphism \( \orch(\mathbb{C},\mathbb{C},X) \to \valu(X) \).
\end{proposition}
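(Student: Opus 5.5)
The statement follows by assembling the two maps built just before it, \(\alpha_X : \orch(\mathbb{C},\mathbb{C},X) \to \valu(X)\) from \cref{sec:scalar-valued-valuations} and \(\beta_X : \valu(X) \to \orch(\mathbb{C},\mathbb{C},X)\) from \cref{sec:mapping-valuations-to-orchestras}.

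A \(\dcpo\)-isomorphism is a Scott-continuous bijection whose inverse is also Scott-continuous. Continuity and injectivity of \(\alpha_X\) were established in the preceding proposition. Continuity of \(\beta_X\) was shown above: it is monotone by monotonicity of the valuation integral in the valuation, and it preserves directed suprema because \(\int k(\kappa)\, d[\sup K] = \sup_{v\in K} \int k(\kappa)\, dv\). It therefore remains to show that \(\alpha_X\) and \(\beta_X\) are mutually inverse.

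\textbf{First identity, \(\alpha_X \circ \beta_X = \id\).} For an open set \(U\), take \(\csak = \mathbb{C}\) and \(k = \mathbbm{1}_U\). Then
\[ v^{\xi^v}(U) = \xi^v_{\mathbbm{1}_U}(1) = \int \mathbbm{1}_U\, dv = v(U). \]
The step \(\int \mathbbm{1}_U\, dv = v(U)\) is the defining property of Jones' integral.

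\textbf{Second identity, \(\beta_X \circ \alpha_X = \id\).} This is the main obstacle; it is logically equivalent to surjectivity of \(\alpha_X\), which the construction of \(\beta_X\) alone does not give. Since \(\alpha_X \circ \beta_X = \id\) and \(\alpha_X\) is injective, we have
\[ \alpha_X(\beta_X(\alpha_X(\xi))) = \alpha_X(\xi), \]
and injectivity then gives \(\beta_X(\alpha_X(\xi)) = \xi\). So this identity reduces to injectivity of \(\alpha_X\), which is already proved.

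If one wants a direct check instead, the ingredients are these. For \(k : X \to \wstarcat(\csak,\mathbb{C})\) and positive \(\kappa\) with \(\|\kappa\| \le 1\), compositionality of \(\xi\) with the map \(m : \mathbb{C}\to\csak\), \(1\mapsto\kappa\), yields \(\xi_k(\kappa) = \xi_{\lambda x. k(x)\circ m}(1)\). The scalar identity \(\int f\, dv^\xi = \xi_f\) from \cref{sec:scalar-valued-valuations} then gives \(\xi_k(\kappa) = \xi^{v^\xi}_k(\kappa)\). Linearity, via the decomposition of \(\csak\) into positive elements and a rescaling to norm at most one, extends this equality to all of \(\csak\).

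Finally, with \(\beta_X = \alpha_X^{-1}\) and both maps Scott-continuous, \(\alpha_X\) is an isomorphism in \(\dcpo\).
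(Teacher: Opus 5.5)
Your proposal is correct and follows the paper's own route. You get \(\alpha_X \circ \beta_X = \id\) from \(\int \mathbbm{1}_U\, dv = v(U)\), use injectivity of \(\alpha_X\) from the preceding proposition to upgrade this to a two-sided inverse, and get Scott-continuity of \(\beta_X\) from monotonicity and sup-preservation of the integral in the valuation. Your explicit derivation of \(\beta_X \circ \alpha_X = \id\), and the optional direct check via compositionality with \(m : 1 \mapsto \kappa\), spell out what the paper compresses into ``\(\alpha_X\) is surjective, hence bijective with inverse \(\beta_X\)'', mirroring its injectivity argument.
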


\subsection{Monad equivalence}
\label{sec:mapp-scal-valu}

We now aim to prove that \(\orch(\mathbb{C}, \mathbb{C},-)\) and \(\valu\) are equivalent. We firstly show that \( \alpha_X \) forms in fact a natural isomorphism between the functors \(\orch(\mathbb{C},\mathbb{C},-)\) and \(\valu\). To this end, we must show that the following diagram commutes for all morphisms \( f \in \dcpo(X,Y) \).
\[
  \begin{tikzcd}
    {\orch(\mathbb{C}, \mathbb{C}, X)} & {\valu(X)} \\
    {\orch(\mathbb{C}, \mathbb{C}, Y)} & {\valu(Y)}
    \arrow["{\alpha_X}", from=1-1, to=1-2]
    \arrow["{\orch(\mathbb{C}, \mathbb{C}, f)}"', from=1-1, to=2-1]
    \arrow["{\valu(f)}", from=1-2, to=2-2]
    \arrow["{\alpha_Y}", from=2-1, to=2-2]
  \end{tikzcd}
\]
Let now \( \xi \in \orch(\mathbb{C},\mathbb{C},X) \) and \( V \in O(Y) \). Then, by definition, it holds that
\begin{align*}
	\left[ \alpha_Y \circ \orch(\mathbb{C},\mathbb{C}, f)(\xi) \right] (V)
	&= \orch(\mathbb{C},\mathbb{C}, f)(\xi)_{\mathbbm{1}_V}
	= \xi_{\lambda x.\mathbbm{1}_V(f(x))}
	= \xi_{\mathbbm{1}_{f^{-1}(V)}} \\
	&= \alpha_X (\xi) (f^{-1}(V))
	= \left[ \mathcal{V}(f) \circ \alpha_X (\xi) \right] (V),
\end{align*}
which shows that the diagram indeed commutes, proving the naturality of \( \alpha_X \) and the following claim.
\begin{proposition}
	\( \orch(\mathbb{C},\mathbb{C},-) \) and \( \valu \) are functorially equivalent on \( \dcpo \).
\end{proposition}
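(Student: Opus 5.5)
The statement follows by combining the \(\dcpo\)-isomorphism \(\alpha_X : \orch(\mathbb{C},\mathbb{C},X) \to \valu(X)\), \(\xi \mapsto v^\xi\), from the previous proposition with a naturality check in \(X\). A natural transformation whose every component is an isomorphism in \(\dcpo\) is a natural isomorphism, because the componentwise inverses \(\beta_X\) are automatically natural. So the only thing left to prove is naturality of \(\alpha\) with respect to every Scott-continuous \(f \in \dcpo(X,Y)\).

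For naturality, fix \(\xi \in \orch(\mathbb{C},\mathbb{C},X)\) and a Scott-open \(V \in O(Y)\). I would compute both legs of the square on \(V\). By definition of \(\alpha_Y\) and of the functor action \(\orch(\mathbb{C},\mathbb{C},f)(\xi)_k = \xi_{\lambda x.k(f(x))}\), the first leg gives
\[ \alpha_Y(\orch(\mathbb{C},\mathbb{C},f)(\xi))(V) = \xi_{\lambda x.\mathbbm{1}_V(f(x))}. \]
Pointwise, \(\mathbbm{1}_V \circ f = \mathbbm{1}_{f^{-1}(V)}\). Since \(f\) is Scott-continuous, \(f^{-1}(V)\) is Scott-open, so this indicator is itself a legitimate continuation in \(\dcpo(X,\wstarcat(\mathbb{C},\mathbb{C}))\). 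Hence the first leg equals \(\xi_{\mathbbm{1}_{f^{-1}(V)}} = \alpha_X(\xi)(f^{-1}(V))\).

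By the standard definition of the powerdomain functor, \(\valu(f)(v)(V) = v(f^{-1}(V))\), so this is exactly \(\valu(f)(\alpha_X(\xi))(V)\). Since \(V\) was arbitrary, the square commutes.

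To conclude, I would note that \(\beta\) is then natural too: from \(\alpha_Y \circ \orch(\mathbb{C},\mathbb{C},f) = \valu(f) \circ \alpha_X\), pre-composing with \(\beta_X\) and post-composing with \(\beta_Y\) gives \(\orch(\mathbb{C},\mathbb{C},f) \circ \beta_X = \beta_Y \circ \valu(f)\). This yields the functorial equivalence. There is no real obstacle here; the only point needing care is that \(\mathbbm{1}_{f^{-1}(V)}\) is continuous, which is exactly the Scott continuity of \(f\). All the analytic work (bijectivity and continuity of \(\alpha_X\) and \(\beta_X\)) is already done in the preceding subsections.
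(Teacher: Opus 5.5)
Your proof is correct and takes essentially the same approach as the paper. The paper also relies on the preceding proposition that each \(\alpha_X\) is a \(\dcpo\)-isomorphism, and checks naturality by the same computation \(\alpha_Y(\orch(\mathbb{C},\mathbb{C},f)(\xi))(V) = \xi_{\mathbbm{1}_{f^{-1}(V)}} = \valu(f)(\alpha_X(\xi))(V)\); your remarks that \(\mathbbm{1}_{f^{-1}(V)}\) is a valid continuation because \(f^{-1}(V)\) is Scott-open, and that the inverses \(\beta_X\) are automatically natural, only spell out steps the paper leaves implicit.
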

We proceed with proving the compatibility of the natural transformation \( \alpha_X \) with the units of the two monads.
To this end, we need to show that the diagram
\[
  \begin{tikzcd}
    X & {\orch(\mathbb{C}, \mathbb{C}, X)} \\
    & {\valu(X)}
    \arrow["{\eta_{\mathbb{C} X}}", from=1-1, to=1-2]
    \arrow["{\epsilon_X}"', from=1-1, to=2-2]
    \arrow["{\alpha_X}", from=1-2, to=2-2]
  \end{tikzcd}
\]
commutes.
Let \(x \in X\) for some \(X \in \dcpo\), and \( U \in O(X) \). Then it holds that
\begin{align*}
  \left[ \alpha_X \circ \eta_{\mathbb{C} X} (x) \right] (U)
  = \left[ \eta_{\mathbb{C} X} (x) \right]_{\mathbbm{1}_U}
  = \mathbbm{1}_U (x)
  = \epsilon_X(x)(U),
\end{align*}
and thus \(\alpha_X\) is compatible with the units.

To further show that \( \alpha_X \) is compatible with the multiplications of the two monads, we need to prove the commutativity of the following diagram:
\[
  \begin{tikzcd}
    &&& {\orch(\mathbb{C}, \mathbb{C}, \valu(X))} & \\
    {\orch(\mathbb{C}, \mathbb{C}, \orch(\mathbb{C}, \mathbb{C}, X))} && {\orch(\mathbb{C}, \mathbb{C}, X)} & {\valu(X)} & {\valu(\valu(X))} \\
    &&& {\valu(\orch(\mathbb{C},\mathbb{C},X))}
    \arrow["{\alpha_{\valu(X)}}", from=1-4, to=2-5]
    \arrow["{\orch(\mathbb{C},\mathbb{C},\alpha_X)}", from=2-1, to=1-4]
    \arrow["{\mu_x}"', from=2-1, to=2-3]
    \arrow["{\alpha_{\orch(\mathbb{C},\mathbb{C},X)}}"', from=2-1, to=3-4]
    \arrow["{\alpha_X}"', from=2-3, to=2-4]
    \arrow["{\nu_X}", from=2-5, to=2-4]
    \arrow["{\valu(\alpha_X)}"', from=3-4, to=2-5]
  \end{tikzcd}
\]
To this end, let \(\chi \in \orch(\mathbb{C},\mathbb{C},\orch(\mathbb{C},\mathbb{C},X))\) and \(U \in O(X)\). It then holds
\begin{align*}
  \left[ \alpha_X \circ \mu_X (\chi) \right] (U)
  = \left[ \mu_X (\chi) \right]_{\mathbbm{1}_U}
  = \chi_{\lambda \xi . \xi_{\mathbbm{1}_U}},
\end{align*}
but also
\begin{align*}
  \left[ \nu_X \circ \alpha_{\valu(X)} \circ \orch(\mathbb{C}, \mathbb{C}, \alpha_X)(\chi) \right] (U)
  &= \int_{\mu \in \valu(X)} \mu(U) d\left( \alpha_{\valu(X)} \circ \orch(\mathbb{C},\mathbb{C},\alpha_X)(\chi) \right) \\
  &= \left[ \orch(\mathbb{C},\mathbb{C},\alpha_X)(\chi) \right]_{\lambda \mu . \mu(U)}
    = \chi_{\lambda \xi . \xi \left[ \lambda \mu . \mu(U) \right] ( \alpha_X(\xi) )} \\
  &= \chi_{\lambda \xi . \left[ \alpha_X (\xi) \right] (U)}
    = \chi_{\lambda \xi . \xi_{\mathbbm{1}_U}}.
\end{align*}
The equality of these two expressions, together with the naturality of \( \alpha_X \), shows that above diagram commutes, and that the natural isomorphism \(\alpha_X\) is compatible with the multiplication.
All in all, we arrive at the main result of this section.
\begin{theorem}
  The monad \(\orch(\mathbb{C},\mathbb{C},-)\) is isomorphic to the probabilistic powerdomain monad \(\valu\).
\end{theorem}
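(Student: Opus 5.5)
The theorem follows by assembling the results of this appendix. A monad isomorphism consists of a natural isomorphism \(\alpha : \orch(\mathbb{C},\mathbb{C},-) \Rightarrow \valu\) whose components are \(\dcpo\)-isomorphisms and which commutes with the units and the multiplications. I will take \(\alpha_X(\xi) = v^\xi\), as constructed in \cref{sec:mapping-orch-to-val,sec:scalar-valued-valuations}, and proceed in three steps.

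\emph{Componentwise isomorphism.} The preceding proposition shows that each \(\alpha_X\) is a \(\dcpo\)-isomorphism with inverse \(\beta_X\). That argument combines injectivity (via compositionality, reducing an arbitrary continuation \(k\) to a scalar function \(x \mapsto k(x)\circ m\)) with the identity \(\alpha_X\circ\beta_X = \id\) and the Scott-continuity of both maps.

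\emph{Naturality and units.} Naturality is the square computed in \cref{sec:mapp-scal-valu}, where \(\orch(\mathbb{C},\mathbb{C},f)(\xi)_{\mathbbm{1}_V} = \xi_{\mathbbm{1}_{f^{-1}(V)}}\). Compatibility with the units reduces to \(\mathbbm{1}_U(x) = \epsilon_X(x)(U)\).

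\emph{Multiplications.} Here I must show \(\alpha_X\circ\mu_X = \nu_X\circ\alpha_{\valu(X)}\circ\orch(\mathbb{C},\mathbb{C},\alpha_X)\), where \(\nu\) is the multiplication of \(\valu\). Evaluating both sides at an open set \(U\), the left side is \(\chi_{\lambda\xi.\xi_{\mathbbm{1}_U}}\). On the right, \(\nu_X\) is integration of the map \(\mu\mapsto\mu(U)\), which is Scott-continuous on \(\valu(X)\). I then use the identity \(\int f\,dv^\xi = \xi_f\), established earlier for all continuous \(f : X \to [0,1]\), applied to the dcpo \(\valu(X)\) and the orchestra \(\orch(\mathbb{C},\mathbb{C},\alpha_X)(\chi)\). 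This turns the integral into an orchestra component. Unfolding the functor action then gives \(\chi_{\lambda\xi.\alpha_X(\xi)(U)} = \chi_{\lambda\xi.\xi_{\mathbbm{1}_U}}\). Since valuations are determined by their values on open sets, the two sides agree.

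I expect the multiplication step to be the main obstacle, because it is the only place where the integral and the orchestra have to be identified on non-indicator functions. The sketched justification is that \(\int f\,dv^\xi\) and \(\xi_f\) coincide on simple functions by subconvexity and then on all continuous functions by Scott-continuity. I would check two points carefully: that every continuous \(f : X \to [0,1]\) is a directed supremum of simple functions with Scott-open level sets (as in Jones' theory), and that subconvexity yields full linearity on these simple functions, by scaling down so that coefficients sum to at most one. With that identity in hand, everything else is routine unfolding of definitions, so the structure maps of the two monads agree and the theorem follows.
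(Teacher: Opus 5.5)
Your proposal is correct and follows the paper's proof essentially step for step. You take $\alpha_X(\xi)=v^\xi$ as a componentwise $\dcpo$-isomorphism with inverse $\beta_X$, and you get naturality and unit compatibility by unfolding definitions. You obtain the multiplication square by applying $\int f\,dv^\xi=\xi_f$ on $\valu(X)$ to $\lambda\mu.\mu(U)$ and the orchestra $\orch(\mathbb{C},\mathbb{C},\alpha_X)(\chi)$, exactly as the paper does.

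The two points you single out for checking are the details the paper passes over quickly, and both go through: approximating a Scott-continuous $f:X\to[0,1]$ by a directed family of simple functions over Scott-open sets, and extending two-term subconvexity to finite sums of total weight at most one.
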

\end{document}